 \newtheorem{theo}{Theorem}[section]
  \newtheorem{theorem}[theo]{Theorem}
 \newtheorem{prop}[theo]{Proposition}
 \newtheorem{proposition}[theo]{Proposition}
 \newtheorem{cor}[theo]{Corollary}
 \newtheorem{definition}[theo]{Definition}
 \newtheorem{lemma}[theo]{Lemma}
\theoremstyle{plain} 
\theoremstyle{plain} \newtheorem{remark}[theo]{Remark}
\theoremstyle{plain} \newtheorem{problem}[theo]{Problem}
\newcommand{\Z}{\mathbb{Z}}
\newcommand{\ZqZ}{\Z/q\Z}
\newcommand{\eqdef}{\stackrel{\text{def}}{=}}
\newcommand{\Ceil}[1]{\left\lceil #1 \right\rceil}
\newcommand{\Ceiling}[1]{\Ceil{#1}}
\newcommand{\Floor}[1]{\left\lfloor #1 \right\rfloor}
\newcommand{\F}{\mathbb{F}_2}
\newcommand{\Fm}{\mathbb{F}_{2^m}}
\newcommand{\Fq}{\mathbb{F}_q}
\newcommand{\fq}{\Fq}
\newcommand{\Fqm}{\mathbb{F}_{q^m}}
\newcommand{\fqm}{\Fqm}
\newcommand{\CG}[2]{\begin{bmatrix}#1 \\ #2\end{bmatrix}_q}
\newcommand{\cg}[2]{\CG{#2}{#1}}
\newcommand{\CGdeux}[2]{\begin{bmatrix}#1 \\ #2\end{bmatrix}_2}
\newcommand{\cgdeux}[2]{\CGdeux{#2}{#1}}
\newcommand{\OO}[1]{\mathcal{O}\left( #1 \right)}
\newcommand{\C}{{\mathcal{C}}}
\newcommand{\rand}{\stackrel{\$}{\leftarrow}}
\newcommand{\word}[1]{\ensuremath{\boldsymbol{#1}}}
\newcommand{\Av}{\word{A}}
\newcommand{\Bv}{\word{B}}
\newcommand{\Dv}{\word{D}}
\newcommand{\Gv}{\word{G}}
\newcommand{\Hv}{\word{H}}
\newcommand{\Iv}{\word{I}}
\newcommand{\Mv}{\word{M}}
\newcommand{\av}{\word{a}}
\newcommand{\bv}{\word{b}}
\newcommand{\cv}{\word{c}}
\newcommand{\ev}{\word{e}}
\newcommand{\gv}{\word{g}}
\newcommand{\hv}{\word{h}}
\newcommand{\mv}{\word{m}}
\newcommand{\sv}{\word{s}}
\newcommand{\uv}{\word{u}}
\newcommand{\xv}{\word{x}}
\newcommand{\yv}{\word{y}}
\newcommand{\sigmav}{\word{\sigma}}
\DeclareMathOperator{\Rank}{Rank}
\DeclareMathOperator{\Supp}{Supp}
\DeclareMathOperator{\Prob}{Prob}
\newcommand{\norme}[1]{\| #1 \|}
\newcommand{\prob}[1]{{\rm Prob}\left( #1 \right)}
\newcommand{\RSD}{\mathrm{RSD}}
\newcommand{\IRSD}{\mathrm{I-RSD}}
\newcommand{\IRSR}{\mathrm{I-RSR}}
\newcommand{\DUPKS}{\mathrm{I-LRPC}}
\newcommand{\keygen}{\mathrm{KeyGen}}
\newcommand{\encrypt}{\mathrm{Enc}}
\newcommand{\decrypt}{\mathrm{Dec}}
\newcommand{\Adv}{\ensuremath{\mathsf{Adv}}}
\newcommand{\mA}{\ensuremath{\mathcal{A}}}
\newcommand{\mapolicebackref}[1]{\mbox{\textsl{\small #1}}}
\renewcommand*{\backref}[1]{}
\renewcommand*{\backrefalt}[4]{%
\ifcase #1 \mapolicebackref{Uncited in this paper}
    \or \mapolicebackref{#2}
    \else \mapolicebackref{#2}
\fi
}
\title{\bf{LAKE}\\ -a Low rank parity check codes Key Exchange
proposal for the NIST call-}
\author{}
\newcommand{\KEM}{\mathsf{KEM}}
\newcommand{\KeyGen}{\normalfont\textsf{KeyGen}}
\newcommand{\INDCPA}{\mathsf{IND\mbox{-}CPA}}
\newcommand{\Encap}{\normalfont\textsf{Encap}}
\newcommand{\Decap}{\normalfont\textsf{Decap}}
\newcommand{\INDCPArand}{\INDCPA_{\mathsf{rand}}}
\newcommand{\INDCPAreal}{\INDCPA_{\mathsf{real}}}
\newcommand{\advA}{\mathcal{A}} 
\newcommand{\AdvINDCPA}[2]{\Adv^{\normalfont{indcpa}}_{#1}(#2)}
\newcommand{\Exp}{\mathbf{Exp}}
\newcommand{\param}{\ensuremath{\mathsf{param}}}
\newcommand{\E}{\ensuremath{\mathcal{E}}\xspace}
\newcommand{\sk}{\ensuremath{\mathsf{sk}}\xspace}
\newcommand{\pk}{\ensuremath{\mathsf{pk}}\xspace}
\newcommand{\A}{\ensuremath{\mathcal{A}}\xspace}
\newcommand{\ind}{\normalfont\textsf{ind}}
\newcommand{\seck}{{\lambda}}
\newcommand{\sets}{\leftarrow}
\newcommand{\Setup}{\ensuremath{\mathsf{Setup}}}
\newcommand{\comreturn}{\texttt{RETURN\ }}
\newcommand{\GUESS}{\texttt{GUESS}}
\newcommand{\vect}[1]{\langle #1 \rangle}
\newcommand{\EF}{EF}
\title{Low Rank Parity Check Codes: New Decoding Algorithms and Applications to Cryptography}
\author{Nicolas Aragon, Philippe Gaborit, Adrien Hauteville,\\ Olivier Ruatta and Gilles Z\'emor}
\date{}
\begin{document}
\maketitle
\begin{abstract}
We introduce a new family of rank metric codes: Low Rank Parity Check codes (LRPC),
for which we propose an efficient probabilistic decoding algorithm. This family of codes can be seen as the
equivalent of classical LDPC codes for the rank metric. We then use these codes to design cryptosystems 
à la McEliece: more precisely we propose two schemes for key encapsulation mechanism (KEM) and public key
encryption (PKE). Unlike rank metric codes used in
 previous encryption algorithms 
-notably Gabidulin codes - LRPC codes have a very weak algebraic structure. Our cryptosystems can be
seen as an equivalent of the NTRU cryptosystem (and also to the more recent MDPC \cite{MTSB12} cryptosystem) in
a rank metric context. The present paper is an extended version of the article introducing LRPC codes, with important new 
contributions. We have improved the decoder thanks to a new approach
which allows for decoding of errors of higher rank weight, 
namely
up to $\frac{2}{3}(n-k)$ when
the previous decoding algorithm only decodes up to $\frac{n-k}{2}$ errors. Our
codes therefore outperform the classical Gabidulin 
code decoder which deals with weights up to $\frac{n-k}{2}$. This
comes at the expense of probabilistic decoding, but the decoding error probability
can be made arbitrarily small. 
The new approach can also be used to decrease the decoding error
probability of previous schemes, which is especially useful
for cryptography. Finally, we introduce ideal rank codes, which generalize double-circulant
rank codes and allow us to avoid known structural attacks based on folding. 
To conclude, we propose different parameter sizes for our schemes and we obtain a public key of 3337 bits for key exchange and 5893 bits for public key encryption, both for 128 bits of security.
\end{abstract}

\section{Introduction}
In recent years there has been a burst of activity in the field of post-quantum cryptography,
whose appeal has become even more apparent since the recent attacks on
the discrete logarithm problem in small characteristic \cite{BGJT14}. These events
stress that unexpected new attacks on classical
cryptographic systems can appear at any time and that it is important not to have all one's eggs in the same basket.

Among potential candidates for alternative cryptography, lattice-based and code-based cryptography
are strong candidates. Rank-based cryptography relies on the difficulty of decoding error-correcting
codes embedded in a rank metric space (often over extension fields of
fields of prime order), 
when code-based cryptography relies on difficult
problems related to error-correcting codes embedded in Hamming metric spaces (often over small fields
$\Fq$) and when lattice-based cryptography is mainly based on the study of $q$-ary lattices,
which can be seen as codes over rings of type $\ZqZ$ (for large $q$),
embedded in Euclidean metric spaces.

The particular appeal of the rank metric is that the practical
difficulty of the decoding problems
grows very quickly with the size of parameters. In particular, it is possible to reach a complexity
of $2^{80}$ for random instances with size only a few thousand bits, while for lattices or codes,
at least a hundred thousand bits are needed. Of course with codes and lattices it is possible
to decrease the size to a few thousand bits but with additional structure like
quasi-cyclicity \cite{BCGO09}, which comes at the cost of losing
reductions to known difficult problems.
The rank metric was introduced by Delsarte and Gabidulin \cite{G85}, along
with Gabidulin codes which are a rank-metric equivalent of Reed-Solomon
codes.
Since then, rank metric codes have been used for many applications:
coding theory and space-time coding in particular and also for cryptography.
Until now the main tool for rank based cryptography was based
on masking Gabidulin codes \cite{GPT91} in different ways and using the McEliece
(or Niederreiter) setting with these codes.
Most cryptosystems based on this idea were broken by using structural attacks which
exploit the particular structure of Gabidulin codes (\cite{O08}, \cite{FL05}, \cite{BL04}, \cite{L11},\cite{G08}). 
A similar situation exists in the Hamming case for which all cryptosystems
based on Reed-Solomon codes have been broken for a similar reason:
Reed-Solomon codes are so structured that they are difficult to mask
and there is always structural information leaking.

Since the introduction of code-based cryptography by McEliece in 1978,
the different cryptosystems proposed in the Hamming distance setting
were based on masking a special
family of decodable codes, like Goppa, Reed-Muller of Reed-Solomon codes.
The strong structure of these codes usually implies a large
public key size.
Now in 1996 and 1997, two lattice-based cryptosystems were proposed independently: 
the NTRU \cite{HPS98} and the GGH \cite{GGH13} cryptosystems which can be seen
as McEliece-type cryptosystems but for the Euclidean distance. Lattice
based cryptography can be seen as code-based cryptography with $q$-ary
(large alphabet) codes
and Euclidean distance rather than the Hamming distance.
Both the NTRU and GGH cryptosystems are based on the same idea: 
knowing a {\it random} basis of small weight vectors
enables one to obtain an efficient decoding algorithm suitable for cryptography.
Moreover, the NTRU cryptosysem (which can be seen as an optimized
case of the GGH cryptosystem \cite{MR09}) introduced for the first
time the idea of using
double-circulant matrices in order to decrease the size of the public key. This
idea was made possible because of the randomness of the small dual basis.
Finally, we remark that for 15 years the NTRU cryptosystem has not been really attacked
on its double-circulant structure, indeed the best-known attacks rely on general LLL (named after their incentors Lenstra–Lenstra–Lovász)
decoding algorithms for lattices.

In a classical cryptographic Hamming context, 
the second author \cite{G05} introduced in 2005 the idea of using
quasi-cyclic codes to decrease the size of the public key. However,
adding quasi-cyclicity to an already structured family of codes introduces
too much structure and the system was broken \cite{OTD10}. 
This idea was then used with other families of quasi-cyclic (or quasi-dyadic) 
structured codes like Goppa quasi-dyadic \cite{MB09} or quasi-cyclic alternant codes \cite{BCGO09}:
these systems lead to much smaller keys, but eventually they were attacked in \cite{FOPT10}
and even though the idea remains valid, the cryptanalysis of \cite{FOPT10}
showed that  the idea of quasi-cyclic or quasi-dyadic structured codes
could not lead to secure public keys of a few thousand bits, but rather to
secure keys of a few tens of thousand bits.

More recently new proposals were made in the spirit of the original
NTRU schemes with Hamming distance, first relying on quasi-cyclic LDPC codes \cite{BCGM07,BBC08,BBC12},
then with MDPC codes in \cite{MTSB12}. The latter family of codes
enables one to obtain
the same type of feature as the NTRU cryptosystem: a compact key
and a security based on decoding with a random dual matrix with small weights.

\medskip

{\bf Contributions of the paper.}
The present paper introduces Low Rank Parity Check codes and is an extended version of \cite{GMRZ13}, 
with important new contributions. 
We propose an improved decoder based on a new approach which allows
for decoding of errors of higher rank weight, namely up to 
$\frac{2}{3}(n-k)$ when
the previous decoding algorithm only performed at rank weight
$\frac{n-k}{2}$. Our codes outperform the classical Gabidulin
code decoder which deals with weights up to $\frac{n-k}{2}$. 
This comes at the expense of probabilistic decoding, but the decoding
error probability can be made arbitrarily small.
For most of our proposed decoding algorithms we give a precise
analysis of the decoding failure probability: for some situations the algorithm is iterative, in which 
case we only give an upper bound on the decoding error probability
together with
simulations which show that these bounds are 
attained in practice.
The new approach can also be used to decrease the decoding error probability, which is especially useful
for cryptography. Finally, we introduce double-circulant {\em ideal} rank codes, which generalize double-circulant
rank codes. We propose different parameter sizes, of the order of 1
kilobit for the key size and cipher text.

\medskip
{\bf Organization of the paper.}
The paper is organized as follows: Section 2 recalls basic facts about
the rank metric and the corresponding difficult problems, Section 3
proves technical results that are used for decoding, 
Section 4 recalls
the definition of LRPC codes and their basic decoding, Section 5 introduces a new general approach for
improved decoding of LRPC codes, finally Section 6 is concerned with cryptography, we give our key exchange
and public key exchange schemes, with a security reduction to general problems and parameters.

\section{Background on Rank Metric Codes}
\label{sec:RankMetric}
\subsection{General definitions}
\textbf{Notation:}

Let $\Fq$ denote the finite field of $q$ elements where $q$ is the power of a prime and let $\Fqm$ denote the field of $q^m$ elements seen as the extension of degree $m$ of $\Fq$.

$\Fqm$ is also an $\Fq$ vector space of dimension $m$, we denote by capital letters the $\Fq$-subspaces of $\Fqm$ and by lower-case letters the elements of $\Fqm$.

Let $X \subset \Fqm$. We denote by $\vect{X}$ the $\Fq$-subspace generated by the elements of $X$:
\[ \vect{X} = \sum_{x\in X} x\Fq.\]
If $X = \{x_1,\dots, x_n\}$, we simply use the notation $\vect{x_1,\dots, x_n}$.

Vectors are denoted by bold lower-case letters and matrices by bold capital letters (eg $\xv = (x_1,\dots, x_n) \in \Fqm^n$ and $\Mv = (m_{ij})_{\substack{1\leqslant i \leqslant k\\1\leqslant j \leqslant n}} \in \Fqm^{k\times n}$).

If $S$ is a finite set, we denote by $x\rand S$ the fact that $x$ is chosen uniformly at random amongst $S$.

\begin{definition}[Rank metric over $\Fqm^n$] \label{def:RankMetric}
Let $\xv=(x_1,\dots,x_n) \in \Fqm^n$ and let $(b_1,\dots ,b_m) \in \Fqm^m$ be a basis of $\Fqm$ over $\Fq$.
Each coordinate $x_j$ is associated to a vector of $\Fq^m$ in this basis: $x_j = \sum_{i=1}^m m_{ij} b_i$. The $m \times n$ matrix
associated to $\xv$ is given by $\Mv(\xv)=(m_{ij})_{\substack{1
    \leqslant i \leqslant m \\ 1 \leqslant j \leqslant n}}$.

The rank weight $\norme{\xv}$ of $\xv$ is defined as 
\[
\norme{\xv} \eqdef \Rank \Mv(\xv).
\]
This definition does not depend on the choice of the basis.
The associated distance $d(\xv,\yv)$ between elements $\xv$ and $\yv$ in $\Fqm^n$ is defined by 
$d(\xv,\yv)=\norme{\xv-\yv}$.
\end{definition}

\begin{definition}[$\Fqm$-linear code]\label{def:FqmLinearCode}
An $\Fqm$-linear code $\C$ of dimension $k$ and length $n$ is a
subspace of dimension $k$ of $\Fqm^n$ seen as a rank metric space. The
notation $[n,k]_{q^m}$ is used to denote its parameters.

The code $\C$ can be represented by two equivalent ways:
\begin{itemize}
\item by a generator matrix $\Gv \in \Fqm^{k\times n}$. Each rows of $\Gv$ is an element of a basis of $\C$,
\[
\C = \{\xv\Gv, \xv \in \Fqm^k \}.
\]
\item by a parity-check matrix $\Hv \in \Fqm^{(n-k)\times n}$. Each row of $\Hv$ determines a parity-check equation verified by the elements of $\C$:
\[
\C = \{\xv \in \Fqm^n : \Hv\xv^T = \boldsymbol{0} \}.
\]
\end{itemize}
We say that $\Gv$ (respectively $\Hv$) is under systematic form if and only if it is of the form $(\Iv_k|\Av)$ (respectively $(\Bv|\Iv_{n-k})$).
\end{definition}

\begin{definition}[Support of a word]\label{def:support}
Let $\xv = (x_1,\dots, x_n) \in \Fqm^n$. The support $E$ of $\xv$, denoted $\Supp(\xv)$, is the $\Fq$-subspace of $\Fqm$ generated by the coordinates of $\xv$:
\[
E = \langle x_1, \dots, x_n\rangle_{\Fq}
\]
and we have $\dim E = \norme{\xv}$.
\end{definition}

The number of supports of dimension $w$ of $\Fqm$ is given by the Gaussian coefficient 
\[
\cg{w}{m} = \prod_{i=0}^{w-1} \frac{q^m-q^i}{q^w-q^i}.
\]

\subsection{Double circulant and ideal codes}
To describe an $[n,k]_{q^m}$ linear code, we can give a systematic generator matrix or a systematic parity-check matrix. In both cases, the number of bits needed to represent such a matrix is $k(n-k)m\Ceil{\log_2 q}$.  To reduce the size of a representation of a code, we introduce double circulant codes.

First we need to define circulant matrices.
\begin{definition}[Circulant matrix]\label{def:CirculantMatrix}
A square matrix $\Mv$ of size $n\times n$ is  said to be circulant if it is of the form
\[
\Mv = \begin{pmatrix}
m_{0} & m_{1} & \dots & m_{n-1} \\
m_{n-1} & m_{0} & \ddots & m_{n-2} \\
\vdots & \ddots & \ddots & \vdots \\
m_{1} & m_{2} & \dots & m_{0}
\end{pmatrix}.
\]
We denote $\mathcal{M}_n(\Fqm)$ the set of circulant matrices of size $n \times n$ over $\Fqm$.
\end{definition}

\paragraph{Relation between cyclic matrix form and polynomial form} The following proposition states an important property of circulant matrices.
\begin{proposition}\label{prop:MatricesCircPolynomialRing}
$\mathcal{M}_n(\Fqm)$ is an $\Fqm$-algebra isomorphic to $\Fqm[X]/(X^n-1)$, that-is-to-say the set of polynomials with coefficients in $\Fqm$ modulo $X^n-1$. The canonical isomorphism is given by
\begin{IEEEeqnarray}{rCCCc}
\varphi &:	&  \Fqm[X]/(X^n-1) & \longrightarrow &  \mathcal{M}_n(\Fqm) \nonumber \\
			&& \sum_{i=0}^{n-1} m_iX^i & \longmapsto &  \begin{pmatrix}
m_{0} & m_{1} & \dots & m_{n-1} \\
m_{n-1} & m_{0} & \ddots & m_{n-2} \\
\vdots & \ddots & \ddots & \vdots \\
m_{1} & m_{2} & \dots & m_{0} \nonumber
\end{pmatrix}
\end{IEEEeqnarray} 
\end{proposition}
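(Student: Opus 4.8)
The plan is to split the statement into two parts: first that $\varphi$ is an isomorphism of $\Fqm$-vector spaces, and then that it respects products and the unit, so that it is in fact an isomorphism of $\Fqm$-algebras. The linear part will be immediate. The map $\varphi$ is visibly $\Fqm$-linear because each entry of the matrix $\varphi\lp\sum_i m_i X^i\rp$ is one of the coefficients $m_0,\dots,m_{n-1}$. It is injective, since the first row of the image is exactly $(m_0,\dots,m_{n-1})$, so any polynomial in the kernel has all coefficients zero; and it is surjective by the very definition of a circulant matrix, each of which is the image of the polynomial read off from its first row. As both $\Fqm[X]/(X^n-1)$ and $\mathcal{M}_n(\Fqm)$ have dimension $n$ over $\Fqm$, this already gives a linear isomorphism.

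The substance of the argument will be multiplicativity, and I would handle it through the cyclic shift matrix $\Sv \eqdef \varphi(X)$, whose entries satisfy $\Sv_{a,b} = 1$ if and only if $b \equiv a+1 \pmod n$. The key computation is to determine its powers: a direct induction on the matrix product should yield, for every $i \geqslant 0$,
\[
(\Sv^{i})_{a,b} = 1 \iff b \equiv a+i \pmod n,
\]
which identifies $\Sv^{i} = \varphi(X^{i})$ for $0 \leqslant i \leqslant n-1$ and, for $i = n$, gives $\Sv^{n} = \Iv_n = \varphi(1)$. This step is where the cyclic structure is really used, and I expect it to be the main (though still routine) obstacle; equivalently one could bypass $\Sv$ and verify directly that the product of two circulant matrices is circulant with first row the cyclic convolution of the two first rows, but the shift-matrix route keeps the bookkeeping minimal.

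Granting this, multiplicativity will follow formally. For monomials we get $\varphi(X^{i})\varphi(X^{j}) = \Sv^{i}\Sv^{j} = \Sv^{(i+j)\bmod n} = \varphi\lp X^{i}X^{j}\rp$, using $\Sv^{n} = \Iv_n$ to reduce the exponent, which matches the relation $X^{i}X^{j} = X^{(i+j)\bmod n}$ holding in $\Fqm[X]/(X^n-1)$. Writing arbitrary classes as $f = \sum_i a_i X^i$ and $g = \sum_j b_j X^j$ and expanding by bilinearity of both products, I would then conclude $\varphi(fg) = \varphi(f)\varphi(g)$ for all $f,g$. Combined with $\varphi(1) = \Iv_n$ and the linear isomorphism established above, this shows that $\varphi$ is an isomorphism of $\Fqm$-algebras, as claimed.
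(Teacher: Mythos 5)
Your proof is correct and complete. The paper itself states this proposition without any proof, treating it as a standard fact about circulant matrices, so there is nothing to compare against; your verification --- the linear bijection read off from the first row (consistent with the paper's convention that the $(a,b)$ entry of $\varphi\bigl(\sum_i m_iX^i\bigr)$ is $m_{(b-a) \bmod n}$), together with multiplicativity reduced via bilinearity to the cyclic shift matrix $\Sv=\varphi(X)$ with $\Sv^{i}=\varphi(X^{i})$ and $\Sv^{n}=\Iv_n$ --- is the standard argument and fills the omission completely, including the implicit claim that $\mathcal{M}_n(\Fqm)$ is closed under products.
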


In the following, in order to simplify notation, we will identify the polynomial $G(X) = \sum_{i=0}^{n-1} g_iX^i \in \Fqm[X]$ with the vector $\gv = (g_0,\dots, g_{n-1})\in \Fqm^n$. We will denote $\uv\gv \mod P$ the vector of the coefficients of the polynomial $\left(\sum_{j=0}^{n-1} u_jX^j\right)\left(\sum_{i=0}^{n-1} g_iX^i\right) \mod P$ or simply $\uv\gv$ if there is no ambiguity in the choice of the polynomial $P$.

\begin{definition}[Double circulant codes]\label{def:DoubleCirculantCode}
A $[2n,n]_{q^m}$ linear code $\C$ is said to be double circulant if it
has a generator matrix $\Gv$ of the form $\Gv = (\Av|\Bv)$ where $\Av$
and $\Bv$ are two circulant matrices of size $n$, and the matrix $\Av$ is invertible.
\end{definition}

With the previous notation, we have $\C = \{(\xv\av,\xv\bv),\xv \in
\Fqm^n\}$. Since $\av$ is invertible in $\Fqm[X]/(X^n-1)$, then $\C =
\{(\xv,\xv\gv),\xv \in \Fqm^n\}$ where $\gv = \av^{-1}\bv$. In this
case we say that $\C$ is generated by $\gv\ (\!\!\!\mod X^n-1)$. Thus
we only need $nm\Ceiling{\log_2 q}$ bits to describe a $[2n,n]_{q^m}$
double circulant code.

\medskip

We can generalize double circulant codes by choosing another polynomial $P$ to define the quotient-ring $\Fqm[X]/(P)$. This family of codes was defined in \cite{KLL14}.
\begin{definition}[Ideal codes]\label{def:IdealCodes}
Let $P(X) \in \Fq[X]$ be a polynomial of degree $n$ and $\gv_1,\gv_2
\in \Fqm^n$. Let $G_1(X) = \sum_{i=0}^{n-1} g_{1i}X^i$ and $G_2(X) =
\sum_{j=0}^{n-1} g_{2j}X^j$ be the polynomials associated respectively to $\gv_1$ and $\gv_2$.

We call the $[2n,n]_{q^m}$ {\em ideal code $\C$ of generator $(\gv_1,\gv_2)$} the code with generator matrix
\[
\Gv = \begin{pmatrix}
G_1(X) \mod P			& \vline &G_2(X) \mod P \\
XG_1(X) \mod P 			& \vline &XG_2(X) \mod P \\
	\vdots				& \vline &\vdots \\
X^{n-1}G_1(X) \mod P 	& \vline &X^{n-1}G_2(X) \mod P 
\end{pmatrix}.
\]
More concisely, we have 
$\C = \{ (\xv\gv_1 \mod P, \xv\gv_2 \mod P), \xv\in \Fqm^n \}$.
We will often omit mentioning the polynomial $P$ if there is no ambiguity.

We usually require $\gv_1$ to be invertible, in which case the code
admits the systematic form, $\C = \{(\xv,\xv\gv), \xv\in \Fqm^n \}$ with $\gv = \gv_1^{-1}\gv_2 \mod P$. 
\end{definition}

We need to be careful when we use these notations in the case of parity-check matrices. Indeed, if we have a syndrome $\sigmav = \ev_1\hv_1+\ev_2\hv_2 \mod P$, this equality is equivalent in terms of product matrix-vectors to $(\Hv_1|\Hv_2)(\ev_1|\ev_2)^T = \sigmav^T$ where
\[
\Hv_1 = \begin{pmatrix}
\hv_1 \mod P\\
X\hv_1 \mod P\\
\vdots \\
X^{n-1}\hv_1 \mod P
\end{pmatrix}^T
\text{ and }
\Hv_2 = \begin{pmatrix}
\hv_2 \mod P\\
X\hv_2 \mod P\\
\vdots \\
X^{n-1}\hv_2 \mod P
\end{pmatrix}^T.
\]

Thus, we say that $(\hv_1,\hv_2)$ and $P$ define a parity-check matrix of a code $\C$ if $(\Hv_1^T|\Hv_2^T)$ is a parity-check matrix of $\C$.

\subsection{Difficult problems in rank metric}
In this section, we introduce some difficult problems on which our cryptosystems are based.

\begin{problem}[Rank Syndrome Decoding]\label{prob:RSD}
Given a full-rank matrix $\Hv \in \Fqm^{(n-k)\times n}$, a syndrome $\sigmav$ and a weight $w$, it is hard to find a vector $\xv \in \Fqm^n$ of weight lower than $w$ such that $\Hv\xv^T = \sigmav^T$.
\end{problem}

In \cite{GZ14} it is proven that the Syndrome Decoding problem in the Hamming metric, which is a well-known NP-Hard problem, is probabilistically reduced to the $\RSD$ problem. Moreover, the $\RSD$ problem can be seen as a structured version of the NP-Hard  MinRank problem, indeed the MinRank problem is equivalent to the $\RSD$ problem for $\Fq$-linear codes. 

This problem has an equivalent dual version. Let $\Hv$ be a parity-check matrix of a code $C$ and $\Gv$ be a generator matrix. The $\RSD$ problem is equivalent to finding $\mv \in \Fqm^k$ and $\xv \in \Fqm^n$ such that $\mv\Gv + \xv = \yv$ with $\yv$ of weight $r$ and $\yv$ a pre-image of $\sv$ by $\Hv$. We can now introduce the decisional version of this problem.

\begin{problem}[Decisional Rank Syndrome Decoding]\label{prob:DRSD}
Given a full-rank matrix $\Gv \in \Fqm^{k\times n}$, a message $\mv \in \Fqm^n$ and $e \in \Fqm^n$ of weight $r$, it is hard to distinguish the pair $(\Gv, \mv\Gv + \xv)$ from $(\Gv, \yv)$ with $\yv \rand \Fqm^n$.
\end{problem}

We introduce the $\IRSD$ (Ideal-Rank Syndrome Decoding) problem for ideal codes defined in Definition \ref{def:IdealCodes} as well as an associated problem, $\IRSR$ (Ideal-Rank Support Recovery), and then show that these two problems are equivalent.

\begin{problem}[Ideal-Rank Syndrome Decoding]\label{prob:I-RSD}
Given a vector $\hv \in \Fqm^n$, a polynomial $P\in \Fq[X]$ of degree $n$, a syndrome $\sigmav$ and a weight $w$, it is hard to sample a vector $\xv = (\xv_1,\xv_2) \in \Fqm^{2n}$ of weight lower than $w$ such that $\xv_1 + \xv_2\hv = \sigmav \mod P$.
\end{problem}

Since $\hv$ and $P$ define a systematic parity-check matrix of a
$[2n,n]_{q^m}$ ideal code, the $\IRSD$ problem is a particular case of
the $\RSD$ problem. Although this problem is theoretically easier than
$\RSD$ problem, in practice the best algorithms for solving both these problems are the same.

\begin{problem}[Ideal-Rank Support Recovery]\label{prob:I-RSR}
Given a vector $\hv \in \Fqm^n$, a polynomial $P\in \Fq[X]$ of degree $n$, a syndrome $\sigmav$ and a weight $w$, it is hard to
recover the support $E$ of dimension lower than $w$ such that $\ev_1 + \ev_2\hv = \sigmav \mod P$ where the vectors $\ev_1$ and $\ev_2$ were sampled from $E$.
\end{problem}

\paragraph{Equivalence between the $\IRSD$ and $\IRSR$ problems}

The $\IRSR$ problem is trivially reduced to the $\IRSD$ problem. Indeed to recover the support $E$ of an instance of the $\IRSD$ problem from a solution $\xv$ of the $\IRSD$ problem, we just have to compute the support of $\xv$.

Conversely, the $\IRSD$ problem can also be reduced to the $\IRSR$ problem. Let us suppose we know the support $E$ of a solution of the $\IRSR$ problem for a weight $w$. We want to find $\xv = (\xv_1,\xv_2)$ of weight lower than $w$ such that $\xv_1 + \xv_2\hv = \sigmav \mod P$.

This equation is equivalent to
\begin{align} \begin{pmatrix}
&&\vline && \\
& \Iv_n &\vline &\Hv& \\
&& \vline && 
\end{pmatrix} ( x_{1,0} \dots x_{1,n-1}, x_{2,0} \dots x_{2,n-1})^T = \sigmav^T \label{eqn1}
\end{align}
where 
$ \Hv = \begin{pmatrix}
\hv \\
X\hv  \mod P\\
\vdots \\
X^{n-1}\hv \mod P
\end{pmatrix}^T
$
 and $\xv_1 = ( x_{10} \dots x_{1,n-1}), \xv_2 = (x_{20} \dots x_{2,n-1})$.

Let $(E_1,\dots, E_w)$ be a basis of $E$. We can express the coordinates of $\xv_1$ and $\xv_2$ in this basis:
\[ \forall i \in \{1,2\}, 0 \leqslant j \leqslant n-1, x_{ij} = \sum_{k=1}^w \lambda_{ijk}E_k, \text{ with }\lambda_{ijk} \in \Fq.
\] 
Then we rewrite the equations (\ref{eqn1}) in the new unknowns $\lambda_{ijk}$. We obtain a system of $2nw$ unknowns over $\Fq$ and $n$ equations over $\Fqm$, so $nm$ equations over $\Fq$.

Since $\ev_1 + \ev_2\hv = \sigmav \mod P$, the system has at least one
solution and by construction all the solutions have their support
included in $E$ of dimension $w$, so we can find a solution to the $\IRSD$ problem by solving this system.\\

\section{Some results on the product of two subspaces}
\label{sec:ProductSubspaces}
 
Before introducing the decoding algorithm of LRPC codes, we need to introduce some results on the product 
of two subspaces.

\begin{definition}\label{def:ProductSubspace}
Let $A$ and $B$ be two $\Fq$-subspaces of $\Fqm$:
we call the {\em product space} of $A$ and $B$, and denote it by $AB$, the {\em $\Fq$-linear
  span} of the set of products $\{ab, a \in A, b \in B\}$.
\end{definition}

If $A$ and $B$ have dimensions $\alpha$
and $\beta$, and are generated respectively
by $\{a_1,\cdots,a_{\alpha}\}$ and $\{b_1,\cdots,b_{\beta}\}$,
then the product space $AB$ is obviously
generated by the set $\{a_ib_j, 1 \leqslant i \leqslant \alpha,
1\leqslant j \leqslant \beta \}$ and its dimension is therefore bounded above
by $\alpha \beta$.

A question of interest that concerns us is the probability that the dimension is not
maximal when $\alpha$ and $\beta$ are relatively small.
Let $A$ and $B$ be random $\Fq$-subspaces of $\fqm$ of dimensions $\alpha$
and $\beta$ respectively. We suppose $\alpha\beta <m$ and we
investigate the typical dimension of the subspace $AB$.

We rely on the following lemma:

\begin{lemma}\label{lem:DimensionProductSubspace}
  Let $A'$ and $B$ be two subspaces of $\fqm$ of dimensions $\alpha'$
  and $\beta$ such that $\dim A'B =\alpha'\beta$.
  Let $A = A'+\langle a\rangle$ where $a$ is a uniformly chosen
  random element of $\fqm$. Then
  \[\prob{\dim (AB) < \alpha'\beta +\beta}\leq\frac{q^{\alpha'\beta+\beta}}{q^m}.\]
\end{lemma}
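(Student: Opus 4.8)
The plan is to decompose the product space and reduce the dimension-drop event to a condition about a nontrivial intersection. Writing $A = A' + \langle a\rangle$, we immediately obtain $AB = A'B + aB$, where $aB = \{ab : b\in B\}$ is the image of $B$ under multiplication by $a$ in the field $\fqm$. For $a \neq 0$ this multiplication is a bijection, so $\dim(aB) = \beta$; combined with the hypothesis $\dim A'B = \alpha'\beta$, the Grassmann dimension formula gives
\[
\dim(AB) = \alpha'\beta + \beta - \dim\left(A'B \cap aB\right).
\]
Hence $\dim(AB) < \alpha'\beta + \beta$ holds exactly when $A'B \cap aB \neq \{0\}$. The degenerate case $a=0$ (and more generally $a\in A'$) also forces the dimension to drop, but I would note that it is automatically absorbed into the event considered below, since then $0 = ab \in A'B$ for every $b$.

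Next I would translate the intersection condition into a statement amenable to a union bound. A nonzero vector in $A'B \cap aB$ necessarily has the form $ab$ with $b \in B \setminus \{0\}$, so by a union bound over these elements,
\[
\prob{\dim(AB) < \alpha'\beta + \beta} \leq \sum_{b \in B \setminus \{0\}} \prob{ab \in A'B}.
\]
For each fixed nonzero $b$, the map $a \mapsto ab$ is a bijection of $\fqm$, so $ab$ is uniformly distributed over $\fqm$ when $a$ is; therefore $\prob{ab \in A'B} = |A'B|/q^m = q^{\alpha'\beta}/q^m$.

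Summing over the $q^\beta - 1$ nonzero elements of $B$ then yields
\[
\prob{\dim(AB) < \alpha'\beta + \beta} \leq (q^\beta - 1)\frac{q^{\alpha'\beta}}{q^m} \leq \frac{q^{\alpha'\beta + \beta}}{q^m},
\]
which is the claimed bound. The only real subtlety is the first reduction: correctly identifying that a dimension drop is equivalent to a nonzero intersection $A'B \cap aB$, and recognizing that every vector in this intersection is $ab$ for some nonzero $b \in B$. This is precisely what lets the union bound range over $B$ rather than over some larger set, and the crucial fact used throughout is that $\fqm$ is a field, so multiplication by any nonzero element is a bijection preserving dimensions.
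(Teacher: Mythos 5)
Your proof is correct and follows essentially the same route as the paper's: reduce the dimension-drop event to $A'B \cap aB \neq \{0\}$, union-bound over the nonzero $b \in B$, and use that $ab$ is uniformly distributed in $\fqm$ for each fixed nonzero $b$ when $a$ is uniform. You are in fact slightly more explicit than the paper, which leaves the Grassmann step and the degenerate case $a \in A'$ (including $a=0$) implicit, and whose phrase ``for any fixed $a\neq 0$'' should really read ``for any fixed $b \neq 0$'' --- your version states the conditioning correctly.
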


\begin{proof}
  We have $\dim (AB) < \alpha'\beta +\beta$ if and only
  if the subspace $aB$ has a non-zero intersection with $A'B$.
Now,
\begin{align*}
  \prob{\dim (A'B \cap aB) \neq\{0\}} &\leq \sum_{b\in B,
    b\neq 0} \prob{ab\in A'B}\\
  &\leq (|B|-1)\frac{q^{\alpha'\beta}}{q^m}
\end{align*}
since for any fixed $a\neq 0$, we have that $ab$ is uniformly
distributed in $\fqm$. Writing $|B|-1\leq |B|=q^\beta$ we have the result.
\end{proof}

\begin{proposition}\label{prop:DimensionProductSubspace}
  Let $B$ be a fixed subspace and suppose we construct a random subspace $A$ by choosing uniformly at random 
$\alpha$ independent vectors of
$\fqm$ and letting $A$ be the subspace generated by these $\alpha$
random vectors. We have that
$\dim AB =\alpha\beta$ 
with probability at least
$1-\alpha\frac{q^{\alpha\beta}}{q^m}.$
\end{proposition}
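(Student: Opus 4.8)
The plan is to prove this by induction on $\alpha$, revealing the $\alpha$ random generators one at a time and invoking Lemma \ref{lem:DimensionProductSubspace} at each step. First I would fix an enumeration $a_1,\dots,a_\alpha$ of the random vectors and set $A_i = \langle a_1,\dots,a_i\rangle$, so that $A_0 = \{0\}$ and $A_\alpha = A$. The key structural observation is that passing from $A_{i-1}$ to $A_i$ can only enlarge the product space by at most $\beta$ dimensions, since $A_iB = A_{i-1}B + a_iB$ and $\dim a_iB \leq \beta$. Consequently $\dim(A_\alpha B) = \alpha\beta$ holds if and only if every one of the $\alpha$ steps is \emph{full}, i.e.\ $\dim(A_i B) = \dim(A_{i-1}B) + \beta$ for all $i$; in particular this good event automatically forces $\dim A = \alpha$ (otherwise $\dim(A_\alpha B) \leq (\alpha-1)\beta < \alpha\beta$), so the generators are independent as required.

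Next I would bound the probability of failure by a union bound over the $\alpha$ steps. Writing $\mathcal{E}_i$ for the event $\dim(A_i B) = i\beta$, I note that on the complement $\neg\mathcal{E}_\alpha$ there is a first index $i$ at which the dimension falls short; at that index all earlier steps succeeded, so $\dim(A_{i-1}B) = (i-1)\beta$ while $\dim(A_iB) < (i-1)\beta + \beta$. This yields the inclusion $\neg\mathcal{E}_\alpha \subseteq \bigcup_{i=1}^{\alpha}\bigl(\mathcal{E}_{i-1}\cap\{\dim(A_iB) < i\beta\}\bigr)$.

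For each $i$ I would then condition on a fixed realisation of $A_{i-1}$ lying in $\mathcal{E}_{i-1}$, i.e.\ with $\dim(A_{i-1}B) = (i-1)\beta$. This is exactly the hypothesis of Lemma \ref{lem:DimensionProductSubspace} with $A' = A_{i-1}$, $\alpha' = i-1$, and $a = a_i$ a fresh uniform vector, so the lemma gives
\[
\prob{\dim(A_iB) < i\beta \mid A_{i-1}} \leq \frac{q^{(i-1)\beta+\beta}}{q^m} = \frac{q^{i\beta}}{q^m}.
\]
Averaging over the conditioning and summing the union bound gives $\prob{\neg\mathcal{E}_\alpha} \leq \sum_{i=1}^{\alpha} q^{i\beta}/q^m$, and since each of the $\alpha$ terms is at most the largest one $q^{\alpha\beta}/q^m$, this is bounded by $\alpha\, q^{\alpha\beta}/q^m$, which is the claim.

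The step I expect to require the most care is the conditioning inside the induction: Lemma \ref{lem:DimensionProductSubspace} applies only when the base space $A'$ already achieves the maximal product dimension $\alpha'\beta$, so it cannot be invoked unconditionally at step $i$ but only on the event $\mathcal{E}_{i-1}$ that every previous step was full. The first-failure decomposition above is precisely what lets me restrict to this event cleanly. A minor secondary point is the discrepancy between drawing the $a_i$ uniformly from $\fqm$ (as in the lemma) and drawing genuinely independent generators; this is harmless, because the good event forces independence anyway and conditioning a draw to avoid the small subspace $A_{i-1}$ perturbs the relevant probabilities by a negligible factor that the slack in the bound absorbs.
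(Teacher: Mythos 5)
Your proof is correct and takes essentially the same approach as the paper, whose entire argument is the one-line instruction to apply Lemma~\ref{lem:DimensionProductSubspace} $\alpha$ times while growing $A$ one generator at a time; your first-failure decomposition, conditioning on $\mathcal{E}_{i-1}$, and the union bound $\sum_{i=1}^{\alpha} q^{i\beta}/q^m \leq \alpha\, q^{\alpha\beta}/q^m$ simply make explicit the details the paper leaves implicit.
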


\begin{proof}
  Apply the Lemma $\alpha$ times, starting with a random subspace
  $A'\subset A$ of dimension~$1$, and adding a new element to $A'$
  until we obtain $A$.
\end{proof}

In practice, our tests shows that the probability that $\dim AB =rd$ is slightly larger than the probability that $rd$ elements of $\Fqm$ chosen uniformly at random generate a subspace of dimension $rd$. This difference becomes rapidly negligible, even for small value such as $\alpha = 3, \beta = 4, m = 20$.

  Let $B$ be a fixed subspace of $\fqm$ containing $1$ and let $B^2$
  be the subspace generated by all products of two, possibly equal, elements of
  $B$. Let $\beta_2=\dim B^2$. Let $A$ be a random
  subspace of $\fqm$ of dimension $\alpha$. By Proposition \ref{prop:DimensionProductSubspace} we have that
  $\dim (AB^2) =\alpha\beta_2$
  with probability at least
  $1-\alpha\frac{q^{\alpha\beta_2}}{q^m}.$\\
 {\bf Remark:} we have $\beta_2\leqslant \beta(\beta +1)/2$.

  \begin{lemma}\label{lem:xB}
    Suppose $\dim (AB^2) =\alpha\beta_2$. Let $e\in
    AB$ with $e\not\in A$. Suppose $eB\subset 
    AB$. Then there exists $x\in B$, $x\not\in\fq$, such that $xB\subset B$.
  \end{lemma}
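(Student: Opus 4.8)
The plan is to turn the dimension hypothesis $\dim(AB^2)=\alpha\beta_2$ into a \emph{uniqueness of representation} statement and then read off $x$ directly from the decomposition of $e$. Fix a basis $a_1,\dots,a_\alpha$ of $A$ and a basis $c_1,\dots,c_{\beta_2}$ of $B^2$. Since $1\in B$ we have $B\subseteq B^2$, hence $A\subseteq AB\subseteq AB^2$. The equality $\dim(AB^2)=\alpha\beta_2$ says precisely that the $\alpha\beta_2$ products $a_ic_k$ are $\fq$-linearly independent, i.e. the multiplication map $A\otimes_{\fq}B^2\to\fqm$ is injective. Consequently every element of $AB^2$ has a \emph{unique} expression $\sum_{i=1}^{\alpha}a_iz_i$ with $z_i\in B^2$; such an element lies in $AB$ exactly when the $z_i$ can be chosen in $B$, and in $A$ exactly when all $z_i$ lie in $\fq$.

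First I would decompose $e$. As $e\in AB$, I write $e=\sum_{i=1}^{\alpha}a_i\tilde b_i$ with $\tilde b_i\in B$, an expansion that is unique by the previous paragraph. Because $e\notin A$, the characterization above forces at least one coefficient to be non-scalar, so I fix an index $i_0$ with $\tilde b_{i_0}\notin\fq$ and set $x:=\tilde b_{i_0}\in B\setminus\fq$. It then remains to show $xB\subseteq B$.

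This last point is where the hypothesis $eB\subseteq AB$ enters, and it is the real crux. For an arbitrary $b\in B$ I multiply the decomposition to get $eb=\sum_i a_i(\tilde b_ib)$, where each product $\tilde b_ib$ lies a priori only in $B^2$. On the other hand $eb\in AB$ by hypothesis, so $eb=\sum_i a_ic_i'$ for some $c_i'\in B$. These are two expansions of the same element of $AB^2$ relative to the $a_i$ and $B^2$, so uniqueness yields $\tilde b_ib=c_i'\in B$ for every $i$ and every $b\in B$. In particular $xB\subseteq B$, which is the claim. The only delicate step is exactly this: without the rigidity of the unique expansion inside the ``free'' product space $AB^2$, the products $\tilde b_ib$ could escape $B$ into the strictly larger space $B^2$; it is the assumption $eB\subseteq AB$ that pins them back inside $B$. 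Everything else is routine bookkeeping with the fixed bases.
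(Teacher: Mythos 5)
Your proof is correct and takes essentially the same route as the paper's: you expand $e=\sum_i a_i\tilde b_i$ over a basis of $A$ with coefficients in $B$ and use $\dim(AB^2)=\alpha\beta_2$ to identify the two expansions of $eb$ componentwise, which is exactly the paper's argument (its $e=\sum_i\lambda_i a_i b_i$ with $\lambda_i\in\fq$, $b_i\in B$ is your decomposition after absorbing scalars). Your explicit uniqueness-of-representation step, including the remark that $1\in B$ gives $B\subseteq B^2$ (a standing hypothesis stated just before the lemma), simply spells out what the paper invokes as ``the maximality of the dimension of $AB^2$''; there is no gap.
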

  
  \begin{proof}
    Let $(a_i)$ be a basis of $A$. We have
   $$e=\sum_i\lambda_ia_ib_i$$
   with $\lambda_i \in \fq$ for all $i$ and $b_j\notin\fq$ and $\lambda_j\neq
   0$ for some $j$, otherwise $e\in A$ contrary to our assumption. Let
   $b$ be any element of $B$. By our hypothesis we have $eb\in 
    AB$, meaning there exist $\mu_j \in \Fq$ such that
  $$\sum_i\lambda_ia_ib_ib = \sum_j\mu_j a_j b_j'$$
  with $b_i'\in B$. Now the maximality of the dimension of $AB^2$ implies that
  $$\lambda_ja_jb_jb = \mu_ja_jb_j'$$
  from which we deduce $b_jb\in B$. Since this holds for arbitrary
  $b\in B$, we have  $b_jB\subset B$.
  \end{proof}

  \begin{proposition}\label{prop:DimensionIntersectionAllSi}
    Suppose $m$ is prime. Let $A$ and $B$ be random subspaces of
    dimensions $\alpha$ and $\beta$ respectively. Let $(b_i)$ be a
    basis of $B$ and let $S=AB$. Then with
    probability at least
    $1-\alpha\frac{q^{\alpha\beta(\beta+1)/2}}{q^m}$
    we have that 
    $\bigcap_{i}b_i^{-1}S = A.$
  \end{proposition}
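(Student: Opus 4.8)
The plan is to prove the two inclusions separately, the inclusion $A \subseteq \bigcap_i b_i^{-1}S$ being immediate and unconditional, while the reverse inclusion holds on a high-probability event. For the easy direction I would simply note that for any $a\in A$ and any basis vector $b_i$ we have $ab_i\in AB=S$, hence $a\in b_i^{-1}S$; since this holds for every $i$, we get $A\subseteq\bigcap_i b_i^{-1}S$ with no hypothesis needed. So the content of the proposition is entirely in the reverse inclusion.

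Before attacking it, I would reduce to the case $1\in B$. Fixing $B$ (the stated probability is over the random choice of $A$, so it suffices to argue for each fixed $B$), replace $B$ by $B'=b_1^{-1}B$, which contains $1=b_1^{-1}b_1$. This substitution leaves $A$ untouched, sends $S=AB$ to $b_1^{-1}S$ and each $(b_1^{-1}b_i)^{-1}(b_1^{-1}S)$ back to $b_i^{-1}S$, so the intersection $\bigcap_i b_i^{-1}S$ is unchanged; it also preserves $\beta=\dim B$ and $\beta_2=\dim B^2$ (multiplication by a nonzero scalar is an $\Fq$-isomorphism). Hence we may assume $1\in B$ without loss of generality.

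The heart of the argument is then the reverse inclusion on the event $\dim(AB^2)=\alpha\beta_2$. By Proposition \ref{prop:DimensionProductSubspace} applied to the fixed subspace $B^2$, this event has probability at least $1-\alpha q^{\alpha\beta_2}/q^m$, and the remark $\beta_2\leqslant\beta(\beta+1)/2$ upgrades this to the claimed bound $1-\alpha q^{\alpha\beta(\beta+1)/2}/q^m$. Working on this event, I would argue by contradiction: suppose some $e\in\bigcap_i b_i^{-1}S$ satisfies $e\notin A$. Then $eb_i\in S$ for every $i$, so by $\Fq$-linearity $eB\subseteq AB$, and since $1\in B$ we also get $e=e\cdot 1\in eB\subseteq AB$. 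Thus $e$ meets exactly the three hypotheses of Lemma \ref{lem:xB} ($e\in AB$, $e\notin A$, $eB\subseteq AB$), which produces an element $y\in B$ with $y\notin\Fq$ and $yB\subseteq B$.

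The final and most delicate step is to convert the existence of such a $y$ into a contradiction, and this is exactly where the primality of $m$ enters. Since $1\in B$ and $yB\subseteq B$, all powers $y^k$ lie in $B$, so the subfield $\Fq(y)=\Fq[y]$ is contained in $B$; as $y\notin\Fq$ this subfield strictly contains $\Fq$. Because $m$ is prime, the only subfields of $\Fqm$ are $\Fq$ and $\Fqm$ itself, forcing $\Fq(y)=\Fqm$ and hence $B=\Fqm$, i.e. $\beta=m$. This contradicts the standing assumption $\alpha\beta<m$, which gives $\beta<m$. Therefore no such $e$ exists, $\bigcap_i b_i^{-1}S\subseteq A$, and together with the trivial inclusion this yields equality on the conditioning event. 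I expect this closing field-theoretic step to be the main obstacle: the real insight is recognizing that Lemma \ref{lem:xB} delivers precisely the multiplicatively stable element needed to build a nontrivial subfield, after which primality of $m$ collapses the possibilities and the maximality of $\dim AB^2$ plus $1\in B$ are what make the lemma applicable in the first place.
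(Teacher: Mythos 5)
Your proof is correct and follows essentially the same route as the paper: apply Proposition \ref{prop:DimensionProductSubspace} to $B^2$ (with $\beta_2\leqslant\beta(\beta+1)/2$) to get the probability bound, feed an element $e\in\bigcap_i b_i^{-1}S\setminus A$ into Lemma \ref{lem:xB}, and use primality of $m$ to rule out the multiplicatively stable element $x$. In fact your write-up is slightly more careful than the paper's, which implicitly assumes $1\in B$ (needed both to verify the hypothesis $e\in AB$ of Lemma \ref{lem:xB} and to build the subfield $\Fq(x)\subseteq B$); your normalization $B\mapsto b_1^{-1}B$, together with the check that it leaves $\bigcap_i b_i^{-1}S$ and $\beta_2$ unchanged, supplies exactly the missing justification.
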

  
    \begin{proof}
    If not, there exists a subspace $E\supsetneq A$, such that 
    $EB = AB$. By the remark before
    Lemma~\ref{lem:xB} we have that with probability at least
    $$1-\alpha\frac{q^{\alpha\beta(\beta+1)/2}}{q^m}$$
    the conditions of Lemma~\ref{lem:xB} hold. But then there is
    $x\not\in\fq$ such that $xB\subset B$. But this implies that
    $\fq(x)B\subset B$. But $m$ prime implies that there is no
    intermediate field between $\fq$ and $\fqm$, hence
    $\fqm\subset B$, a contradiction.
  \end{proof}
  
 \paragraph{A better bound}

Our goal is to show that with a large probability, when $A$ and $B$
are randomly chosen with sufficiently small dimension, then with
probability close to $1$ we have that :
$$\bigcap_{b\in B,b\neq 0} b^{-1} AB = A.$$
Without loss of generality we can suppose that $1\in B$. We shall show
that for a random $b\in B$, we have that $b\neq 0$ and
$$AB \cap b^{-1} AB  = A$$
with probability close to $1$.

When $b$ is a random element of $\fqm$ we will
abuse notation somewhat by letting $b^{-1}$ denote
the inverse of $b$ when $b\neq 0$ and by letting it equal $0$ when $b=0$.
With this convention in mind we have:

\begin{lemma}\label{lem:B1}
  Let $B_1$ be a subspace of dimension $\beta_1$. Let $b$ be a
  uniformly distributed random element of $\fqm$. Then the probability
  that $b\in B_1+b^{-1}B_1$ is at most:
  $$\frac{2q^{2\beta_1}}{q^m}.$$
\end{lemma}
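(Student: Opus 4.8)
The plan is to convert the membership condition $b\in B_1+b^{-1}B_1$ into a polynomial equation satisfied by $b$, and then to bound the number of field elements $b$ that can satisfy such an equation by a union bound over the (few) possible coefficients.

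First I would rewrite the event assuming $b\neq 0$. Then $b\in B_1+b^{-1}B_1$ means there exist $u,v\in B_1$ with $b=u+b^{-1}v$. Multiplying through by $b$ yields the equivalent condition
\[
b^2=ub+v,\qquad u,v\in B_1.
\]
Conversely, if $b^2=ub+v$ with $u,v\in B_1$ and $b\neq 0$, dividing by $b$ recovers $b\in B_1+b^{-1}B_1$, so on $b\neq 0$ the two conditions are genuinely equivalent, not merely one-sided. The degenerate value $b=0$ falls under the convention $b^{-1}=0$, so the event reduces to $0\in B_1$, which always holds; but $b=0$ is also a root of $T^2-uT-v$ whenever $v=0$, hence it is already accounted for in the count below and needs no separate treatment.

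Next I would bound the number of ``bad'' values of $b$. For each fixed pair $(u,v)\in B_1\times B_1$, the polynomial $T^2-uT-v\in\fqm[T]$ is monic of degree $2$, and therefore has at most two roots in $\fqm$. Taking the union over all $|B_1|^2=q^{2\beta_1}$ pairs, the set of $b$ for which $b^2=ub+v$ holds for some $(u,v)\in B_1^2$ has cardinality at most $2q^{2\beta_1}$. Dividing by the total number $q^m$ of equally likely choices of $b$ gives exactly the claimed bound $\dfrac{2q^{2\beta_1}}{q^m}$.

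The argument is short and I do not expect a real obstacle: the only point requiring care is recognizing the ``multiply by $b$'' reformulation, verifying that it is an equivalence rather than a mere implication on $b\neq 0$, and checking that the $b=0$ case is swept up by the same count. Once the quadratic reformulation is in place, the estimate is just a union bound combined with the elementary fact that a degree-two polynomial has at most two roots in a field.
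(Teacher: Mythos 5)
Your proof is correct and takes essentially the same approach as the paper: both reformulate the event as saying that $b$ is a root of a monic quadratic $x^2-ux-v=0$ with $u,v\in B_1$, and then apply a union bound over the at most $|B_1|^2=q^{2\beta_1}$ such equations, each having at most two roots in $\fqm$. Your extra verifications (the equivalence on $b\neq 0$ and the treatment of $b=0$ under the paper's convention $b^{-1}=0$) simply make explicit what the paper's terse proof leaves implicit.
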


\begin{proof}
  This event can only happen if $b$ is a root of an equation of the
  form
  $$x^2-b_1x-b_1'=0.$$
  There are at most $|B_1|^2=q^{2\beta_1}$ such equations, and each
  one of them has at most two roots.
\end{proof}

Let $B_1$ be any vector space containing $1$ and of dimension $\beta_1$.
Let $b$ be a random element uniformly distributed in $\fqm$
and set $B=B_1+\langle b\rangle.$ Denote $\beta =\dim B=\beta_1 +1$
(with probability $1-q^{\beta_1}/q^m$).
Since $b^{-1}$ is also uniformly
distributed, so is $b^{-1}b_1$ for any $b_1\in B_1$, $b_1\neq 0$.
Therefore, for any such $b_1$, the probability that $b^{-1}b_1\in B_1$
equals $|B_1|/q^m$. By the union bound, the union of these events,
over all $b_1\in B_1\setminus\{0\}$,  has
probability at most $(|B_1|-1)|B_1|/q^m$. We have therefore that
either $b=0$ or $B_1\cap b^{-1}B_1 \neq\{0\}$ with probability
at most 
$$(|B_1|-1)\frac{|B_1|}{q^m}\leq \frac{q^{2\beta_1}}{q^m}.$$
Therefore with probability at least
$$1-\frac{q^{2\beta_1}}{q^m}$$
we have that $b\neq 0$ and
$$\dim (B_1+b^{-1}B_1) = 2\beta_1.$$
Now applying Lemma~\ref{lem:B1} we obtain:
\begin{lemma}
  With probability at least
  $$1-\frac{3q^{2\beta_1}}{q^m}$$
we have that
  $$\dim (B+b^{-1}B) = 2\beta -1.$$
\end{lemma}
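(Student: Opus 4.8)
The plan is to reduce the statement to the two estimates established immediately above it, namely the bound of Lemma~\ref{lem:B1} together with the dimension count $\dim(B_1 + b^{-1}B_1) = 2\beta_1$, and to combine them by a union bound.

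First I would rewrite $B + b^{-1}B$ purely in terms of $B_1$ and the single generator $b$. Since $B = B_1 + \langle b\rangle$, any element of $b^{-1}B$ has the shape $b^{-1}(x + \lambda b) = b^{-1}x + \lambda$ with $x\in B_1$ and $\lambda\in\Fq$, so $b^{-1}B = b^{-1}B_1 + \langle 1\rangle$. The hypothesis $1\in B_1$ lets $\langle 1\rangle$ be absorbed into $B_1$, yielding
\[
B + b^{-1}B = B_1 + b^{-1}B_1 + \langle b\rangle .
\]
Thus the target $\dim(B + b^{-1}B) = 2\beta - 1 = 2\beta_1 + 1$ is equivalent to $\dim\bigl((B_1 + b^{-1}B_1) + \langle b\rangle\bigr) = 2\beta_1 + 1$.

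Next I would isolate the two bad events. The discussion just before the lemma shows that, except with probability at most $q^{2\beta_1}/q^m$, we have $b\neq 0$ and $\dim(B_1 + b^{-1}B_1) = 2\beta_1$; this is the first bad event. Lemma~\ref{lem:B1} bounds the second bad event, $b\in B_1 + b^{-1}B_1$, by $2q^{2\beta_1}/q^m$. A union bound then gives that neither bad event occurs with probability at least $1 - 3q^{2\beta_1}/q^m$.

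Finally, on the complement of both bad events I would read off the dimension. There we have $\dim(B_1 + b^{-1}B_1) = 2\beta_1$ and $b\notin B_1 + b^{-1}B_1$; since $B_1\subseteq B_1 + b^{-1}B_1$, the latter already forces $b\notin B_1$, so that $\dim B = \beta_1 + 1 = \beta$ and no extra union-bound term is needed. Adjoining the line $\langle b\rangle$ to $B_1 + b^{-1}B_1$ then raises the dimension by exactly one, giving $2\beta_1 + 1 = 2\beta - 1$, as claimed. The only step with any content is the algebraic identity of the first paragraph: it is precisely the assumption $1\in B_1$ that collapses $\langle 1\rangle$ and exhibits $B + b^{-1}B$ as the span of $B_1$, $b^{-1}B_1$ and the single extra generator $b$; after that the result is a union bound over the two already-proved estimates, which is exactly where the factor $3 = 1 + 2$ comes from.
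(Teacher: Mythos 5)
Your proof is correct and takes essentially the same route as the paper, which obtains the lemma by combining the preceding estimate that $b\neq 0$ and $\dim(B_1+b^{-1}B_1)=2\beta_1$ fail with probability at most $q^{2\beta_1}/q^m$ with the bound $2q^{2\beta_1}/q^m$ of Lemma~\ref{lem:B1} via a union bound, yielding the factor $3$. You merely make explicit two points the paper leaves implicit, both correctly: the identity $B+b^{-1}B=B_1+b^{-1}B_1+\langle b\rangle$ (using $1\in B_1$), and the observation that $b\notin B_1+b^{-1}B_1$ already forces $\dim B=\beta_1+1=\beta$, so no additional error term is needed.
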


\begin{proposition}\label{prop:DimensionIntersection}
  Let $B$ be a subspace of dimension $\beta$
  containing $1$ such that $\dim B+b^{-1}B=2\beta-1$ for some $b\in B$.
  Let $A$ be a randomly chosen subspace of dimension $\alpha$.
  With probability at least
  $1-\alpha\frac{q^{\alpha(2\beta-1)}}{q^m}$
  we have that
  $AB \cap b^{-1} AB = A$
\end{proposition}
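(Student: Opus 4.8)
The plan is to reduce everything to the dimension estimate already established in Proposition~\ref{prop:DimensionProductSubspace}. First I would record the easy inclusion. Since $1 \in B$ we have $A \subseteq AB$, and since $b \in B$ we have $Ab \subseteq AB$, hence $A \subseteq b^{-1}AB$. Thus $A \subseteq AB \cap b^{-1}AB$, and it only remains to prove the reverse inclusion on an event of the claimed probability.

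The crucial observation is that $b^{-1}AB = A(b^{-1}B)$, since both are the $\Fq$-span of the set $\{a b^{-1}b' : a \in A,\ b' \in B\}$; consequently
\[
AB + b^{-1}AB = A\left(B + b^{-1}B\right).
\]
Writing $B_1 = B + b^{-1}B$, which by hypothesis has dimension $2\beta - 1$, I would then apply Proposition~\ref{prop:DimensionProductSubspace} with the fixed subspace $B_1$ and the random subspace $A$ of dimension $\alpha$. This gives $\dim A(B + b^{-1}B) = \alpha(2\beta - 1)$ with probability at least $1 - \alpha\, q^{\alpha(2\beta-1)}/q^m$, which is precisely the bound appearing in the statement.

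It then suffices to work on the event that this maximal dimension is attained. Multiplication by the nonzero element $b^{-1}$ is an $\Fq$-linear bijection of $\Fqm$, so $\dim b^{-1}AB = \dim AB$, and the Grassmann formula yields
\[
\dim\!\left(AB \cap b^{-1}AB\right) = 2\dim AB - \dim A\!\left(B + b^{-1}B\right) = 2\dim AB - \alpha(2\beta - 1).
\]
Using only the trivial upper bound $\dim AB \leq \alpha\beta$, the right-hand side is at most $2\alpha\beta - \alpha(2\beta - 1) = \alpha$. Combining this with the inclusion $A \subseteq AB \cap b^{-1}AB$ from the first step, which forces $\dim(AB \cap b^{-1}AB) \geq \alpha$, we get $\dim(AB \cap b^{-1}AB) = \alpha = \dim A$, and therefore $AB \cap b^{-1}AB = A$.

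The only genuinely nonroutine point is the identity $AB + b^{-1}AB = A(B + b^{-1}B)$, which is what lets me invoke the earlier proposition with $B_1$ in place of $B$; once it is in hand, the argument is pure dimension counting. I would remark in passing that the event in fact forces $\dim AB = \alpha\beta$ (equality must hold throughout the chain of inequalities), so no separate estimate of $\dim AB$ is required.
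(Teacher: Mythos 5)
Your proof is correct and takes essentially the same route as the paper's: apply Proposition~\ref{prop:DimensionProductSubspace} to the subspace $B+b^{-1}B$ of dimension $2\beta-1$, then combine the Grassmann formula for $AB + b^{-1}AB = A(B+b^{-1}B)$ with the inclusion $A \subseteq AB \cap b^{-1}AB$ to force $\dim(AB \cap b^{-1}AB) = \alpha$. If anything, your write-up is slightly more careful than the paper's, which silently uses $\dim AB = \alpha\beta$ in its dimension count, whereas you avoid needing it via the trivial bound $\dim AB \leq \alpha\beta$ and observe that equality is forced on the event in question.
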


\begin{proof}
  By Proposition~\ref{prop:DimensionProductSubspace} we have that with probability at least
  $$1-\alpha\frac{q^{\alpha(2\beta-1)}}{q^m}$$
  $$\dim [A(B+b^{-1}B)]= \alpha(2\beta-1) = 2\alpha\beta
  -\alpha.$$
  On the other hand, we have that
  \begin{align*}
\dim  A(B+b^{-1}B) &= \dim AB +
  \dim b^{-1}AB - \dim( AB\cap b^{-1}AB)\\
  &= 2\alpha\beta - \dim( AB\cap ABb^{-1})
  \end{align*}
hence 
  $$\dim(AB\cap b^{-1}AB) = \alpha.$$
But this proves the result since $A\subset AB\cap b^{-1}AB$ and $\dim A=\alpha$.
  \end{proof}

\begin{remark} It is interesting to remark that in practice the probabilities
 given in Proposition \ref{prop:DimensionIntersectionAllSi} and \ref{prop:DimensionIntersection} decrease much  faster
to $0$. Indeed when the degree of the extension $m$ increases by one (for $m$
greater than $\alpha\beta$), if we make the reasonable assumption that each subspace of the form $b_i^{-1}S$ behaves as a random subspace which contains $A$, the probability that one intersection  $b_i^{-1}S \cap b_j^{-1}S \ne A$ is divided by $q$, hence the probability that $\bigcap_{i}b_i^{-1}S \ne A$ is divided by a factor at least $q^{\beta-1}$. 
This means that in practice the previous upper
bound is rather weak, and that one can consider that as soon
as $m$ is greater than $\alpha\beta$ by $8$ or more (and increasing) the probability is far below $2^{-30}$.
This will be the case when we choose parameters in the last section.
\end{remark}

\section{LRPC codes and their basic decoding} \label{sec:basicAlgo}

\subsection{Low Rank Parity Check codes}

LRPC codes were introduced in \cite{GMRZ13}. They are good candidates
for the McEliece cryptosystem of because the have a weak algebraic structure.
\begin{definition}[LRPC codes]\label{def:LRPC}
Let $\Hv = (h_{ij})_{\substack{1\leqslant i \leqslant n-k\\ 1
    \leqslant j \leqslant n}} \in \Fqm^{(n-k)\times n}$ be a full-rank matrix such that its coefficients generate an $\Fq$-subspace $F$ of small dimension $d$:
\[
F = \langle h_{ij}\rangle_{\Fq}.
\]
Let $\C$ be the code with parity-check matrix $\Hv$. By definition, $\C$ is an $[n,k]_{q^m}$ LRPC code of dual weight $d$.
Such a matrix $\Hv$ is called a homogeneous matrix of weight $d$ and support $F$.
\end{definition}

We can now define ideal LRPC codes similarly to our definition of ideal codes. 

\begin{definition}[Ideal LRPC codes]\label{def:I-LRPC}
Let $F$ be an $\Fq$-subspace of dimension $d$ of $\Fqm$, let
$(\hv_1,\hv_2)$ be two vectors of $\Fqm^n$ of support $F$ and let $P
\in \Fq[X]$ be a polynomial of degree $n$.
Let
\[
\Hv_1 = \begin{pmatrix}
\hv_1 \\
X\hv_1 \mod P \\
\vdots \\
X^{n-1}\hv_1 \mod P
\end{pmatrix}^T
\text{ and }
\Hv_2 = \begin{pmatrix}
\yv \\
X\hv_2 \mod P \\
\vdots \\
X^{n-1}\hv_2 \mod P
\end{pmatrix}^T.
\]
When the matrix $\Hv = (\Hv_1|\Hv_2)$ has rank $n$ over $\Fqm$,
the code $\C$ with parity check matrix $\Hv$ is called an ideal LRPC code of type $[2n,n]_{q^m}$.
\end{definition}

As we can see, since $P \in \Fq[X]$, the support of $X^i\hv_1$ is still $F$ for all $1 \leqslant i \leqslant n-1$ hence the necessity to choose $P$ with coefficients in the base field $\Fq$ to keep the LRPC structure of the ideal code.

\subsection{A basic decoding algorithm}
The general idea of the algorithm is to use the fact that we know a parity-check matrix with a small weight $d$ that is to say that the $\Fq$-subspace $F$ of $\Fqm$ spanned by the coordinates of this matrix has a small dimension of $d$, hence the
subspace $S= \vect{s_1,\dots,s_{n_k}}$ generated by the coordinates of syndrome 
enables one to recover the whole product space $P= EF$ of the support of the error and
of the known small basis of $H$. Knowing the space $P$ and
the space $F$ enables one to recover $E$.
Then, knowing the support $E$ of the error $\ev$, it is easy to recover 
the exact value of each coordinate by solving 
a linear system. This approach is very similar to the classical decoding procedure
of BCH codes for instance, where one recovers the error-locator polynomial, which gives
the support of the error, and then the value of the error coordinates.

Consider an $[n,k]_{q^m}$ LRPC code $\C$ of low weight $d$, 
with generator matrix $\Gv$ and dual $(n-k) \times n$ matrix $\Hv$, such that all
the coordinates $h_{ij}$ of $\Hv$ belong to a space $F$ of dimension $d$ with basis
$\{f_1,\dots,f_d\}$.

Suppose the received word to be $\yv = \xv \Gv + \ev$ 
where $\ev \in \Fqm^n$ 
is the error vector of rank~$r$. Let $E = \Supp(\ev)$ and 
let $\{e'_1, \ldots, e'_r\}$ be a basis of $E$.

We have the basic decoding algorithm described in Figure~\ref{fig:Decoding algorithm of LRPC codes}, which
has a probability of failure that we will consider in subsection \ref{sub:prob_failure}.


\begin{figure}[!ht]
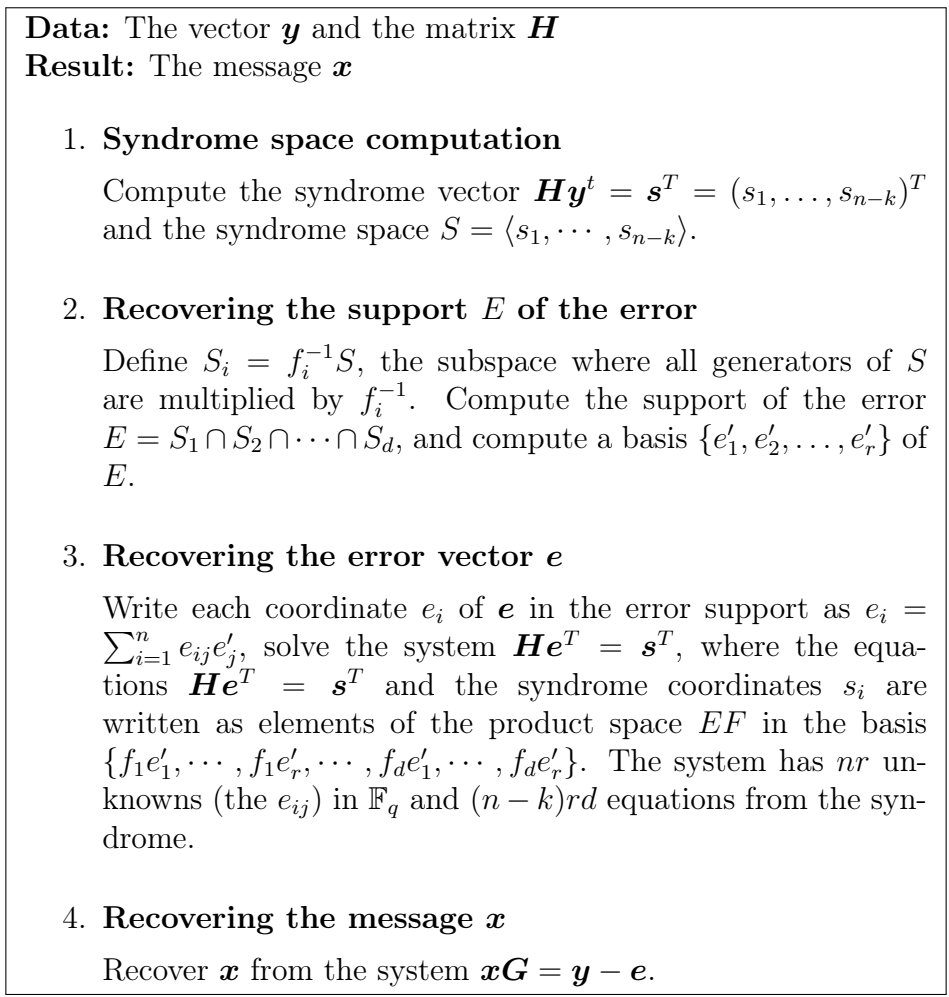

\centering
\fbox{
  \begin{minipage}{12cm}
  
  {\bf Data:} The vector $\yv$ and the matrix $\Hv$
  
  {\bf Result:} The message $\xv$
  
\begin{enumerate}
\item {\bf Syndrome space computation}

Compute the syndrome vector $\Hv\yv^t=\sv^T = (s_1,\dots,s_{n-k})^T$ and the syndrome space 
$S=\vect{s_1,\cdots,s_{n-k}}$.

\medskip

\item {\bf Recovering the support $E$ of the error}

Define $S_i=f_i^{-1}S$, the subspace where all generators of $S$ are multiplied by 
$f_i^{-1}$.
Compute the support of the error $E = S_1 \cap S_2 \cap \cdots \cap S_d$,
and compute a basis $\{e'_1,e'_2,\dots,e'_r\}$ of $E$.

\medskip

\item {\bf Recovering the error vector $\ev$}

Write each coordinate $e_i$ of $\ev$ in the error support as $e_i=\sum_{i=1}^ne_{ij}e'_j$,
solve the system $\Hv\ev^T=\sv^T$, where the equations $\Hv\ev^T=\sv^T$ and the syndrome coordinates $s_i$
are written as elements of the product space $\EF$ in the basis 
$\{f_1e'_1,\cdots,f_1e'_r,\cdots,f_de'_1,\cdots,f_de'_r\}$. The system has $nr$ unknowns (the $e_{ij}$) in $\Fq$ and $(n-k)rd$ equations from the syndrome.

\medskip

\item {\bf Recovering the message $\xv$}

Recover $\xv$ from the system $\xv\Gv=\yv-\ev$.

 \end{enumerate}
  \end{minipage}
} \caption{A basic decoding algorithm for LRPC codes}
\label{fig:Decoding algorithm of LRPC codes}
\end{figure}


\subsection{Correctness of the algorithm}

We prove the correctness of the algorithm in the ideal case when 
$\dim \EF =rd$, $\dim S=rd$ and  
$\dim S_1 \cap S_2 \cap \cdots \cap S_d =r$.
We will see in the next subsection that this is the general case.

{\bf step 1}: the first step of the algorithm is obvious.

\medskip

{\bf step 2}: now we want to prove that 
$ E \subset S_1 \cap S_2 \cap \cdots \cap S_d$.
We define $S_i=f_i^{-1}S=\{f_i^{-1}x, x \in S\}$. Since by hypothesis
the space
$S$ is {\it exactly} the product space $\EF$, we have
$f_ie'_j \in~S, \forall 1 \le j \le r$, hence $e'_j \in S_i$ for all $i,j$. Therefore $E \subset S_i$,
 hence  $E \subset S_1 \cap S_2 \cap \cdots \cap S_d$. 
By hypothesis $\dim S_1 \cap S_2 \cap \cdots \cap S_d= \dim E$ hence
$E=S_1 \cap S_2 \cap \cdots \cap S_d$.

\medskip

{\bf step 3}: once the support $E$ of the error $\ev$ is known, one can write
\[e_i=\sum_{1 \le j \le r} e_{ij}e'_j, \text{ with }e_{ij} \in \Fq\]
and solve the linear system $\Hv\ev^T=\sv^T$ in the $nr$ unknowns $e_{ij}$.
The system has $nr$ unknowns in $\Fq$ and $(n-k)rd$ equations in $\Fq$
 coming from the $n-k$ syndrome equations in $\EF$. The parameter
$d$ is chosen such that $d\ge \frac{n}{n-k}$.

\subsection{Probability of failure} \label{sub:prob_failure}

We now consider the different possibilities of failure,
there are three cases to consider. The case $\dim \EF = rd$ corresponds 
to proposition \ref{prop:DimensionProductSubspace}, the case $E = S_1 \cap S_2 \cap \cdots \cap S_d$ 
corresponds to proposition \ref{prop:DimensionIntersectionAllSi}. In both cases the probability 
can be made exponentially small depending on the parameters, especially when in practice
the upper bound given are really large compared to experimental results.

The last case is the case $\dim S=rd$.  We have the following easy proposition: 
\begin{prop} \label{prop:failure prob}
The probability that the $n-k$ syndrome coordinates do not generate 
the product space $\EF$ is less than $q^{rd-(n-k)}$.
\end{prop}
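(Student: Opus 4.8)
The plan is to estimate the probability that the syndrome coordinates $s_1,\dots,s_{n-k}$ fail to span the full product space $\EF$, under the conditioning that $\dim \EF = rd$ (which is the assumption implicit in this third failure case; the other two cases are handled by Propositions~\ref{prop:DimensionProductSubspace} and~\ref{prop:DimensionIntersectionAllSi}). The key observation is that each syndrome coordinate $s_i = (\Hv\ev^T)_i = \sum_{j=1}^n h_{ij}e_j$ is an $\Fq$-linear combination of products $h_{ij}e_j$, each of which lies in $\EF$ by definition of the product space. Hence every $s_i$ lives in $\EF$, and the question is purely about whether $n-k$ such elements span a space of dimension $rd$.

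**The counting argument.**

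First I would fix the support $E$ and the support $F$, so that $\EF$ is a fixed $\Fq$-subspace of dimension $rd$. Because the error vector $\ev$ is drawn with support exactly $E$ and the parity-check entries have support $F$, the syndrome coordinates behave (for the purpose of this bound) like elements of $\EF$ that one may treat as uniformly distributed in $\EF$. The natural approach is then a union bound over proper subspaces: the syndrome coordinates fail to generate $\EF$ precisely when all $n-k$ of them fall into some hyperplane (a subspace of dimension $rd-1$) of $\EF$. Rather than summing over all hyperplanes, the cleaner route is the standard ``sequential'' estimate: the probability that $n-k$ uniform elements of an $rd$-dimensional $\Fq$-space fail to span it is bounded by the probability that they all lie in a fixed proper subspace, which for each coordinate is at most $q^{rd-1}/q^{rd}=1/q$ per ``missing dimension.'' More precisely, the number of missing dimensions is at least $1$ whenever spanning fails, and a direct count gives a failure probability bounded by $q^{rd}/q^{n-k} = q^{rd-(n-k)}$, matching the claimed bound.

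**The cleanest way to get the exponent.**

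The sharpest and most transparent derivation is to note that $n-k$ random elements of an $rd$-dimensional space span it with probability $\prod_{i=0}^{rd-1}\left(1 - q^{i}/q^{n-k}\right)$ when $n-k \geq rd$, so the failure probability is $1$ minus this product; bounding the product from below by $1 - \sum_{i=0}^{rd-1} q^{i-(n-k)}$ and summing the geometric series $\sum_{i=0}^{rd-1} q^i \leq q^{rd}$ yields failure probability at most $q^{rd}/q^{n-k} = q^{rd-(n-k)}$. This is exactly the ``easy'' estimate the proposition advertises.

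**The main obstacle.**

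The one genuine subtlety — and what I expect to be the crux — is justifying that the syndrome coordinates may be treated as (sufficiently close to) uniformly and independently distributed in $\EF$. They are \emph{not} literally independent uniform elements: each $s_i$ is a bilinear expression in the fixed entries $h_{ij}$ of the homogeneous parity-check matrix and the coordinates of $\ev$, and the various $s_i$ share the same error vector. The honest way to handle this is to argue that for a random $\ev$ of support $E$ the distribution of the tuple $(s_1,\dots,s_{n-k})$ covers $\EF^{\,n-k}$ densely enough that the uniform bound applies up to the stated factor, or to observe that the proposition is explicitly framed as an ``easy'' upper bound and that the independence heuristic gives exactly $q^{rd-(n-k)}$; making this rigorous amounts to checking that the linear map sending $\ev$ to its syndrome is surjective onto the relevant span with high probability, which follows from the full-rank hypothesis on $\Hv$. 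Once that distributional point is granted, the counting is entirely routine.
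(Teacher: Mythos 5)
Your proposal is correct and follows essentially the same route as the paper: both reduce the claim, under the identical modeling step that the full-rank hypothesis on $\Hv$ and the random error let the $s_i$ be treated as uniform random elements of $\EF$, to bounding the probability that a random $(n-k)\times rd$ matrix over $\Fq$ fails to have rank $rd$. The only cosmetic difference is in proving that matrix bound: the paper uses an induction on the number of rows giving $2q^{-x-1}\leq q^{-x}$, while you invoke the exact spanning probability $\prod_{i=0}^{rd-1}\left(1-q^{i-(n-k)}\right)$ and a union bound $\sum_{i=0}^{rd-1}q^{i-(n-k)}\leq q^{rd-(n-k)}$ — the same estimate, and your explicit flag that the $s_i$ are not literally i.i.d.\ uniform matches the heuristic the paper itself adopts without further justification.
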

\begin{proof} By construction all $s_i$ belong to the product space $\EF$
and since the error is taken randomly and the matrix $\Hv$ is
full-rank, the $s_i$ can be seen as random
elements of $\EF$.
Now if one considers a set of $(n-k)$ random elements in a space of dimension $rd$
(with $n-k \ge rd$) 
the probability that this set does not generate the whole space is
equal to the probability that a random 
$[rd,n-k]$ matrix
over $\Fq$ is not full-rank, which is not more than $q^{(n-k)-rd}$
according to the lemma below.
\end{proof}

\begin{lemma}
  Let $a,x$ be integers and let $\Av$ be a uniform random $a\times (a+x)$ matrix over $\Fq$. The
  probability that $\Av$ has rank less than $a$ is bounded from above by $q^{-x}$.
\end{lemma}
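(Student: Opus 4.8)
The plan is to prove the bound by a union bound over the rows of $\Av$. Denote the rows by $\rv_1, \dots, \rv_a$, viewed as independent uniformly random vectors of $\Fq^{a+x}$. I would first note that $\rank \Av < a$ holds exactly when the rows are $\Fq$-linearly dependent, and that this event is the union, over $1 \leqslant i \leqslant a$, of the events
\[
F_i = \{\, \rv_i \in \langle \rv_1, \dots, \rv_{i-1}\rangle \,\}
\]
(the span of the empty family being $\{0\}$, so $F_1$ is the event $\rv_1 = 0$). The one direction deserving justification is that dependence forces some $F_i$ to occur: given a nontrivial linear relation among the rows, take the largest index $i$ appearing in it with nonzero coefficient; then $\rv_i$ is a combination of strictly-lower-indexed rows, so $F_i$ holds.

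Next I would estimate $\prob{F_i}$. Conditioning on $\rv_1, \dots, \rv_{i-1}$, their span has dimension at most $i-1$, hence contains at most $q^{i-1}$ vectors; since $\rv_i$ is uniform over the $q^{a+x}$ vectors of $\Fq^{a+x}$ and independent of the earlier rows, $\prob{F_i} \leqslant q^{i-1}/q^{a+x}$, and this bound survives removing the conditioning. The union bound then gives
\[
\prob{\rank \Av < a} \;\leqslant\; \sum_{i=1}^{a} \frac{q^{i-1}}{q^{a+x}} \;=\; \frac{q^a - 1}{(q-1)\,q^{a+x}}.
\]

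Finally I would simplify the geometric sum using $q - 1 \geqslant 1$ (valid for every prime power $q \geqslant 2$): since $\frac{q^a-1}{q-1} \leqslant q^a - 1 < q^a$, the right-hand side above is at most $q^a/q^{a+x} = q^{-x}$, as claimed. An equivalent route would be to compute the full-rank probability exactly as $\prod_{i=0}^{a-1}(1 - q^{i-a-x})$ and then apply the inequality $\prod_i(1 - p_i) \geqslant 1 - \sum_i p_i$, yielding the same estimate. There is no serious obstacle here: the argument is entirely routine, and the only point needing a moment's care is the combinatorial claim that rank-deficiency decomposes as $\bigcup_i F_i$ in the chosen row order; once that is granted, the rest is a single geometric sum.
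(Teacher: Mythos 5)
Your proof is correct, and it reaches the bound by a genuinely different organization than the paper's. The paper argues by induction on $a$: conditioning on the first $a-1$ rows being full-rank, it bounds the failure probability by $P_1+P_2$, where $P_1\leq q^{a-1}/q^{a+x}=q^{-x-1}$ and $P_2\leq q^{-x-1}$ comes from the induction hypothesis applied to the first $a-1$ rows (an $(a-1)\times\big((a-1)+(x+1)\big)$ matrix), and then absorbs the slack via $2q^{-x-1}\leq q^{-x}$, which is where $q\geq 2$ enters. You instead decompose rank-deficiency once and for all as $\bigcup_i F_i$ with $F_i=\{\rv_i\in\langle \rv_1,\dots,\rv_{i-1}\rangle\}$ --- and you correctly identify and supply the one step needing care, namely that a nontrivial relation forces the highest-indexed participating row into the span of its predecessors --- then sum the geometric series exactly. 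The underlying estimate is the same in both proofs: your $\prob{F_i}\leq q^{i-1}/q^{a+x}$ is precisely the paper's $P_1$ at each level of the induction, so your argument is essentially the induction unrolled into a single union bound. What your route buys is a sharper intermediate bound, $\frac{q^a-1}{(q-1)\,q^{a+x}}\leq \frac{q^{-x}}{q-1}$, which improves on $q^{-x}$ by a factor of $q-1$ when $q>2$ (your final rounding uses $q-1\geq 1$, playing the role of the paper's $2q^{-x-1}\leq q^{-x}$), and it avoids threading the shifted parameter $x+1$ through an induction. What the paper's route buys is brevity: the law-of-total-probability step dispenses with naming the events $F_i$ and justifying the decomposition of linear dependence. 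Both proofs are complete; yours has no gap.
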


\begin{proof}
  Proceed by induction on $a$. When $a=1$, the probability that $\Av$ is
  not full-rank is the probability that its unique random row equals
  the zero row, which is equal to $q^{-(x+1)}<q^{-x}$. Let now $a>1$ and
  suppose the result proved for matrices with $a-1$ rows. The
  probability of $\Av$ not being full-rank is, by the law of total probability, not more than 
  $$P_1+P_2$$
  where $P_1$ is the probability that $\Av$ is not full-rank,
  conditional on its first $a-1$ rows being full-rank, and where $P_2$
  is the probability that the first $a-1$ rows of $\Av$ are not full-rank.
  we have $P_1\leq q^{a-1}/q^{a+x}=1/q^{x+1}$ and $P_2\leq q^{-x-1}$
  by the induction hypothesis. Hence the probability of $\Av$ not
  being full-rank is at most $2q^{-x-1} \leq q^{-x}$.
\end{proof}

The previous discussion shows that depending on the parameters the probability
of failure of the previous algorithm can be made arbitrarily small and
that the main probability we have to consider in fact is the probability given
by Proposition \ref{prop:failure prob}, which is not an upper bound but what happens in practice.

\subsection{Computational cost of decoding}

The most costly step of the algorithm are step 2) and step 3).
The cost of step 2) is the cost of the intersection of vector spaces
which has cost $4 r^2 d^2 m$ operations in the base field
(this operation can also be done in a very elegant way with $q$-polynomials \cite{O33}).
Now the cost of step 3) is the cost of solving the system $\Hv\ev^T=\sv^T$ when the support
$E$ of the error is known. If one proceeds naively there are 
$nr$ unknowns (the $e_{ij}$) and the cost of matrix inversion is $n^3r^3$,
but one can use a precomputation to perform this step in $n^2r$ operations.

To solve the linear system in $n^2r$ operations, we introduce the matrix $\Av_{\Hv}^r$, which corresponds to the matrix $\Hv$ "unfolded" over $\Fq$ and represents the system $\Hv\ev^T=\sv^T$.

We want to solve the following system :

\begin{IEEEeqnarray}{rccl}
\left(
\begin{tabular}{cccc}
$h_{11}$ & $h_{12}$ & $\cdots$ & $h_{1n}$\\
$h_{21}$ & $\ddots$ & & $\vdots$\\
$\vdots$ & & $\ddots$ & $\vdots$\\
$h_{(n-k)1}$ & $\cdots$ & $\cdots$ & $h_{(n-k)n}$
\end{tabular}
\right) & \left(
\begin{tabular}{c}
$e_1$\\
$e_2$\\
$\vdots$\\
$e_n$
\end{tabular}
\right) & = & \left(
\begin{tabular}{c}
$s_1$\\
$s_2$\\
$\vdots$\\
$s_{n-k}$
\end{tabular}
\right)
\end{IEEEeqnarray}

Suppose $\dim \EF$ is exactly $rd$. Then we can unfold each $s_i$ in the basis $(f_ie'_j)_{\substack{1 \leqslant i \leqslant d\\ 1 \leqslant j \leqslant r}}$. Now if we unfold each $h_{ij}$ in the basis $(f_u)_{1 \leqslant u \leqslant d}$ : $h_{ij} = \sum\limits_{u=1}^d h_{iju}f_u, h_{iju}\in \Fq$ and each $e_j$ in the basis $(e'_v)_{1 \leqslant j \leqslant r}$ : $e_j = \sum\limits_{v=1}^r e_{jv}e'_v, e_{jv}\in \Fq$, we get the following relation: 
\[
s_i = \sum\limits_{j=1}^n h_{ij}e'_j = \sum\limits_{j=1}^n\sum\limits_{u=1}^d\sum\limits_{v=1}^r h_{iju}e_{jv} f_u e'_v.
\]
 Each of these relations can be viewed as a linear system with $nr$ unknowns (the $e_{ij}$) and $rd$ equations from the syndrome in $\Fq$ :

\begin{IEEEeqnarray}{rccl}
\left(
\begin{tabular}{cccccccccc}
$h_{111}$ & $0$ & $0$ & $h_{121}$ & $0$ & $0$ & & $h_{1n1}$ & $0$ & $0$\\
$0$ & $\ddots$ & $0$ & $0$ & $\ddots$ & $0$ & $\cdots$ & $0$ & $\ddots$ & $0$\\
$0$ & $0$ & $h_{111}$ & $0$ & $0$ & $h_{121}$ & & $0$ & $0$ & $h_{1n1}$\\

$h_{112}$ & $0$ & $0$ & $h_{122}$ & $0$ & $0$ & & $h_{1n2}$ & $0$ & $0$\\
$0$ & $\ddots$ & $0$ & $0$ & $\ddots$ & $0$ & $\cdots$ & $0$ & $\ddots$ & $0$\\
$0$ & $0$ & $h_{112}$ & $0$ & $0$ & $h_{122}$ & & $0$ & $0$ & $h_{1n2}$\\

 & $\vdots$ & & & $\vdots$ & & & & $\vdots$ & \\ 

$h_{11d}$ & $0$ & $0$ & $h_{12d}$ & $0$ & $0$ & & $h_{1nd}$ & $0$ & $0$\\
$0$ & $\ddots$ & $0$ & $0$ & $\ddots$ & $0$ & $\cdots$ & $0$ & $\ddots$ & $0$\\
$0$ & $0$ & $h_{11d}$ & $0$ & $0$ & $h_{12d}$ & & $0$ & $0$ & $h_{1nd}$\\
\end{tabular}
\right) & \left(
\begin{tabular}{c}
$e_{11}$\\
$e_{21}$\\
$\vdots$\\
$e_{1r}$\\
$e_{21}$\\
$\vdots$\\
$e_{2r}$\\
$\vdots$\\
$e_{n1}$\\
$\vdots$\\
$e_{nr}$
\end{tabular}
\right) & = & \left(
\begin{tabular}{c}
\begin{tabular}{c}
$s_{111}$\\
$\vdots$\\
$s_{1d1}$\\
$s_{112}$\\
$\vdots$\\
$s_{1d2}$\\
$\vdots$\\
$s_{11r}$\\
$\vdots$\\
$s_{1dr}$
\end{tabular}
\end{tabular}
\right)
\label{systemMatrix}
\end{IEEEeqnarray}

Where $s_{1uv} = \sum\limits_j h_{1ju}e_{jv}$ (the syndrome coordinates unfolded in the basis of $\EF$). Each $h_{1ju}$ is expanded into an $r \times r$ diagonal matrix whose coefficients are equal to $h_{1ju}$. By repeating this process for each $s_i$, we obtain a linear system with $nr$ unknowns and $(n-k)rd$ equations in $\Fq$. We call the matrix representing this system $\Av_{\Hv}^r$. The first line of $\Av_{\Hv}^r$, for example, represents the impact of the error vector $\ev$ on the first line of $\Hv$ relative to the element ${F_1E_1}$ of the basis of $\EF$.

Denote $\Av_{\Hv}$ an invertible submatrix of $\Av_{\Hv}^r$ and $\Dv_{\Hv} = \Av_{\Hv}^{-1}$. Then solving the linear system consists only of multiplying $nr$ syndrome coordinates by an $nr \times nr$ $\Dv_{\Hv}$ instead of inverting it.

Since $\Av_{\Hv}$ is a $nr \times nr$ matrix, it can be chosen such that it only contains $r \times r$ diagonal matrices. $\Av_{\Hv}$ and $\Dv_{\Hv}$ can therefore be stored in memory as $n^2$ elements of $\Fq$, and can be inverted in $n^3$ operations in $\Fq$ instead of $n^3r^3$ operations. Using this representation, we can see that the multiplication $\Dv_{\Hv}.\sv$ takes only $n^2r$ multiplications in $\Fq$ since each row of the condensed matrix contributes to $r$ error coordinates, hence the complexity. This complexity can also be attained with the full $nr \times nr$ matrix by taking into account the fact that it is composed of multiple $r \times r$ diagonal matrices.

\section{Improved decoding : syndrome space expansion algorithm} \label{sec:syndrome space expansion algorithm}

In general, the basic decoding algorithm presented in Section \ref{sec:basicAlgo} does not work when the syndrome space $S = \langle s_1, ..., s_{n-k} \rangle$ is different from $\EF$. In this section, we present an algorithm that can be used between steps 1) and 2) of the basic decoding algorithm to recover $\EF$ in certain cases when $\dim S < \dim \EF$. We denote by $c$ the codimension of $S$ in $\EF$, that is to say $c = \dim \EF - \dim S$.

In this section, we will always suppose that $\dim \EF = rd$ for two reasons:
\begin{itemize}
\item first, according to Proposition \ref{prop:DimensionProductSubspace}, the probability $p_1$ that $\dim \EF < rd$ can easily be exponentially low by increasing $m$. We can also decrease the probability $p_2$ that $S \ne \EF$ by increasing $n$, however for cryptographic applications of LRPC, the parameters of the code are such that $p_1$ is negligible with respect to $p_2$.
\item secondly, the case $\dim \EF = rd$ is the worst case for the decoding since the probability that $S = \EF$ increases when the dimension of $\EF$ decreases. Thus, we can analyze the theoretical probability of failure in the worst case scenario. All the tests we have made show that in practice the probability of failure is smaller when $\dim S < rd$, which confirms our analysis.
\end{itemize}

\subsection{General idea}

This algorithm's aim is to recover vectors of the product space $\EF$ which
do not belong to the syndrome space $S$ by using the structure of the product
space $\EF = \langle f_1e_1, f_2e_1, ..., f_de_r \rangle$ and the fact
that we know a basis $\langle f_1, ..., f_d \rangle$ of $F$. 
We use a function which take on input the subspace $S$ and outputs a subspace $S'$ such that $S \varsubsetneq S' \subset \EF$ with a very high probability, depending on the parameters of the code. We call such a function an ``expansion function''.

Here are two examples of expansion functions:

\begin{enumerate}
\item $f_{decode}(S) =  (S + f_iS_j) \cap (S + f_kS_l)$ is used to decode errors of larger weight.
\item $f_{prob}(S) = S + FS_{ij} $, where $S_{ij} = S_i \cap S_j$, is used to reduce the decoding failure rate.
\end{enumerate}

We use these expansion functions to describe an iterative algorithm [\ref{fig:Syndrome space expansion algorithm}]. Detailed algorithms for each function are given in the next subsections.

\begin{figure}[h!]
\centering
\fbox{
  \begin{minipage}{12cm}
  
  {\bf Data:} The syndrome space $S$ and the vector space $F$
  
  {\bf Result:} The expanded syndrome space $S'$, which may be $\EF$, or failure
\begin{enumerate}
\item {\bf Expand $S$ with the expansion function}

Use the expansion function for all possible parameters:\\
$S \leftarrow f_{expand}(S)$

\medskip

\item {\bf Iterative step}

If $\dim S = rd$, then return $S$. If $\dim S < rd$, go back to step~1. If $\dim S$ does not increase during step 1, return failure.

 \end{enumerate}
  \end{minipage}
} \caption{The syndrome space expansion algorithm}
\label{fig:Syndrome space expansion algorithm}
\end{figure}

\begin{prop} \label{prop:dimSi}
$\dim (S_i \cap E) \geqslant r-c$, with $c = \dim \EF - \dim S$.
\end{prop}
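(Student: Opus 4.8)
The plan is to prove this by a single application of the Grassmann dimension formula to the subspaces $S_i$ and $E$, where the crucial preliminary step is to identify the correct ambient space in which both live. That ambient space is not $\Fqm$ itself (of dimension $m$) but the product space $f_i^{-1}\EF$, which under our standing hypothesis $\dim \EF = rd$ has dimension exactly $rd$. Once both $S_i$ and $E$ are seen as subspaces of this $rd$-dimensional space, the bound on their intersection falls out immediately from inclusion--exclusion.

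First I would verify the two inclusions $E \subset f_i^{-1}\EF$ and $S_i \subset f_i^{-1}\EF$. For the first: since $f_i \in F$, every product $f_i e$ with $e \in E$ lies in $\EF$, so $f_i E \subset \EF$ and hence $E \subset f_i^{-1}\EF$. For the second: by definition $S_i = f_i^{-1}S$, and $S \subset \EF$, so $S_i \subset f_i^{-1}\EF$. Next I would record the relevant dimensions: multiplication by the nonzero element $f_i^{-1}$ is an $\Fq$-linear bijection of $\Fqm$ and therefore preserves dimension, so $\dim S_i = \dim S = rd - c$ and $\dim\bigl(f_i^{-1}\EF\bigr) = \dim \EF = rd$.

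Finally I would apply the dimension formula inside $f_i^{-1}\EF$. Since $S_i + E$ is a subspace of $f_i^{-1}\EF$, we have $\dim(S_i + E) \leqslant rd$, whence
\[
\dim(S_i \cap E) = \dim S_i + \dim E - \dim(S_i + E) \geqslant (rd - c) + r - rd = r - c,
\]
which is the claimed bound. The only point requiring any care is the inclusion $E \subset f_i^{-1}\EF$ established in the second step: this is precisely what lets us bound $\dim(S_i + E)$ by $rd$ rather than by the much larger $m$, and without it the argument would give nothing useful. Everything else is routine bookkeeping with subspace dimensions.
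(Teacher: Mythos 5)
Your proof is correct and is essentially the paper's own argument: the paper bounds $\dim(E+S_i)$ via $\dim(E+S_i)=\dim f_i(E+S_i)=\dim(f_iE+S)\leqslant\dim \EF=rd$, which is exactly your observation that $E$ and $S_i$ both sit inside the $rd$-dimensional space $f_i^{-1}\EF$, just viewed through the inverse map. The concluding step, Grassmann's formula giving $\dim(S_i\cap E)\geqslant r+(rd-c)-rd=r-c$, is identical in both.
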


\begin{proof}
We have $\dim S_i = \dim S = rd -c$ and
\begin{IEEEeqnarray}{rCl}
\dim (E+S_i) & = & \dim f_i(E+S_i) = \dim (f_i E+S) \nonumber \\
&\leqslant &\dim \EF =rd
\end{IEEEeqnarray}
since $f_i E$ and $S$ are subspaces of $\EF$.
Hence:
\begin{IEEEeqnarray}{rCCCCCl}
\dim (S_i \cap E)	&	 = 		& \dim E	& +	& \dim S_i	& -	& \dim (E +S_i)\nonumber \\
				& \geqslant	& r			& +	& rd-c		& -	& rd \nonumber \\
				& \geqslant	& r-c. \nonumber
\end{IEEEeqnarray} 
\end{proof}

As we will see in the next subsections, each expansion function has
advantages and drawbacks, and the choice of which function to use
depends on the parameters. In particular, $m$ needs to be high enough
for these functions to work. The function $f_{decode}$ needs $m \ge 3rd-2$ and allows to decode
errors of weight up to $\frac{2}{3}(n-k)$ whereas the function $f_{prob}$ only needs $m \ge 2rd-r$. In this case
we can only decode errors of weight up to $\frac{(n-k)}{2}$ but by strongly reducing the decoding failure probability. In particular, this latter case corresponds to parameters used for cryptography, where $rd \approx (n-k)=\frac{n}{2}$ and $m \approx n$.

\subsection{Increasing the weight of the decoded error vectors}

In this subsection we explain in details the function $f_{decode}$:
\[
f_{decode} : S \mapsto (S + f_iS_j) \cap (S + f_kS_l).
\]
The idea is
that each subspace of the form $(S + f_iS_j)$ has a chance to contain
vectors $x$ such that $x \in \EF$ and $x \notin S$, since $\EF \cap f_iS_j \ne \{0\}$. If we find some new vectors of $\EF$ but not the whole product space, we can
reuse values of $(i, j, k, l)$ already tested because they may lead us
to recover some new vectors thanks to the one we have added. Hence the algorithm functions in an iterative way. If during a whole iteration, the dimension of $S$ does not increase, then the algorithm will fail.

\begin{algorithm}[H]
\KwIn{The syndrome space $S$ and the vector space $F$}
\KwOut{The expanded syndrome space $S'$, which may be $\EF$, or failure}
\Begin{
	\While{\textit{true}} {
		$tmp \leftarrow \dim S$ \;
		\For{Every $(i, j, k, l)$ such that $i \neq j$, $k \neq l$ and $(i, j) \neq (k, l)$}{
			$S \leftarrow (S + f_i.S_j) \cap (S + f_k.S_l)$ \;
		}
		\If{$\dim S = rd$}{\Return{$S$} \;}
		\If{$\dim S = tmp$}{\Return{failure} \;}
	}
}

\caption{Syndrome space expansion using $f_{decode}$}
\label{fig:expansion1}
\end{algorithm}

\paragraph{Maximum weight of the decoded errors.}

Without using the syndrome space expansion algorithm, the decoding
algorithm for LRPC codes can decode up to $r = \frac{n-k}{d}$
errors. We can use $f_{decode}$ to deal with error vectors of increased weight.

\begin{theorem}
For $d=2$, under the assumption that $\dim (S_i \cap E) = r-c$ for $1 \leqslant i \leqslant 2$, $f_{decode}$ allows to decode errors of weight up to $r \leqslant \Floor{\frac{2}{3}(n-k)}$ with probability $\approx 1 - \frac{q^{3r-2(n-k)}}{q-1}$ for the general case and errors of weight up to  $r=\frac{2}{3}(n-k)$ with probability $\approx 0.29$ for the case $q=2$.
\end{theorem}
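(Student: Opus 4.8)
The plan is to reduce everything to the two-dimensional product structure of $\EF$ and to a single well-chosen intersection. Since $d=2$, write $F=\langle f_1,f_2\rangle$ and set $g=f_1f_2^{-1}$, so that the only admissible index choice in $f_{decode}$ gives $f_{decode}(S)=(S+f_1S_2)\cap(S+f_2S_1)=(S+gS)\cap(S+g^{-1}S)$, because $f_1S_2=f_1f_2^{-1}S=gS$ and $f_2S_1=g^{-1}S$. Under the working hypotheses $\dim\EF=rd=2r$ and $\EF=f_1E\oplus f_2E$ is a direct sum of two $r$-dimensional blocks. First I would record how $S$ sits inside this decomposition: from Proposition~\ref{prop:dimSi} together with the theorem's assumption, $\dim(S\cap f_iE)=\dim(S_i\cap E)=r-c$ for $i=1,2$ (using $f_i(S_i\cap E)=S\cap f_iE$), and consequently the projection of $S$ onto each factor $f_iE$ along the other is surjective, its image having dimension $\dim S-(r-c)=(2r-c)-(r-c)=r$.

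Next I would show that the intersection stays inside $\EF$. Multiplying $S\subseteq f_1E\oplus f_2E$ by $g$ sends $f_2E$ back into $f_1E$ but sends $f_1E$ into the \emph{new} block $N_1=gf_1E=(f_1^2/f_2)E$; symmetrically $g^{-1}$ produces a block $N_2=(f_2^2/f_1)E$. Thus $S+gS\subseteq\EF\oplus N_1$ while $S+g^{-1}S\subseteq\EF\oplus N_2$. The genericity input is that $\EF,N_1,N_2$ are in direct sum, equivalently that the product space $F^3E$ attains its maximal dimension $4r$ (note $f_1f_2(\EF\oplus N_1\oplus N_2)=F^3E$); this is what the hypothesis $m\ge 3rd-2$ secures through Proposition~\ref{prop:DimensionProductSubspace}. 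Granting it, any $v$ in the intersection must have vanishing $N_1$- and $N_2$-components, so $v\in\EF$; writing $v=s+gs'$ and tracking the $N_1$-component $g\pi_1(s')$ forces $\pi_1(s')=0$, i.e. $s'\in S\cap f_2E$, whence $f_{decode}(S)\subseteq S+g(S\cap f_2E)\subseteq\EF$.

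The heart of the argument is to decide when the recovered space is all of $\EF$. Here $g(S\cap f_2E)\subseteq f_1E$ has dimension $r-c$, as does $S\cap f_1E$, and I would compute $\dim\big(S+g(S\cap f_2E)\big)=\dim S+(r-c)-\dim\big(g(S\cap f_2E)\cap(S\cap f_1E)\big)$, so that $S+g(S\cap f_2E)=\EF$ holds exactly when the two $(r-c)$-dimensional subspaces $g(S\cap f_2E)$ and $S\cap f_1E$ of the $r$-dimensional space $f_1E$ meet in the generic dimension $r-2c$. This needs $r-2c\ge 0$, i.e. $r\le\frac23(n-k)$ after substituting $c=2r-(n-k)$, which is precisely the source of the threshold $\Floor{\frac23(n-k)}$. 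When this spanning holds, $g(S\cap f_2E)\subseteq gS$ gives $\EF\subseteq S+gS$ and symmetrically $\EF\subseteq S+g^{-1}S$, so the intersection equals $\EF$ and a single application of $f_{decode}$ recovers the product space.

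Finally I would estimate the failure probability. Modelling $g(S\cap f_2E)$ as a random $(r-c)$-dimensional subspace of $f_1E$ relative to the fixed subspace $S\cap f_1E$ (legitimate because $E$ and $F$ are drawn at random), failure is the event that the two meet in dimension exceeding $r-2c$; passing to the quotient $f_1E/(S\cap f_1E)\cong\Fq^{c}$, this is exactly the event that $r-c$ random vectors fail to span $\Fq^{c}$. By the counting behind the full-rank lemma of Section~\ref{sec:basicAlgo}, the dominant term of this probability is $\frac{q^{c}-1}{q-1}\,q^{-(r-c)}\approx\frac{q^{2c-r}}{q-1}=\frac{q^{3r-2(n-k)}}{q-1}$, yielding the stated success probability $\approx 1-\frac{q^{3r-2(n-k)}}{q-1}$ in the general case. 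At the extremal weight $r=\frac23(n-k)$ one has $r-c=c$, so the event becomes the invertibility of a random $c\times c$ matrix over $\Fq$, of probability $\prod_{i=1}^{c}(1-q^{-i})$; for $q=2$ and $c$ large this tends to $\prod_{i\ge1}(1-2^{-i})\approx 0.29$. The main obstacle I anticipate is justifying the uniformity/independence that lets these subspaces be treated as random so that the clean random-matrix count applies, and handling the extremal case by this exact product rather than through the general-case bound, which is vacuous there.
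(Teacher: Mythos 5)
Your proposal is correct: it reaches the paper's threshold $r\leqslant\Floor{\frac{2}{3}(n-k)}$, the asymptotic success probability $1-\frac{q^{3r-2(n-k)}}{q-1}$, and the binary extremal constant $\approx 0.29$, but by a genuinely finer route than the paper's own proof. The paper stays at the level of the $3r$-dimensional ambient space $\EF+f_1f_2^{-1}\EF$: it observes that the only expansion is $(S+f_1f_2^{-1}S)\cap(S+f_2f_1^{-1}S)$, that both sums have equal dimension because one is $f_1f_2^{-1}$ times the other, deduces the threshold from $\dim(S+f_1f_2^{-1}S)\leqslant 2(n-k)$, and then models $S$ and $f_1f_2^{-1}S$ as two independent random $(n-k)$-dimensional subspaces of a $3r$-dimensional space, invoking the closed formula of Proposition 3.2 of \cite{H14} for the distribution of the intersection dimension and the series expansion of the appendix; the binary extremal case is the limit of the Gaussian binomial ratio $2^{(n-k)^2}/\cgdeux{n-k}{2(n-k)}$. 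You instead descend into the block decomposition $\EF=f_1E\oplus f_2E$ and reduce success to the two $(r-c)$-dimensional subspaces $S\cap f_1E$ and $g(S\cap f_2E)$ of the $r$-dimensional block $f_1E$ meeting in the generic dimension $r-2c$ (with $g=f_1f_2^{-1}$); passing to the $c$-dimensional quotient turns failure into the event that $r-c$ random vectors do not span $\Fq^c$, so both the dominant term $\frac{q^c-1}{q-1}q^{-(r-c)}$ and the extremal constant $\prod_{i\geqslant 1}(1-2^{-i})\approx 0.2887$ come from elementary rank counting rather than the external subspace-intersection formula. The two success criteria are equivalent: under your genericity hypothesis one checks $S\cap gS=(S\cap f_1E)\cap g(S\cap f_2E)$, so $\dim(S+gS)=3r$ holds exactly when your intersection condition does, and your "symmetric" step is legitimate since multiplication by $g$ identifies the two intersections. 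Your version buys two things the paper leaves implicit: an explicit proof that $f_{decode}(S)\subseteq\EF$, via the direct sum with the blocks $N_1=(f_1^2/f_2)E$ and $N_2=(f_2^2/f_1)E$ (i.e.\ maximality of $F^3E$, which cleanly explains the requirement that $m$ be large), and a self-contained probability estimate; the paper's version is shorter and yields an exact non-asymptotic formula in one step. Both arguments rest on an unproved uniformity heuristic — independence of $S$ and $f_1f_2^{-1}S$ in the paper, uniformity of the projected block in yours — and you flag this honestly.
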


\proof
%
%
$d=2 \implies \dim \EF = 2r$.\\
Thus, the dimension of $\EF+ f_1f_2^{-1}\EF$ is at most $3r$ for $f_1 E \subset \EF \cap f_1f_2^{-1}\EF$. The only expansion that can be done is $S \leftarrow (S + f_1f_2^{-1}S) \cap (S + f_2f_1^{-1}S)$, hence there is no iterative effect.

 Since $(S + f_1f_2^{-1}S) = f_1f_2^{-1}(S + f_2f_1^{-1}S)$, the dimension of these two vectors spaces is the same. If  $ \dim (S + f_1 f_2^{-1}S) < 3r$, then it is impossible to find the whole space $\EF$, which leads to a decoding failure. Since the dimension of $S$ is at most $n-k$, the dimension of $S + f_if_j^{-1}S$ is at most $2(n-k)$, which implies $3r \leqslant 2(n-k)$. Hence $r \leqslant \Floor{\frac{2}{3}(n-k)}$.
 
 Considering that $S$ and $f_1f_2^{-1}S$ are independent, the probability of success is the probability that the sum of two subspaces of dimension $n-k$ of $\EF + f_1f_2^{-1}\EF$ generates the whole space. So we have:
\begin{IEEEeqnarray}{rCl}
p_{success} & =& \Prob (S + f_1 f_2^{-1}S = \EF + f_1f_2^{-1}\EF)  \nonumber \\
& = & \Prob (\dim (S + f_1 f_2^{-1}S) = 3r) \nonumber \\
& =& \Prob (2(n-k) - \dim S \cap (f_1 f_2^{-1}S) = 3r) \nonumber \\
& =& \Prob (\dim S \cap (f_1 f_2^{-1}S) = 2(n-k) - 3r). \nonumber
\end{IEEEeqnarray}
The formula for computing this probability can be found in \cite{H14} (Proposition 3.2). We obtain:
\begin{IEEEeqnarray}{rCl}
p_{success} & =& \frac{\cg{2(n-k)-3r}{n-k}q^{(3r-(n-k))^2}}{\cg{n-k}{3r}} \nonumber \\
 & =& \frac{\cg{3r-(n-k)}{n-k}q^{(3r-(n-k))^2}}{\cg{3r-(n-k)}{3r}}. \label{eqn:ExactFormulaProbSucces|d=2}
\end{IEEEeqnarray}
By applying the series expansion for $P_{a,b}(n)$  (Appendix \ref{app:impact_m},page \pageref{DvptLimiteProba}) with $n=3r, a=b=3r-(n-k)$, we obtain:
\begin{IEEEeqnarray}{rCl}
p_{success} & \approx & 1-\frac{q^{-3r}(q^{3r-(n-k)}-1)^2}{q-1}  \nonumber \\
 & \approx & 1- \frac{q^{3r-2(n-k)}}{q-1}. \nonumber
\end{IEEEeqnarray}

In the case $q=2$ and $3r=2(n-k)$, this approximation is incorrect. According to Equation \eqref{eqn:ExactFormulaProbSucces|d=2},
\begin{IEEEeqnarray}{rCl}
p_{success} & = & \frac{\cgdeux{n-k}{n-k} 2^{(n-k)^2}}{\cgdeux{n-k}{3r}} \nonumber \\
& = & \frac{2^{(n-k)^2}}{\cgdeux{n-k}{2(n-k)}} \nonumber \\
& \longrightarrow &  0.29 \text{ when } (n-k)\rightarrow \infty \text{ (the convergence is very fast)}.\nonumber
\end{IEEEeqnarray}
\qed

In figure \ref{simud2}, we present simulation results comparing the observed probability of failure and $q^{-1}$.

\begin{figure}
        \centering
        \includegraphics[scale=0.7]{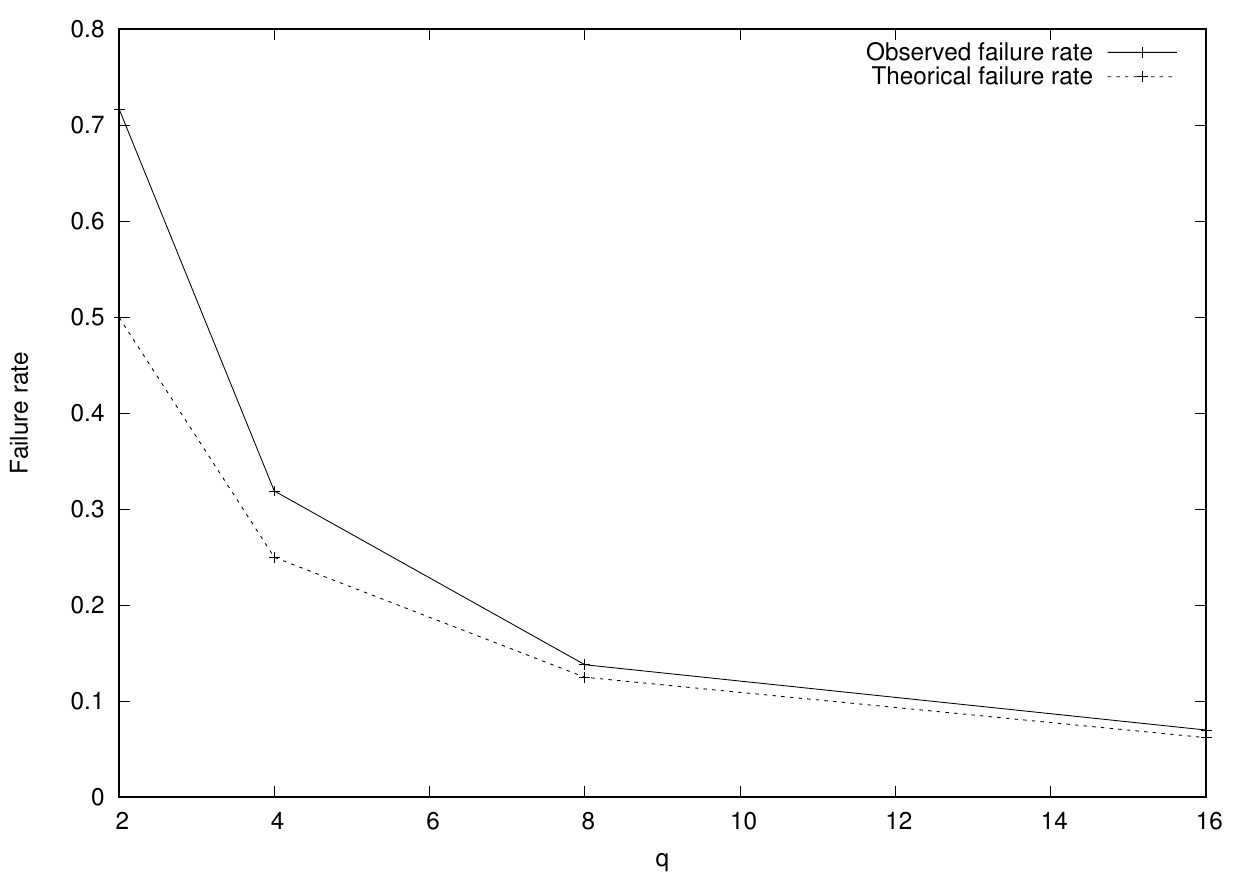}
        \caption{Results for $d=2$, $n=30$, $k=15$, $r=10$. We can observe the success rate of $\approx 0.29$ for $q=2$.}
        \label{simud2}
\end{figure}

\begin{proposition}
Under the assumption that $\dim (S_i \cap E) = r-c$ for $1 \leqslant i \leqslant d$, $f_{decode}$ allows to decode errors up to $r \leqslant \frac{n-k}{d-1}$ for $d>2$, using the fact that the algorithm functions in an iterative way.
\end{proposition}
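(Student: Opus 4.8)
The plan is to reduce the claimed weight bound to a statement about the codimension of $S$ inside $\EF$, and then to show that iterating $f_{decode}$ closes this codimension exactly when it does not exceed $r$. Throughout I keep the standing assumption $\dim \EF = rd$ and work in the generic situation where the $d$ blocks $f_1E,\dots,f_dE$ are in direct sum, so that $\EF=\bigoplus_{i=1}^d f_iE$. Writing $\dim S = n-k$ (the generic value, since $S$ is spanned by the $n-k$ syndrome coordinates and $rd>n-k$ in the regime of interest), the codimension is $c=\dim\EF-\dim S = rd-(n-k)$. A short computation shows that $r\leqslant \frac{n-k}{d-1}$ is equivalent to $c\leqslant r$, so the goal becomes: \emph{if $c\leqslant r$, then the iterative algorithm of Figure~\ref{fig:expansion1} recovers all of $\EF$}.

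Next I would isolate the elementary recovery step. For $i\neq j$ we have $f_iS_j=f_if_j^{-1}S\subseteq f_if_j^{-1}\EF$, and by the analogue of Proposition~\ref{prop:DimensionIntersection} one has $\EF\cap f_if_j^{-1}\EF=f_iE$ generically; moreover $f_iS_j\cap\EF\supseteq f_i(S_j\cap E)$, a subspace of the block $f_iE$ of dimension at least $r-c\geqslant 0$ by Proposition~\ref{prop:dimSi}. Intersecting with a second sum $S+f_kS_l$ is what discards the components lying outside $\EF$: the two shifts $f_if_j^{-1}\EF$ and $f_kf_l^{-1}\EF$ meet $\EF$ and one another only in the expected small subspaces, by the product and intersection estimates of Section~\ref{sec:ProductSubspaces}. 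Hence one expansion outputs a subspace of $\EF$ containing $S$ together with the recovered block vectors $f_i(S_j\cap E)$, and a full pass over the admissible $(i,j,k,l)$ enlarges $S$ to contain $\sum_i f_i R$ with $R=\sum_j(S_j\cap E)\subseteq E$. Proposition~\ref{prop:DimensionProductSubspace} guarantees the dimension hypotheses used here with overwhelming probability once $m$ is large.

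I would then set up the iteration. After one pass each block $f_jE$ contains $f_jR$, so every $S_j\cap E$ is enlarged to contain $R$; feeding this back, the following pass produces $\sum_i f_iR'$ with $R'=\sum_j(S_j\cap E)\supseteq R$, and so on. The dimension of $S$ is non-decreasing and bounded by $rd$, so the process stabilises, and the heart of the argument is to show that, under $c\leqslant r$, the stable value is the whole of $\EF$: as long as $S\subsetneq\EF$ a strict increase occurs, forcing termination with $S=\EF$.

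This last point is where I expect the main difficulty. If one tracks only the block-diagonal contributions $f_i(S_j\cap E)$, the recovery saturates at the weaker condition $d(r-c)\geqslant r$ and the iteration stalls; the improvement up to $c\leqslant r$ must come from the \emph{mixed} vectors of $\EF$ that the intersection $(S+f_iS_j)\cap(S+f_kS_l)$ yields beyond the purely block-diagonal ones. Controlling these mixed contributions—showing that, under the genericity of $E$ and $F$ afforded by Propositions~\ref{prop:DimensionProductSubspace} and~\ref{prop:dimSi}, every stalled configuration with $S\subsetneq\EF$ still admits a pair $(i,j,k,l)$ whose expansion produces a vector of $\EF\setminus S$ whenever $c\leqslant r$—is the step I would spend the most care on. The boundary regime $c=r$, analogous to the constant $\approx 0.29$ obtained for $q=2$ in the $d=2$ case treated above, would be handled separately, since there $\dim(S_i\cap E)$ can degenerate to $0$ and the conclusion holds only probabilistically.
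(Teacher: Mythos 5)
Your proposal follows essentially the same route as the paper: the paper's entire proof consists of your first reduction --- under independence of the $n-k$ syndrome coordinates, $c = rd-(n-k)$, and $c<r \iff r(d-1)<n-k \iff r<\frac{n-k}{d-1}$ --- combined with the informal observation that $c<r$ guarantees $\dim(S_i\cap E)\geqslant r-c>0$, so newly recovered vectors can be fed back into expansions with all tuples $(i,j,k,l)$, including previously used ones. The ``main difficulty'' you flag (ruling out stalled configurations with $S\subsetneq \EF$) is not resolved in the paper either --- the authors state immediately afterwards that a theoretical failure analysis is difficult because of the iterative effect and appeal to simulations showing the bound is attained in practice --- and your separate treatment of the boundary $c=r$ is if anything more careful than the paper, whose chain of equivalences is strict even though the statement is phrased with $r\leqslant\frac{n-k}{d-1}$.
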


\proof
For $d>2$, the algorithm may recover $\EF$ as long as $c<r$ (i.e there are vectors of $E$ in $S_i$) using the iterative effect, so the algorithm does not have to recover the whole product space at a time : if it recovers one new vector then it can be added to $S$ and used with different values of $(i, j, k, l)$ (even those that have already been used) to produce more vectors, hence the only condition for that method to work is $c<r$ in order to have vectors of $E$ in $S_i$.

If the $n-k$ syndrome vectors are independent, then $c$ is $rd - (n-k)$, which gives the result:

\begin{IEEEeqnarray}{rCl}
c < r & \Leftrightarrow & rd-(n-k)<r \nonumber \\
      & \Leftrightarrow & r(d-1)<n-k \nonumber \\
      & \Leftrightarrow & r<\frac{n-k}{d-1}. \nonumber
\end{IEEEeqnarray}
\qed

Although giving a theorical probability of failure is difficult in this case since there is an iterative effect, simulations show that in practice we attain this bound.

\paragraph{Minimal value of $m$}

\begin{proposition}
$f_{decode}$ recovers new vectors of $\EF$ only if $m > 3rd - 2$
\end{proposition}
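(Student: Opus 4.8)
The plan is to locate all of the arithmetic performed by $f_{decode}$ inside a single product space, to bound the dimension of that space, and then to observe that a genuinely new vector of $\EF$ can emerge from the intersection $(S+f_iS_j)\cap(S+f_kS_l)$ only when this ambient space does not exhaust all of $\Fqm$. First I would note that, because $S_j=f_j^{-1}S$ and $S\subseteq\EF$, we have $f_iS_j=f_if_j^{-1}S\subseteq f_if_j^{-1}\EF$, and likewise $f_kS_l\subseteq f_kf_l^{-1}\EF$. Hence every vector ever produced by the function lies in
\[
V \;=\; \EF + f_if_j^{-1}\EF + f_kf_l^{-1}\EF .
\]
A new vector is recovered exactly when $(S+f_iS_j)\cap(S+f_kS_l)\supsetneq S$ and meets $\EF$ outside $S$; arguing with Dedekind's modular law (as in the correctness proof, using that $\EF\cap f_iS_j=f_i(S_j\cap E)$, cf.\ Proposition~\ref{prop:dimSi}) I would show that this forces the two shifted copies $f_if_j^{-1}\EF$ and $f_kf_l^{-1}\EF$ to sit in sufficiently general position relative to $\EF$ inside $\Fqm$, which is possible only if $\Fqm$ is large enough to contain $V$ with its generic dimension.

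The key computation is the dimension of $V$. Using Definition~\ref{def:ProductSubspace} I would factor $V=E\cdot G$, where
\[
G \;=\; F + f_if_j^{-1}F + f_kf_l^{-1}F ,
\]
since $f_if_j^{-1}\EF = E\cdot(f_if_j^{-1}F)$ and similarly for the last summand. Each of $F$, $f_if_j^{-1}F$, $f_kf_l^{-1}F$ has dimension $d$, but they are not independent. Since $f_i=f_if_j^{-1}f_j\in F\cap f_if_j^{-1}F$, we get $\dim(F+f_if_j^{-1}F)\le 2d-1$; and since $f_k\in F\subseteq F+f_if_j^{-1}F$ while also $f_k=f_kf_l^{-1}f_l\in f_kf_l^{-1}F$, adjoining the third space raises the dimension by at most $d-1$, so $\dim G\le 3d-2$. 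This "$3d-2$" is precisely the origin of the threshold. Feeding $G$ (fixed) and $E$ (random of dimension $r$) into Proposition~\ref{prop:DimensionProductSubspace}, a dimension count of the ambient product space $V=EG$ then shows that $V$ can avoid filling $\Fqm$ — and hence that the intersection can single out new vectors of $\EF$ rather than collapse onto spurious directions — only when $m$ exceeds the stated bound $3rd-2$.

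The main obstacle I expect is the converse, necessary-condition direction: showing rigorously that when $m\le 3rd-2$ the function produces no new vector of $\EF$ at all. The delicate point is that a too-small $\Fqm$ forces $\dim(S+f_iS_j)+\dim(S+f_kS_l)$ to overflow $m$, so that their intersection is artificially enlarged by vectors lying outside $\EF$; I would need to argue that such forced intersections never contribute elements of $\EF\setminus S$, and that this remains true across the iterations of Algorithm~\ref{fig:expansion1}, where freshly added vectors could in principle alter the relevant dimensions. Controlling this interaction between the generic dimension of $V$ and the iterative enlargement of $S$ is where the care is needed; the dimension bound $\dim G\le 3d-2$ itself is routine once the factorisation $V=EG$ is in place.
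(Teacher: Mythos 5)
There is a genuine gap here, and it is quantitative: your dimension count does not reach the stated threshold. The factorisation $V=EG$ with $G=F+f_if_j^{-1}F+f_kf_l^{-1}F$ and the bound $\dim G\leqslant 3d-2$ are both correct, but feeding $G$ into Proposition~\ref{prop:DimensionProductSubspace} bounds the ambient space by $\dim V\leqslant r(3d-2)=3rd-2r$, so your criterion ``$V$ does not exhaust $\Fqm$'' yields at best $m>3rd-2r$, strictly weaker than the claimed $m>3rd-2$ whenever $r>1$. The apparent match between your ``$3d-2$'' and the proposition's ``$3rd-2$'' is a numerical coincidence: in the paper's proof the $-2$ has a completely different origin, namely the codimension $c=\dim\EF-\dim S$, a quantity your argument never tracks. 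The paper's accounting is as follows: by Proposition~\ref{prop:dimSi}, $f_j^{-1}S$ contains at least $r-c$ independent vectors of $E$ (worst case exactly $r-c$), so $f_if_j^{-1}S$ contributes $r-c$ vectors of $\EF$ plus leftover vectors treated as random, giving $\dim(S+f_iS_j)\leqslant 2rd-r+c$. The obstruction is then not whether $V$ fills $\Fqm$, but whether the \emph{forced} intersection of the two expanded spaces exceeds $\dim\EF=rd$: requiring
\[
\dim(S+f_iS_j)+\dim(S+f_kS_l)\leqslant m+rd
\]
gives $m\geqslant 3rd-2r+2c$, and since the expansion only makes sense while $S_j$ still contains vectors of $E$, i.e.\ $c\leqslant r-1$, the worst case $c=r-1$ produces exactly $m\geqslant 3rd-2$. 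Your condition reproduces only the $c=0$ case of this formula --- precisely the case in which $S=\EF$ and no expansion is needed.

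A secondary problem is that your confinement observation, taken seriously, works \emph{against} the bound rather than for it: since all the ``junk'' vectors of $f_iS_j$ and $f_kS_l$ lie inside $V$ with $\dim V\leqslant 3rd-2r$, intersections computed inside $V$ are forced to be \emph{larger} than the paper's model (junk uniform in $\Fqm$) predicts, so ``$V$ small'' is not the mechanism by which new vectors of $\EF$ are isolated. Finally, the rigorous converse you flag as the main obstacle is not actually resolved in the paper either: the paper's proof is a worst-case heuristic count of the same kind you attempt, and does not handle the iterative enlargement of $S$ rigorously. What your proposal is missing is not extra rigour but the $c$-dependent bookkeeping above, without which the threshold $3rd-2$ cannot be derived.
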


\proof
To produce vectors of $\EF$, we compute vector spaces of the form $f_if_j^{-1}S$. If we suppose that $f_j^{-1}S$ contains exactly $r-c$ independent vectors of $E$, which is the worst case, then we get $r-c$ vectors of $\EF$ and $r(d-1) + c$ random vectors in $S'_{ij} = f_if_j^{-1}S$. 

By adding this vector space to $S$, the dimension cannot exceed $rd + r(d-1) + c = 2rd - r + c$ i.e the dimension of $\EF$ along with the random vectors in $S'_{ij}$.

The intersection between $S'_{ij}$ and $S'_{kl}$ needs to remove all the random vectors in order to recover vectors that are $\in \EF$, whence:

\begin{IEEEeqnarray}{rCl}
\dim(S'_{ij}) + \dim(S'_{kl}) \leqslant m + \dim(S'_{ij} \cap S'_{kl}) & \Leftrightarrow & 2(2rd - r + c) \leqslant m + rd \nonumber \\
                                                                    & \Leftrightarrow & m \geqslant 4rd - 2r + 2c - rd \nonumber \\
                                                                    & \Leftrightarrow & m \geqslant 3rd - 2r + 2c. \nonumber
\end{IEEEeqnarray}

Since $c$ can not be higher than $r-1$ (there would be no vectors of $E$ in $f_j^{-1}S$), the inequality becomes $m \geqslant 3rd - 2$.
\qed

The value of $m$ needed by this expansion function is pretty high and does not fit the parameters used for cryptography. However, it works for $d=2$, which is not the case of $f_{prob}$ studied in the next subsection.

\subsection{Reducing the decryption failure rate}

\subsubsection{Description of the algorithm}

In this subsection we will study $f_{prob}$ in a more detailed
way. 
\[
f_{prob} : S \mapsto S + FS_{ij}.
\]
This function fits the parameters used for cryptography and is
used to reduce the decoding failure rate. It uses the fact that
$S_{ij} = S_i \cap S_j$ contains at least $r-2c$ independent vectors of $E$ (see Proposition \ref{prop:dim Sij cap E}) and is a subspace of $E$ with a very high probability when $m$ is large enough. Thus, the subspace $FS_{ij}$ has a chance to contain a vector $x \in \EF$ and $x \notin S$.\\
\begin{algorithm}[H]
\KwIn{The syndrome space $S$ and the vector space $F$}
\KwOut{The expanded syndrome space $S'$, which may be $\EF$, or failure}

\Begin{
	\While{\textit{true}} {
		$tmp \leftarrow \dim S$ \;
		\For{Every $(i, j)$ such that $i \neq j$}{
			$S \leftarrow S + FS_{ij}$ \;
		}
		\If{$\dim S = rd$}{\Return{$S$} \;}
		\If{$\dim S = tmp$}{\Return{failure} \;}
	}
}

\caption{Syndrome space expansion using $f_{prob}$}
\label{fig:expansion2}
\end{algorithm}

\begin{proposition}\label{prop:dim Sij cap E}
$\dim (S_{ij} \cap E) \geqslant r-2c$ for all $i,j \in [1..d]$, with $c = \dim \EF - \dim S$.
\end{proposition}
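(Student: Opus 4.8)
The plan is to reduce this directly to Proposition~\ref{prop:dimSi}, which already controls $\dim(S_i \cap E)$ for a single index. The crucial observation is that intersecting with $E$ commutes with intersecting the $S_i$'s: since $S_{ij} = S_i \cap S_j$ by definition, we have the set-theoretic (hence subspace) identity
\[
S_{ij} \cap E = (S_i \cap S_j) \cap E = (S_i \cap E) \cap (S_j \cap E).
\]
This rewrites the quantity we want to bound as the intersection of two subspaces whose individual dimensions are already known from the earlier result.

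Next I would apply Proposition~\ref{prop:dimSi} twice, once for index $i$ and once for index $j$, to obtain $\dim(S_i \cap E) \geqslant r-c$ and $\dim(S_j \cap E) \geqslant r-c$. The key structural point is that both $S_i \cap E$ and $S_j \cap E$ are subspaces of $E$, so their sum is also contained in $E$; consequently $\dim\big((S_i \cap E) + (S_j \cap E)\big) \leqslant \dim E = r$. This containment is what keeps the ambient dimension small and makes the bound nontrivial.

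Finally I would combine these facts through the Grassmann dimension formula applied to the two subspaces $S_i \cap E$ and $S_j \cap E$:
\[
\dim\big((S_i \cap E) \cap (S_j \cap E)\big) = \dim(S_i \cap E) + \dim(S_j \cap E) - \dim\big((S_i \cap E)+(S_j \cap E)\big).
\]
Substituting the two lower bounds $r-c$ and the upper bound $r$ on the sum yields $\dim(S_{ij}\cap E) \geqslant (r-c)+(r-c)-r = r-2c$, which is exactly the claim.

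There is no real obstacle here: the argument is a routine two-step combination of Proposition~\ref{prop:dimSi} with the inclusion-exclusion formula for dimensions. The only point requiring any care is the first identity $S_{ij}\cap E = (S_i\cap E)\cap(S_j\cap E)$, which must be stated explicitly so that the bound on the sum being capped by $\dim E = r$ is justified; everything else is immediate arithmetic.
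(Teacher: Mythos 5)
Your proof is correct and follows essentially the same route as the paper's: both rewrite $S_{ij}\cap E$ as the intersection of $S_i\cap E$ and $S_j\cap E$, apply Proposition~\ref{prop:dimSi} to each factor, and conclude via the Grassmann dimension formula with the sum bounded by $\dim E = r$. Your version merely makes explicit the identity $(S_i\cap S_j)\cap E = (S_i\cap E)\cap(S_j\cap E)$, which the paper uses implicitly.
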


\proof
Using Proposition \ref{prop:dimSi}, a simple computation gives us the result:
\begin{IEEEeqnarray}{rCl}
\dim (S_{ij}\cap E)	&=& \dim (S_i\cap S_j \cap E)  \nonumber \\
					&=&  \dim (S_i \cap E) + \dim (S_j \cap E)	- \dim \big( (S_i \cap E)+(S_j \cap E)\big) \nonumber \\
					&\geqslant & 2(r-c) 					- r \nonumber \\
					&\geqslant & r-2c. \nonumber
\end{IEEEeqnarray}
\qed

\begin{theorem}
Let $\Prob(c=l)$ be the probability that $\dim \EF - \dim S = l$ and
let $\Prob_{c=l}(failure)$ be the probability that the syndrome space expansion algorithm does not recover $\EF$ when $c=l$.
Using the syndrome space expansion algorithm with $f_{prob}$, the probability of failure is :

$$\prob{c \geqslant \frac{r}{2}} + \sum\limits_{l=1}^{\Floor{\frac{r}{2}} - 1} \Prob(c=l) \times \Prob_{c=l}(failure).$$

\end{theorem}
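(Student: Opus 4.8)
The claim is a formula for the total probability of failure of the $f_{prob}$ expansion algorithm, expressed as a sum conditioning on the value of $c = \dim\EF - \dim S$. The plan is a clean case analysis on $c$, using the law of total probability together with the structural fact from Proposition~\ref{prop:dim Sij cap E} that $\dim(S_{ij}\cap E)\geqslant r-2c$.

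**The proof plan.** First I would partition the probability space according to the value of $c$, which ranges over the nonnegative integers. When $c=0$ we have $S=\EF$ already, so the basic algorithm succeeds and there is nothing to expand; this contributes $0$ to the failure probability. The key structural input is Proposition~\ref{prop:dim Sij cap E}: since $S_{ij}\cap E$ has dimension at least $r-2c$, the intersection contains genuine vectors of $E$ precisely when $r-2c>0$, i.e. when $c<\frac{r}{2}$. I would argue that whenever $c\geqslant\frac{r}{2}$ the lower bound $r-2c$ becomes nonpositive, so we can no longer guarantee that $S_{ij}$ contributes any new vector of $E$ to feed the iteration, and the algorithm must be counted as failing; this is why the term $\prob{c\geqslant\frac{r}{2}}$ appears as a block contributing its full probability to failure. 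For the intermediate range $1\leqslant l\leqslant \Floor{\frac{r}{2}}-1$, the algorithm has a genuine but imperfect chance of recovering $\EF$, so its contribution to the total failure is exactly $\Prob(c=l)\times\Prob_{c=l}(failure)$, by definition of the conditional failure probability.

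**Assembling the total.** Summing the contributions from the three regimes $c=0$, $1\leqslant c\leqslant\Floor{\frac{r}{2}}-1$, and $c\geqslant\frac{r}{2}$ via the law of total probability gives
\[
\prob{c \geqslant \tfrac{r}{2}} + \sum_{l=1}^{\Floor{\frac{r}{2}} - 1} \Prob(c=l)\,\Prob_{c=l}(failure),
\]
where the $c=0$ term vanishes and the high-$c$ terms are aggregated into the single block $\prob{c\geqslant\frac{r}{2}}$. The statement is essentially a bookkeeping identity once the threshold $c=\frac{r}{2}$ is justified, so the bulk of the argument is accounting rather than estimation.

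**The main obstacle.** The subtle point — and the step I expect to require the most care — is justifying that the regime $c\geqslant\frac{r}{2}$ is genuinely a \emph{certain} failure and hence enters with its full probability, rather than merely a regime of reduced success. This hinges on whether the bound in Proposition~\ref{prop:dim Sij cap E} is tight enough that $r-2c\leqslant 0$ really forecloses all avenues of recovery (including the iterative effect, where previously-recovered vectors might enlarge $S$ and thereby shrink the effective $c$ in later rounds). I would address this by observing that if no single $S_{ij}$ is guaranteed to contain a new vector of $E$, then the iteration cannot get started, so no enlargement of $S$ occurs; this is precisely the worst-case bookkeeping the theorem adopts, consistent with the paper's stated convention of analysing the worst case $\dim\EF=rd$. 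The remaining terms then follow immediately from the definitions of $\Prob(c=l)$ and $\Prob_{c=l}(failure)$.
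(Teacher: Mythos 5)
Your proposal follows essentially the same route as the paper's proof: a law-of-total-probability decomposition over the value of $c$, with the $c=0$ case contributing nothing, the intermediate range contributing $\Prob(c=l)\times\Prob_{c=l}(failure)$ by definition of the conditional probability, and the regime $c\geqslant\frac{r}{2}$ counted wholesale as failure because $S_{ij}$ then contains no guaranteed vectors of $E$ (the paper states this as ``in general $S_{ij}$ is empty so the syndrome space expansion algorithm will not recover $\EF$''). The extra care you take in flagging that $r-2c\leqslant 0$ only removes the guarantee rather than forcing failure is resolved exactly as the paper implicitly does, by adopting the pessimistic bookkeeping convention, so your argument matches the paper's proof in both structure and level of rigor.
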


\proof
The main source of decoding failures for the basic decoding algorithm is the fact that the syndrome coordinates do not generate the whole product space $\EF$, so we need to compute the global probability of not recovering $\EF$ using $f_{prob}$.

When $c \geqslant \frac{r}{2}$, in general $S_{ij}$ is empty so the syndrome space expansion algorithm will not recover $\EF$, which leads to a failure.

When $c < \frac{r}{2}$, the probability that the syndrome space expansion algorithm recovers $\EF$ is $\Prob_{c=l}(failure)$. We multiply this probability by the probability of $c$ being equal to $l$ to get the overall probability of not recovering $\EF$.
\qed

\begin{proposition} \label{prop:prob_c_eq_l}
$\Prob(c=l) \simeq q^{-l(n-k-rd+l)}$
\end{proposition}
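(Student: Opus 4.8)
The plan is to reduce the statement to a standard count of matrices of prescribed rank over $\Fq$, and then to isolate the dominant power of $q$. Throughout I work under the standing assumption $\dim \EF = rd$ and under the heuristic already invoked in the proof of Proposition~\ref{prop:failure prob}, namely that the $n-k$ syndrome coordinates $s_1,\dots,s_{n-k}$ behave as independent elements chosen uniformly at random in the product space $\EF$. Fixing an $\Fq$-basis of $\EF$, the coordinate vectors of the $s_i$ become the rows of a uniformly random matrix $\Mv \in \Fq^{(n-k)\times rd}$, and by definition $\dim S = \Rank \Mv$. Consequently the event $c = \dim\EF - \dim S = l$ is exactly $\Rank \Mv = rd-l$, so it suffices to estimate the probability that a uniform $(n-k)\times rd$ matrix over $\Fq$ has rank $rd-l$.

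Next I would invoke the classical formula for the number of matrices of a given rank over a finite field: the number of matrices in $\Fq^{(n-k)\times rd}$ of rank $\rho$ equals $\cg{\rho}{rd}\prod_{i=0}^{\rho-1}(q^{n-k}-q^{i})$, valid in the relevant regime $\rho \le \min(n-k,rd)$. Dividing by the total count $q^{(n-k)rd}$ and specialising to $\rho = rd-l$ gives the exact expression
\[
\Prob(c=l)=\frac{1}{q^{(n-k)rd}}\,\cg{rd-l}{rd}\,\prod_{i=0}^{rd-l-1}\bigl(q^{n-k}-q^{i}\bigr).
\]

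Then I would extract the leading power of $q$. Using the symmetry $\cg{rd-l}{rd}=\cg{l}{rd}$ together with the estimates $\cg{l}{rd}\simeq q^{l(rd-l)}$ (leading term of the Gaussian coefficient) and $\prod_{i=0}^{rd-l-1}(q^{n-k}-q^{i})\simeq q^{(n-k)(rd-l)}$ (approximating each factor $q^{n-k}-q^{i}$ by $q^{n-k}$), the exponent of $q$ becomes
\begin{IEEEeqnarray}{rCl}
l(rd-l)+(n-k)(rd-l)-(n-k)rd &=& l(rd-l)-(n-k)l \nonumber\\
&=& -l\bigl(n-k-rd+l\bigr), \nonumber
\end{IEEEeqnarray}
which is precisely the claimed bound $\Prob(c=l)\simeq q^{-l(n-k-rd+l)}$.

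The point requiring care is the meaning of $\simeq$: the two approximations above amount to replacing every factor $q^{j}-q^{i}$ with $i<j$ by $q^{j}$, i.e. to discarding the convergent correction factors $\prod_j(1-q^{-j})$. These corrections are bounded $\Theta(1)$ constants depending only on $q$ that tend to $1$ as $q$ grows, so the estimate is a leading-order (large-$q$) approximation rather than an identity; for the extreme case $q=2$ one should expect $O(1)$ multiplicative discrepancies, exactly the phenomenon already seen in the $q=2$ analysis of the $f_{decode}$ theorem. If an honest two-sided statement were wanted, I would bound the discarded products between $\prod_{j\geq1}(1-q^{-j})$ and $1$, turning $\simeq$ into explicit $q$-dependent constants; this is the only genuine obstacle, the rest being the bookkeeping of the exponent.
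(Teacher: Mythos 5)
Your proof is correct and takes essentially the same route as the paper: both reduce the event $c=l$ to a uniform $(n-k)\times rd$ matrix over $\Fq$ having rank $rd-l$, count matrices of prescribed rank, and retain only the leading power of $q$ before dividing by $q^{(n-k)rd}$ — the paper via the product formula cited from \cite{L06} with each factor $q^j$ dropped, you via the equivalent Gaussian-binomial-times-surjection count. Your closing remark bounding the discarded $\prod_j(1-q^{-j})$ factors is a somewhat more careful reading of the $\simeq$ than the paper's one-line approximation, but it is the same argument.
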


\proof

The probability that the $n-k$ syndrome coordinates generate a subspace of codimension $l$ of $\EF$ is the probability that a random $(n-k) \times rd$ matrix in $\Fq$ is of rank $rd - l$. From \cite{L06}, we have that $S_t$, the number of $m \times n$ matrix of rank $t$, is :

$$S_t = \prod_{j=0}^{t-1} \frac{(q^n - q^j)(q^m - q^j)}{(q^t - q^j)}.$$

We can approximate this formula by ignoring $q^j$, which leads to :

$$S_t \simeq \prod_{j=0}^{t-1} \frac{(q^n)(q^m)}{(q^t)} = \prod_{j=0}^{t-1} q^{n + m - t} = q^{t(n+m-t)}.$$

From that, we deduce the number of $(n-k) \times rd$ matrices of rank $rd - l$ :

$$q^{(rd-l)(n-k+rd-rd+l)} = q^{(rd-l)(n-k+l)}.$$

By dividing this quantity by the total number of $(n-k) \times rd$ matrices, we get the probability that a random matrix has rank $rd -l$ :

$$\frac{q^{(rd-l)(n-k+l)}}{q^{(n-k)rd}} = q^{(rd-l)(n-k+l) - (n-k)rd} = q^{lrd - l(n-k) - l^2} = q^{-l(n-k-rd+l)}.$$

Hence the result. \qed

\subsubsection{Case of codimension 1}

Before giving the probability of success of Algorithm \ref{fig:expansion2}, we need the following lemma:

\begin{lemma} \label{dimSicapE}
The dimension of $S_i \cap E$ is :
\begin{itemize}
\item $r$ with probability $p_r = \frac{\cg{r}{rd-1}}{\cg{r}{rd}} \approx q^{-r}$.
\item $r-1$ with probability $1-p_r \approx 1 - q^{-r}$.
\end{itemize}
\end{lemma}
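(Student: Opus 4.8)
The plan is to reduce the computation to a single counting problem about subspaces of the product space $\EF$. We are in the codimension-$1$ situation, so $c = 1$ and $\dim S = rd - 1$; since multiplication by the nonzero element $f_i^{-1}$ is an $\Fq$-linear bijection of $\Fqm$, we also have $\dim S_i = \dim S = rd-1$. First I would pin down the two possible values of $\dim(S_i \cap E)$: Proposition~\ref{prop:dimSi} with $c=1$ gives $\dim(S_i \cap E) \geqslant r-1$, while $S_i \cap E \subseteq E$ forces $\dim(S_i \cap E) \leqslant r$. Hence the dimension is either $r$ or $r-1$, and these two events are complementary, so it suffices to compute the probability $p_r$ of the first one.

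Next I would translate the event $\dim(S_i \cap E) = r$ into a containment statement. Since $\dim E = r$, this event is exactly $E \subseteq S_i$, and because $S_i = f_i^{-1} S$ this is equivalent to $f_i E \subseteq S$. Here $f_i E$ is a fixed $r$-dimensional subspace of $\EF$: it has dimension $r$ because $x \mapsto f_i x$ is bijective and $f_i \neq 0$, and it lies in $\EF$ because every product $f_i e$ with $e \in E$ belongs to the product space. Thus $p_r$ is the probability that the fixed $r$-dimensional subspace $f_i E$ is contained in $S$.

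Finally I would carry out the counting. Conditioned on $c=1$, the syndrome space $S$ is a hyperplane of the $rd$-dimensional space $\EF$, and by the $GL(\EF)$-invariance of the distribution of the (random) syndrome coordinates this hyperplane may be treated as uniform; equivalently, by the symmetry between a fixed hyperplane and a uniformly random $r$-subspace, $p_r$ equals the fraction of $r$-dimensional subspaces of $\EF$ that lie inside a fixed hyperplane. That fraction is the ratio of the number of $r$-subspaces of an $(rd-1)$-dimensional space to the number of $r$-subspaces of an $rd$-dimensional space, namely $\cg{r}{rd-1}/\cg{r}{rd}$. A short manipulation of the Gaussian coefficients (or the double-counting identity relating $r$-subspaces and hyperplanes) gives the closed form $\frac{q^{rd-r}-1}{q^{rd}-1}$, from which $p_r \approx q^{-r}$ follows immediately; the complementary event $\dim(S_i\cap E)=r-1$ then has probability $1-p_r \approx 1-q^{-r}$.

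The routine parts are the dimension bookkeeping and the Gaussian-coefficient estimate. The one genuinely delicate point, which I would state explicitly rather than belabour, is the probabilistic model: justifying that, given codimension $1$, the hyperplane $S$ behaves like a uniformly random hyperplane of $\EF$ (and hence that the relevant probability coincides with the clean subspace-counting ratio). This is where the heuristic ``the syndrome coordinates are random elements of $\EF$'' used in Proposition~\ref{prop:failure prob} is doing the real work, and it is the step most in need of care.
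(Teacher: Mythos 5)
Your proposal is correct and takes essentially the same route as the paper's proof: the paper likewise reduces via $\dim(S_i\cap E)=r \iff E\subset S_i \iff f_iE\subset S$, invokes Proposition~\ref{prop:dimSi} for the lower bound $r-1$, and computes $p_r=\cg{r}{rd-1}/\cg{r}{rd}$ by treating $S$ as a uniformly random $(rd-1)$-dimensional subspace of $\EF$ containing the fixed $r$-subspace $f_iE$. Your explicit flagging of the uniformity assumption on the hyperplane $S$ (which the paper asserts without comment) is a fair refinement but does not change the argument.
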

\begin{proof}
  We have:
$\dim S_i \cap E = r \iff E \subset S_i \iff f_iE \subset S$.\\
The space $f_iE$ is a subspace of dimension $r$ of $\EF$ and $S$ is a random subspace of dimension $rd-1$ of $\EF$, hence
\[
\Prob(\dim (S_i \cap E) = r) = \frac{\cg{r}{rd-1}}{\cg{r}{rd}} \approx q^{-r}
\]
which proves the first point.
According to Proposition \ref{prop:dimSi}, $\dim (S_i \cap E) \geqslant r-1$ which concludes the proof of this lemma.
\end{proof}

We can now give the probability of success of the decoding algorithm in the case $\dim S = \dim \EF -1$.

\begin{theorem} \label{failure_exp2}
Under the assumption $S_{ij} \subset E$ and $\dim S
= \dim \EF -1$, the syndrome space expansion algorithm
using $f_{prob}$ recovers $\EF$ with probability at least
$1-q^{(2-r)(d-2)}$.
\end{theorem}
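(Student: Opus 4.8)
The plan is to exploit that $S$ has codimension exactly $1$ in $\EF$: a single application of $f_{prob}$ producing any vector outside $S$ already gives $S + FS_{ij} = \EF$ (a subspace of $\EF$ strictly containing the hyperplane $S$ must be all of $\EF$), so the algorithm terminates successfully at its first iteration. Hence it suffices to show that, with probability at least $1 - q^{(2-r)(d-2)}$, there is at least one pair $(i,j)$ with $i \neq j$ for which $FS_{ij} \not\subset S$; equivalently, the algorithm fails precisely when $FS_{ij} \subset S$ for every such pair.

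The key reduction is to identify which generators of $FS_{ij}$ can possibly leave $S$. Since $S_{ij} = S_i \cap S_j \subset S_i = f_i^{-1}S$, we get $f_i S_{ij} \subset S$, and likewise $f_j S_{ij} \subset S$. Thus the contributions $f_i S_{ij}$ and $f_j S_{ij}$ to $FS_{ij} = \sum_k f_k S_{ij}$ are automatically inside $S$, and
\[
FS_{ij} \subset S \iff f_k S_{ij} \subset S \text{ for all } k \notin \{i,j\} \iff S_{ij} \subset S_k \text{ for all } k \notin \{i,j\}.
\]
Consequently the algorithm fails only if, for one fixed pair — say $(1,2)$ — we have $S_{12} \subset S_k$ for every $k \in \{3,\dots,d\}$; bounding the failure probability by this single event costs nothing and isolates exactly $d-2$ constraints.

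Next I would estimate each constraint. Under the hypothesis $S_{ij} \subset E$, Lemma~\ref{dimSicapE} gives $\dim(S_i \cap E) = r-1$ with probability $\approx 1 - q^{-r}$, so that $S_{12} = (S_1 \cap E) \cap (S_2 \cap E)$ is generically an $(r-2)$-dimensional subspace of $E$. The condition $S_{12} \subset S_k$ then reads $S_{12} \subset S_k \cap E$, i.e.\ the containment of a fixed $(r-2)$-dimensional subspace of $E$ in the $(r-1)$-dimensional subspace $S_k \cap E$. Modelling $S_k \cap E$ as a uniformly random hyperplane of $E$ and counting with Gaussian coefficients, the number of hyperplanes of $E$ containing $S_{12}$ is $q+1$ out of $(q^r-1)/(q-1)$, giving
\[
\prob{S_{12} \subset S_k} \approx \frac{q^2 - 1}{q^r - 1} \approx q^{2-r},
\]
the low-probability branch $\dim(S_k \cap E) = r$ contributing only a negligible $q^{-r}$ term. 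Treating the $d-2$ events $\{S_{12} \subset S_k\}_{k=3}^d$ as independent then yields a failure probability of at most $(q^{2-r})^{d-2} = q^{(2-r)(d-2)}$, which is the claimed bound (and note it degenerates to a vacuous statement for $d=2$, consistent with $f_{prob}$ being unable to escape $S$ when there is no third direction $f_k$).

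The main obstacle is the final probabilistic step: the events $S_{12} \subset S_k$ are not literally independent, since all the $S_k$ and $S_{12}$ itself are deduced from the same syndrome space $S$ together with the fixed directions $f_k$. A fully rigorous treatment would have to justify — or replace by a genericity assumption on the randomness of the error — that the subspaces $S_k \cap E$ behave like independent random hyperplanes of $E$, and would also need to absorb cleanly the $\dim(S_i \cap E) = r$ branch of Lemma~\ref{dimSicapE} and the genericity of $\dim S_{12} = r-2$. These are exactly the points where the conclusion is an approximation (``with probability at least'') rather than an exact identity.
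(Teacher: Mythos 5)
Your proof is correct and takes essentially the same route as the paper's: both exploit that in codimension $1$ any single new vector immediately yields $\EF$, reduce failure to $d-2$ coincidence events each of probability $\approx q^{2-r}$, and combine them under a heuristic independence assumption (which the paper leaves implicit, while you flag it explicitly). The only difference is bookkeeping: you phrase the events as containments $S_{12}\subset S_k$ of a fixed $(r-2)$-dimensional subspace in the hyperplane $S_k\cap E$ of $E$, whereas the paper phrases them as the chain of equalities $S_{12}=S_{23}=\cdots=S_{d-1,d}$ of $(r-2)$-dimensional subspaces of the $(r-1)$-dimensional space $S_j\cap E$ — both counts give $q^{2-r}$ per event and the same bound $q^{(2-r)(d-2)}$.
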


\proof 

To find an upper bound on the probability of failure, we are going to
focus on the case where $\dim(S_{ij} \cap E) = r-2$. Indeed, Lemma
\ref{dimSicapE} shows that the typical dimension of $S_i \cap E$ is
$r-1$. Furthermore, when there exists $i$ such that $\dim(S_i \cap E)
= r$, the algorithm has access to more vectors of $E$ and so this
leads to a smaller probability of failure. The same thing goes for the
case where there exists $\dim(S_{ij}) = r-1$ instead of $r-2$ which is
the expected size, so we can consider that $\dim(S_{ij}) = r-2$ since
this will lead to an upper bound of the probability of failure.

We are now going to study the dimension of $S_{ij} + S_{jk} + S_{ik}$. There are two possibilities:

\begin{itemize}
\item If $S_{ij} = S_{jk}$, then necessarily we have $S_{ik} = S_{ij} = S_{jk}$, because the $r-2$ vectors of $S_{ij} + S_{jk}$ are both in $S_i$ and $S_k$. This happens with probability $q^{2-r}$ : the probability that two subspaces of dimension $r-2$ ($S_{ij}$ and $S_{jk}$) of a vector space of dimension $r-1$ ($S_j \cap E$) are equal.
\item If $S_{ij} \neq S_{jk}$, then we have $S_{ik} + S_{ij} + S_{jk} = E$ : we know that $S_{ik} + S_{ij} = S_{i} \cap E$, and that $S_{jk} \not\subset S_i$ (otherwise we would be in the first case). Hence $S_{ik} + S_{ij} + S_{jk} = E$.
\end{itemize}

If there exists a set $\{S_{i_1j_1}, \dots, S_{i_nj_n}\}$ such that their direct sum equals $E$, then the algorithm will succeed since it will have added exactly $\EF$ to $S$ at one point. If such a set does not exist, then it means that every $S_{ij}$ are equal to each other. In particular, every vector of $E$ the algorithm as access to is $\in S_1, \dots, \in S_d$, which means that when we multiply this vector by $F$, the resulting vectors will already be in $S$ and we will obtain no information, causing the algorithm to fail.

The probability that all the $S_{ij}$ are equal is the probability of $S_{12} = S_{23} = \dots = S_{d-1,d}$. Since $d-2$ equalities are needed, the probability of this happening is $q^{(2-r)(d-2)}$, which gives the probability of failure. \qed

\paragraph{Impact of $m$ on decoding.}

If the value of $m$ is not high enough, then $S_{ij}$ will contain not only vectors of $E$ but also random vectors. When this happens, the $S_{ij}$ cannot be used by the expansion algorithm because it would add random vectors to $S$.

\begin{prop} \label{prop:impact_m}
Let $E$ be a subspace of $\Fqm$ of dimension $r$. Let $F_1$ and $F_2$ be two subspaces of $\Fqm$ such that \begin{itemize}
\item $\dim F_i = d_i$
\item $\dim (E \cap F_i) = r -c_i$
\end{itemize}
Then $\dim (F_1 \cap  F_2) \geqslant r-c_1-c_2$ and we have
\[
\prob{(F_1\cap F_2) \subset E} \geqslant \frac{\cg{d_2-r+c_2}{m-d_1-c_1}}{\cg{d_2-r+c_2}{m-r}}q^{(d_1-r+c_1)(d_2-r+c_2)}.
\]
If $m \geqslant d_1+d_2$, we have $\prob{(F_1\cap F_2) \subset E} \geqslant 1-\frac{q^{-m+r}(q^{d_1-r+c_1}-1)(q^{d_2-r+c_2}-1)}{q-1} + \OO{q^{-2m}}$ when $m \rightarrow \infty$.
\end{prop}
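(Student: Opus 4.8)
The plan is to prove the statement in three stages: the deterministic dimension bound, the exact probability of a convenient sub-event, and finally the asymptotic expansion of the resulting Gaussian-binomial ratio.

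First, for the dimension bound, I would observe that $E\cap F_1\cap F_2=(E\cap F_1)\cap(E\cap F_2)$ is an intersection of two subspaces of $E$, of dimensions $r-c_1$ and $r-c_2$ inside the $r$-dimensional space $E$; Grassmann's identity gives $\dim(E\cap F_1\cap F_2)\geq (r-c_1)+(r-c_2)-r=r-c_1-c_2$. Since $F_1\cap F_2\supseteq E\cap F_1\cap F_2$, this already yields $\dim(F_1\cap F_2)\geq r-c_1-c_2$.

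For the probability bound, I would fix a complement $W$ of $E$ in $\Fqm$, so $\dim W=m-r$, and let $\pi$ be the projection onto $W$ with kernel $E$. Because $\ker(\pi|_{F_i})=F_i\cap E$ has dimension $r-c_i$, the image $\pi(F_i)$ has dimension $d_i-(r-c_i)=d_i-r+c_i$; write $k_i=d_i-r+c_i$. The key observation is the implication $\pi(F_1)\cap\pi(F_2)=\{0\}\implies (F_1\cap F_2)\subset E$: indeed any $x\in F_1\cap F_2$ has $\pi(x)\in\pi(F_1)\cap\pi(F_2)=\{0\}$, hence $x\in\ker\pi=E$. Therefore $\prob{(F_1\cap F_2)\subset E}\geq \prob{\pi(F_1)\cap\pi(F_2)=\{0\}}$, and it remains to evaluate the latter. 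Modelling $\pi(F_2)$ as a uniformly random subspace of $W$ of dimension $k_2$ with $\pi(F_1)$ a fixed $k_1$-dimensional subspace, I would invoke the standard count that the number of $k_2$-subspaces of a space of dimension $m-r$ meeting a fixed $k_1$-subspace trivially equals $q^{k_1k_2}\cg{k_2}{m-r-k_1}$. Dividing by $\cg{k_2}{m-r}$ and using $m-r-k_1=m-d_1-c_1$ produces exactly $\frac{\cg{d_2-r+c_2}{m-d_1-c_1}}{\cg{d_2-r+c_2}{m-r}}q^{(d_1-r+c_1)(d_2-r+c_2)}$; the inequality sign is precisely the price of passing to the sub-event.

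For the asymptotics when $m\geq d_1+d_2$, I would expand the ratio times $q^{k_1k_2}$ as a product. Writing each Gaussian binomial as a ratio of products of factors $(q^{N-i}-1)$, the common denominator cancels and one obtains $\frac{\cg{k_2}{m-r-k_1}}{\cg{k_2}{m-r}}q^{k_1k_2}=\prod_{i=0}^{k_2-1}\frac{q^{m-r-i}-q^{k_1}}{q^{m-r-i}-1}=\prod_{i=0}^{k_2-1}\Big(1-\frac{q^{k_1}-1}{q^{m-r-i}-1}\Big)$. Expanding each factor to first order in $q^{-m}$ and summing the geometric series $\sum_{i=0}^{k_2-1}q^i=\frac{q^{k_2}-1}{q-1}$ yields $1-\frac{q^{-m+r}(q^{k_1}-1)(q^{k_2}-1)}{q-1}+\OO{q^{-2m}}$, which is the stated expansion after substituting $k_1=d_1-r+c_1$ and $k_2=d_2-r+c_2$; here the hypothesis $m\geq d_1+d_2$ guarantees $m-r\geq k_1+k_2$, so that $W$ is large enough to host the disjoint configuration and all the Gaussian binomials above are nonzero. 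I expect the main obstacle to be the probabilistic modelling step rather than the algebra: one must justify that $\pi(F_2)$ may be taken uniform among $k_2$-subspaces of $W$, and keep in mind that the bound is an inequality because $\pi(F_1)\cap\pi(F_2)=\{0\}$ is merely sufficient (not equivalent) for $(F_1\cap F_2)\subset E$. Once the reduction to a clean subspace count is justified, both the exact evaluation and the first-order expansion are routine, with care needed only to track the $\OO{q^{-2m}}$ remainder and to verify the cancellation of the common denominator of the two Gaussian binomials.
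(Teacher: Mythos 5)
Your proposal is correct and follows essentially the same path as the paper's proof: the same Grassmann argument for the dimension bound, the same reduction via killing $E$ (you use a projection onto a complement $W$ where the paper uses the quotient map $\varphi:\Fqm\to\Fqm/E$, an immaterial difference), and the same null-intersection probability $\frac{\cg{a}{n-b}}{\cg{a}{n}}q^{ab}$ with $n=m-r$, $a=d_2-r+c_2$, $b=d_1-r+c_1$, which the paper imports from the literature and you rederive from the standard subspace count. Your asymptotic expansion via the product $\prod_{i=0}^{k_2-1}\bigl(1-\frac{q^{k_1}-1}{q^{m-r-i}-1}\bigr)$ is just a logarithm-free rewriting of the paper's computation and yields the identical first-order term and remainder.
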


The proof of this proposition is given appendix \ref{app:impact_m}.

\begin{cor} \label{cor:formule_m}
Applying Proposition \ref{prop:impact_m} with $F_i = S_i$, $\dim S_i = rd-1$ and $c_i = 1$, we obtain $\prob{(S_i\cap S_j) \subset E} \approx 1 - \frac{q^{-m+2rd-r}}{q-1}$.
\end{cor}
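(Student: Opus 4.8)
The plan is to specialize the asymptotic estimate of Proposition~\ref{prop:impact_m} to the subspaces $S_i$ and $S_j$ arising in the codimension~$1$ analysis. First I would set $F_1 = S_i$ and $F_2 = S_j$. Since $S_i = f_i^{-1}S$ is the image of $S$ under multiplication by the nonzero scalar $f_i^{-1}$, we have $\dim S_i = \dim S$; in the codimension~$1$ case this gives $d_1 = d_2 = \dim S_i = rd - 1$. Next I would read off the codimensions $c_i$ from the hypothesis $\dim(E \cap F_i) = r - c_i$: by Lemma~\ref{dimSicapE} the typical value of $\dim(S_i \cap E)$ is $r-1$, so we take $c_1 = c_2 = 1$.

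With these values in hand, the only work remaining is bookkeeping of exponents. I would compute
\[
d_i - r + c_i = (rd - 1) - r + 1 = r(d-1), \qquad i = 1, 2,
\]
and substitute into the $m \to \infty$ bound of Proposition~\ref{prop:impact_m}, whose hypothesis $m \geqslant d_1 + d_2 = 2(rd-1)$ is met in the regime of interest. This yields
\[
\prob{(S_i \cap S_j) \subset E} \geqslant 1 - \frac{q^{-m+r}\bigl(q^{r(d-1)} - 1\bigr)^2}{q-1} + \OO{q^{-2m}}.
\]

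Finally I would simplify: approximating $\bigl(q^{r(d-1)}-1\bigr)^2 \approx q^{2r(d-1)}$ (the cross and constant terms being absorbed into the lower-order remainder) and collecting the exponents via
\[
-m + r + 2r(d-1) = -m + 2rd - r,
\]
gives the announced estimate $\prob{(S_i \cap S_j) \subset E} \approx 1 - \frac{q^{-m+2rd-r}}{q-1}$. There is essentially no obstacle here beyond Proposition~\ref{prop:impact_m} itself, which does all the heavy lifting: the corollary is a direct substitution, and the only point requiring a moment's care is confirming that the relevant codimension is $c_i = 1$, which is precisely the typical case supplied by Lemma~\ref{dimSicapE}.
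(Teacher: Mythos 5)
Your proof is correct and follows exactly the paper's intended route: the corollary is a direct substitution into Proposition~\ref{prop:impact_m}, and your bookkeeping ($d_i - r + c_i = (rd-1) - r + 1 = r(d-1)$, then $-m + r + 2r(d-1) = -m + 2rd - r$ after approximating $\bigl(q^{r(d-1)}-1\bigr)^2 \approx q^{2r(d-1)}$) reproduces precisely the stated estimate. Your identification of $c_i = 1$ as the typical codimension via Lemma~\ref{dimSicapE} is likewise the justification the paper relies on.
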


From corollary \ref{cor:formule_m}, we can see that adding $1$ to the value of $m$ leads to a factor $q^{-1}$ in the probability of finding random vectors in $S_{ij}$, thus this probability can be made arbitrarily small.

To detect that $S_{ij}$ contains random vectors, we can keep track of the dimension of $S$ during the expansion. Indeed, if $S_{ij}$ contains random vectors, adding $FS_{ij}$ to $S$ will lead to a dimension greater than $rd$. In this case we just need to discard the vectors we just added and continue with the next $S_{ij}$.

\paragraph{Comments on the small values of $q$}

When using small values of $q$, the cases where $\dim S_i \cap E$ or
$\dim S_{ij} > r-2$ are not so rare. Simulation for $d=3, r=3$ seems
to show that the probability $q^{(2-r)(d-2)}$ is accurate when $q$ is
high enough, but is a pessimistic bound when $q$ is small, especially
when $q=2$. Simulation results are shown in figure \ref{fig:simud3r3}. 

\begin{figure}
\begin{center}
\includegraphics[scale=0.75]{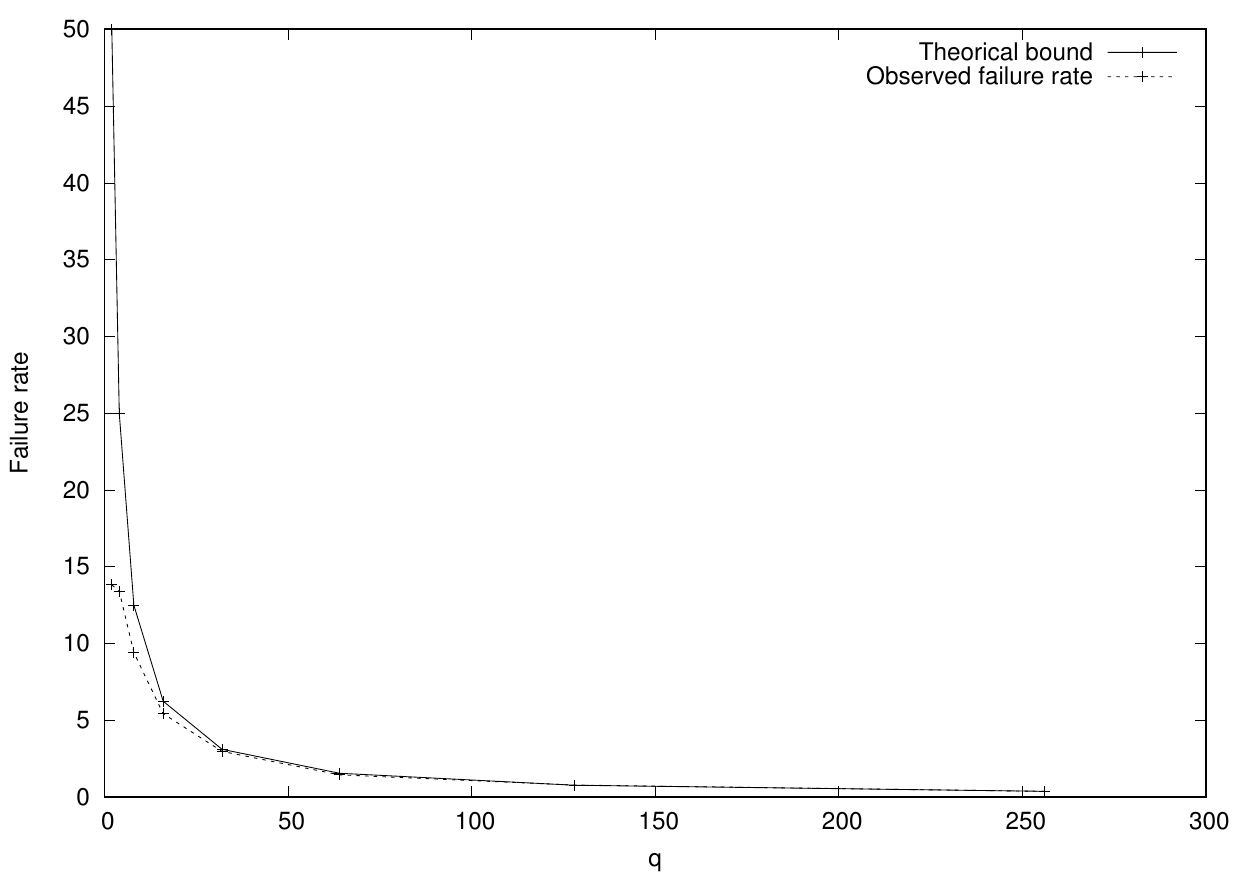}
\end{center}
\caption{Simulation results for $c=1$, $d=3, r=3$}
\label{fig:simud3r3}
\end{figure}

\begin{proposition}
For $q=2$, the syndrome space expansion algorithm using $f_{prob}$ recovers $\EF$ with probability at least $1-q^{(1-r)(d-2)}$.
\end{proposition}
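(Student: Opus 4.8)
The plan is to reuse the proof of Theorem~\ref{failure_exp2} almost verbatim, changing only the estimate of the probability that two of the subspaces $S_{ij}$ coincide. Over $\F$ this single-equality probability is smaller by a factor of about $2$ than the generic bound $q^{2-r}$ used there, and feeding this sharper value into the same chain argument replaces the per-step exponent $2-r$ by $1-r$.

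As in the proof of Theorem~\ref{failure_exp2} I work in the worst case, where each $S_i\cap E$ is a hyperplane $H_i$ of $E$ (of dimension $r-1$) and each $S_{ij}=S_i\cap S_j$ has dimension exactly $r-2$; the remaining cases only give the algorithm access to more vectors of $E$ and hence a smaller failure probability. Under the standing hypothesis $S_{ij}\subset E$ one has $S_{ij}=H_i\cap H_j$, and in this worst case the hyperplanes $H_1,\dots,H_d$ are pairwise distinct. Exactly as there, failure occurs iff all the $S_{ij}$ coincide, equivalently iff the chain of $d-2$ equalities $S_{12}=S_{23}=\cdots=S_{d-1,d}$ holds.

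The heart of the matter is the single-step probability over $\F$. Writing $H_i=\ker\phi_i$ for a nonzero linear form $\phi_i$ on $E$, we have $S_{ij}=\ker\phi_i\cap\ker\phi_j$; since this is the annihilator in $E$ of the plane $\langle\phi_i,\phi_j\rangle\subseteq E^\ast$, the equality $S_{ij}=S_{jk}$ holds iff $\langle\phi_i,\phi_j\rangle=\langle\phi_j,\phi_k\rangle$, i.e. iff $\phi_k\in\langle\phi_i,\phi_j\rangle$. Over $\F$ the unique nonzero element of $\langle\phi_i,\phi_j\rangle$ other than $\phi_i$ and $\phi_j$ is $\phi_i+\phi_j$, and since the hyperplanes are pairwise distinct $\phi_k$ ranges over the $2^r-3$ nonzero forms different from $\phi_i$ and $\phi_j$. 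Therefore
$$\prob{S_{ij}=S_{jk}}=\frac{1}{2^{r}-3}\leqslant 2^{1-r}=q^{1-r}\qquad(r\geqslant 3).$$
This is precisely where the gain occurs: the generic count $q^{2-r}$ of Theorem~\ref{failure_exp2} tacitly allows the coincidence $H_k=H_i$, whereas imposing pairwise distinctness (the worst case) discards it and saves the factor $2$ when $q=2$.

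Finally I multiply over the $d-2$ equalities of the chain, bounding each conditional step by $q^{1-r}$: the first equals $1/(2^r-3)\leqslant q^{1-r}$, and once $\phi_1,\dots,\phi_i$ already lie in the plane $\langle\phi_1,\phi_2\rangle$ this plane has no further nonzero element available over $\F$, so any later step has probability $0$. Hence the probability that all the $S_{ij}$ coincide is at most $(q^{1-r})^{d-2}=q^{(1-r)(d-2)}$, which gives success probability at least $1-q^{(1-r)(d-2)}$. The one delicate point is the single-step count: one must check that restricting to pairwise distinct hyperplanes is legitimate (it is the worst case) and that over $\F$ this turns the generic $q^{2-r}$ into $1/(2^r-3)\leqslant q^{1-r}$; once this is in hand the product bound is immediate.
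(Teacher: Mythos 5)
Your proof is correct at the same (heuristic) level of rigor as the paper's, and it follows the same overall strategy --- inherit the worst-case chain argument from Theorem~\ref{failure_exp2} and sharpen the single coincidence probability $\prob{S_{ij}=S_{jk}}$ from $q^{2-r}$ to $q^{1-r}$ when $q=2$ --- but you obtain the key estimate by a genuinely different mechanism. The paper models $S_{ij}$ and $S_{jk}$ as two uniformly random $(r-2)$-dimensional subspaces of the $(r-1)$-dimensional space $S_j\cap E$ and appeals to Proposition~\ref{prop:impact_m} (equivalently, to the Gaussian-coefficient count, which gives coincidence probability $\frac{q-1}{q^{r-1}-1}$, i.e.\ $\frac{1}{2^{r-1}-1}\approx 2^{1-r}$ at $q=2$, used only as an approximation since it slightly exceeds $2^{1-r}$). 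You instead randomize the hyperplanes $H_i=S_i\cap E$ themselves through their dual forms $\phi_i$, conditioned pairwise distinct, and exploit that over $\F$ the plane $\langle\phi_i,\phi_j\rangle$ contains a unique third nonzero form; this yields the clean inequality $\frac{1}{2^r-3}\leqslant 2^{1-r}$ for $r\geqslant 3$, which is tidier and genuinely a bound rather than an approximation. One consequence is worth flagging: your conditioning is so strong that, under your model, every step of the chain after the first has probability $0$, so you in fact prove a failure probability of at most $\frac{1}{2^r-3}$ for $d=3$ and $0$ for $d\geqslant 4$ --- far stronger than the stated $q^{(1-r)(d-2)}$, and visibly at odds with the fact that decoding failures do occur in practice for $d\geqslant 4$. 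This does not invalidate your upper bound (a probability of $0$ certainly satisfies it), but it shows that fixing uniform pairwise-distinct forms once and for all over-conditions relative to the paper's quasi-independence heuristic, in which each equality $S_{i,i+1}=S_{i+1,i+2}$ is treated as a fresh event of probability $\approx q^{1-r}$; the multiplicative bound $q^{(1-r)(d-2)}$ is the meaningful estimate precisely because the true $H_i$ are not exactly uniform distinct hyperplanes.
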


\proof When $q=2$, the probability that $S_{ij} = S_{jk}$ used in the proof above is not $q^{2-r}$ but $q^{1-r}$. Indeed, from Proposition \ref{prop:impact_m}, we get that the probability of $S_{ij}$ and $S_{jk}$ being equal is approximately $\frac{q^{-1}}{q-1}$ which is equal to $q^{-1}$ when $q=2$. \qed

This leads to a failure probability of $q^{(1-r)(d-2)}$. This is still not as good as what happens in practice, since it does not take into account the fact that the algorithm often has more than $r-2$ vectors of $E$ in each $S_{ij}$ to work with, but is a closer estimate in this special case.

\subsubsection{Case of codimension $\geqslant$ 2}

In the case $c \geqslant 2$, we have no theoretical analysis, but we can do simulations for particular parameters to estimate the probability of failure. In practice we obtain good estimations of the probability of failure. For example, for $q=2$, $c=2$, $r=5$, $d=6$ and $m=80$, we obtain a probability of failure of $2^{-14}$.

\subsubsection{Upper bound on the decryption failure rate}

\begin{itemize}
\item The codimension of $S$ in $\EF$ is 1, but the algorithm fails to recover $\EF$.
\item The codimension of $S$ in $\EF$ is greater than 1. Since it is difficult to give a theorical bound on the probability, we will consider that decoding fails in that case:
\end{itemize}


$$q^{(2-r)(d-2)} \times q^{(n-k)-rd} + q^{-2(n-k-rd+2)}.$$

\subsection{Tradeoff between expansion functions}

In the previous subsections we studied two expansion functions in detail:

\begin{itemize}
\item $f_{decode}$ can decode higher values of $r$ but needs a higher $m$ ($m \ge 3rd-r$) 
\item $f_{prob}$ fits values of $m$ ($m \ge 2rd-r$) used in cryptography, but is less effective since it decodes up to $c < \frac{r}{2}$ instead of $c < r$
\end{itemize}

Depending on the parameters, we can build new expansion functions: in general, the higher $m$ is, the better the expansion function will be. For example, if $m$ is really high, we can use this function: $S \leftarrow (S + f_iS_j + f_kS_l) \cap (S + f_aS_b + f_cS_d)$ which is basically an improvement of $f_{decode}$ since it has a higher chance of recovering vectors of $EF$ by adding two $f_iS_j$ at a time.

If $m$ is lower than the one needed for $f_{prob}$, one can use the function $S \leftarrow S + F(S_{ij} \cap S_k)$ which requires a lower $m$ but can only decode with $c < \frac{r}{3}$.

\subsection{Rank Support Recovery algorithm}

Algorithm \ref{fig:Decoding algorithm of LRPC codes} computes the error vector from the syndrome equations $\Hv\ev^T=\sv^T$. However, for cryptographic applications, we may just recover the support of the error and use it as the shared secret since the $I-RSR$ and $I-RSD$ problems are equivalent. We use the Rank Support Recovery (RSR) algorithm which is a shortened version of the decoding algorithm \ref{fig:Decoding algorithm of LRPC codes} which does not compute the coordinates of the error, and uses $f_{prob}$ to reduce the decoding failure rate.

\begin{algorithm}
\label{algo:RSR}
\KwIn{A parity-check matrix $\Hv \in \Fqm^{(n-k)\times n}$ of support $F$ of dimension $d$ of an LRPC code, a syndrome $\sv \in \Fqm^{n-k}$, an integer $r$.}
\KwOut{A subspace $E$ of dimension $\leqslant r$ such that it exists $\ev \in \Fqm^n$ of support $E$ with $\Hv\ev^T=\sv^T$.}
\Begin{
Compute the syndrome space $S=\vect{s_1,\cdots,s_{n-k}}$ \;
Use the syndrome space expansion algorithm described in Algorithm \ref{fig:expansion2} \;
Define $S_i=f_i^{-1}\EF$ then
compute the support of the error $E = S_1 \cap S_2 \cap \cdots \cap S_d$\;
\Return $E$
}
\caption{the Rank Support Recovery algorithm}
\end{algorithm}

Algorithm \ref{fig:expansion2} allows to reduce the decryption
failure rate by a factor $q^{(2-r)(d-2)}$. However, this algorithm
takes a variable number of iterations to terminate, which may lead to
big  variations in the global execution time. In particular, if one
can determine whether the syndrome space expansion algorithm was
needed or not, the execution time may leak some information. To
prevent that, we present a variation of the syndrome space recovery
algorithm [\ref{algo:crypto_expansion}] that performs a fixed number of intersections and still fits the proof of Theorem~\ref{failure_exp2}.

\begin{algorithm}
\label{algo:crypto_expansion}
\KwIn{The syndrome space $S$ and the vector space $F$}
\KwOut{The expanded syndrome space $S'$, which may be $\EF$}
\Begin{
\For{$i$ from 1 to $d-1$} {
	Precompute $S_{i, i+1}$ \;
}
\For{$i$ from 1 to $d-2$} {
	$tmp \leftarrow S + F(S_{i, i+1} \oplus S_{i+1, i+2} \oplus S_{i, i+2})$ \;
	\If{$\dim tmp \leqslant rd$} {
		$S \leftarrow tmp$ \;
		\tcc{$tmp$ is used to check that $m$ has no impact, has in proposition \ref{prop:impact_m}}
	}
}
\Return{$S$}
}
\caption{Syndrome space expansion algorithm used in cryptography}
\end{algorithm}

\subsection{Complexity of the decoding algorithm}

Algorithms \ref{fig:expansion1} and \ref{fig:expansion2} are iterative algorithms, hence their exact complexity is hard to study precisely. We can however give an upper bound of their complexity, by noticing that the maximum number of iterations before the algorithms finishes is at most $c$: indeed, $c$ must decrease at each iteration, otherwise the algorithms will stop returning $failure$.

\begin{proposition} \label{prop:comp_fdecode}
The number of operations in $\Fq$ for the syndrome space expansion algorithm using $f_{decode}$ in algorithm \ref{fig:expansion1} is bounded above by:

$$ c \times d(d-1)\times(d(d-1)-1) \times 16 r^2 d^2 m$$
\end{proposition}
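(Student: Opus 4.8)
The plan is to decompose the total operation count as a product of three independent factors: the number of iterations of the outer \texttt{while} loop, the number of iterations of the inner \texttt{for} loop, and the cost in $\Fq$-operations of a single evaluation of the expansion $S \leftarrow (S + f_i S_j) \cap (S + f_k S_l)$. First I would bound the number of outer iterations. As observed just before the statement, $\dim S$ must increase by at least one during each complete pass of the \texttt{for} loop, since otherwise the test $\dim S = tmp$ triggers and the algorithm returns failure. Because $S$ starts at dimension $rd-c$ and the loop halts once $\dim S = rd$, there are at most $c$ passes that make progress, which gives the factor $c$.

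Next I would count the admissible index tuples $(i,j,k,l)$. There are $d(d-1)$ ordered pairs $(i,j)$ with $i \neq j$, and likewise $d(d-1)$ pairs $(k,l)$ with $k \neq l$; removing the $d(d-1)$ coincidences $(i,j)=(k,l)$ leaves $d(d-1)\big(d(d-1)-1\big)$ tuples, which is exactly how many times the body of the \texttt{for} loop runs per outer pass.

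Then I would bound the cost of one evaluation of $f_{decode}$. Writing $S_j = f_j^{-1}S$, the space $f_i S_j = f_i f_j^{-1} S$ is obtained by scalar-multiplying the generators of $S$ and has the same dimension as $S$, namely at most $rd$; hence each sum $S + f_i S_j$ is a subspace of $\Fqm$ of dimension at most $2rd$. The dominant operation is the intersection of the two such sums. Using the estimate from the computational-cost subsection, where intersecting subspaces of dimension $rd$ inside the ambient $m$-dimensional space costs $4 r^2 d^2 m$ operations in $\Fq$, and using that this cost scales quadratically in the dimension of the operands, intersecting spaces of dimension up to $2rd$ costs at most $4(2rd)^2 m = 16 r^2 d^2 m$. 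Multiplying the three factors then yields the claimed bound $c \times d(d-1) \times (d(d-1)-1) \times 16 r^2 d^2 m$.

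The main obstacle, and the only step that is not pure bookkeeping, is justifying the factor $16$: I must argue that forming the two sums and performing the scalar multiplications that produce $f_i S_j$ and $f_k S_l$ are all dominated by the final intersection, and that replacing dimension-$rd$ operands by dimension-$2rd$ operands multiplies the intersection cost by exactly $(2rd/rd)^2 = 4$, turning $4 r^2 d^2 m$ into $16 r^2 d^2 m$. This rests on the quadratic dependence of a Gaussian-elimination-based intersection on the dimension of the subspaces, with the ambient dimension $m$ entering only linearly.
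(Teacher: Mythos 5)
Your proposal is correct and follows essentially the same argument as the paper: the factor $c$ bounding the number of productive passes, the count $d(d-1)\bigl(d(d-1)-1\bigr)$ of admissible tuples $(i,j,k,l)$, and the cost $16r^2d^2m$ for intersecting two subspaces of dimension at most $2rd$ (scaled quadratically from the $4r^2d^2m$ figure of the basic-decoding cost analysis). Your justification of the factor $16$ is in fact slightly more explicit than the paper's, which simply asserts the cost of intersecting dimension-$2rd$ spaces.
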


\begin{proof}
As said before, the complexity of this algorithm can be bounded by $c$ times the cost of an iteration. When using $f_{decode}$, the most costly operation is the computation of the intersection. Intersecting two vector spaces of dimension $2rd$ costs $16 r^2 d^2 m$ operations in $\Fq$, and $d(d-1)\times(d(d-1)-1)$ different values are possible for $(i, j, k, l)$ at each iteration, hence the result.
\end{proof}

\begin{proposition} \label{prop:comp_fprob}
The number of operations in $\Fq$ for the syndrome space expansion algorithm using $f_{prob}$ in algorithm \ref{fig:expansion2} is bounded above by :

$$ c \times d(d-1) \times 4 r^2 d^2 m$$
\end{proposition}

\begin{proof}
The proof is the same as for proposition \ref{prop:comp_fdecode}, except that the most costly step of each iteration is the computation of the $S_{ij}$. Intersecting two vector spaces of dimension $rd$ costs $4 r^2 d^2 m$ operations in $\Fq$, and there are $d(d-1)$ $S_{ij}$ to compute at each iteration, hence the result.
\end{proof}

In order to allow constant time implementations, the syndrome space expansion algorithm used in the cryptographic context computes a fixed number of intersections, so its complexity can be derived easily:

\begin{proposition}
The number of operations in $\Fq$ for the syndrome space expansion algorithm used
in the cryptographic context in algorithm \ref{algo:crypto_expansion} is:

$$ ((d-1)+(d-2)) \times 4r^2d^2m$$
\end{proposition}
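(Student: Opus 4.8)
The plan is to reduce everything, exactly as in the proofs of Propositions \ref{prop:comp_fdecode} and \ref{prop:comp_fprob}, to a count of the subspace intersections performed by Algorithm \ref{algo:crypto_expansion}, each intersection of two spaces of dimension $rd$ costing $4r^2d^2m$ operations in $\Fq$. The crucial structural feature I would exploit is that, unlike Algorithms \ref{fig:expansion1} and \ref{fig:expansion2}, this version performs a fixed, data-independent number of intersections: there is no \texttt{while} loop and no early termination driven by $c$, so the count is exact rather than an upper bound of the form $c\times(\ldots)$. This is precisely what makes the running time constant, and it is what lets us state the cost as an equality.

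First I would count the intersections in the two \texttt{for} loops separately. The first loop runs for $i$ from $1$ to $d-1$ and precomputes the adjacent intersections $S_{i,i+1}=S_i\cap S_{i+1}$, contributing exactly $d-1$ intersections. The second loop runs for $i$ from $1$ to $d-2$; in each of its iterations the only intersection not already available from the precomputation step is the non-adjacent one $S_{i,i+2}=S_i\cap S_{i+2}$, so each of these $d-2$ iterations contributes exactly one further intersection. Summing gives $(d-1)+(d-2)$ intersections, and multiplying by $4r^2d^2m$ yields the claimed figure.

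The step that needs the most care is justifying that the remaining operations in the body of the second loop do not contribute to the leading term. I would argue that forming the direct sum $S_{i,i+1}\oplus S_{i+1,i+2}\oplus S_{i,i+2}$ is merely a concatenation of bases, that the product $F(\cdots)$ followed by the sum with $S$ produces a spanning family which is then reduced to a basis, and that the dimension test $\dim tmp \leqslant rd$ is a rank computation — all of these being of order at most $r^2d^2m$ and hence dominated, in the paper's accounting, by the intersection already charged in that iteration. This follows the same convention as Propositions \ref{prop:comp_fdecode} and \ref{prop:comp_fprob}, where the cost of an iteration is identified with that of its intersections, so no new estimates are needed beyond the per-intersection cost quoted above.
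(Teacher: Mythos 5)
Your proof is correct and takes essentially the same approach as the paper's: both reduce the count to the fixed number of intersections --- the $d-1$ precomputed adjacent $S_{i,i+1}$ plus one non-adjacent $S_{i,i+2}$ per iteration of the second loop --- and charge each at the standard cost of $4r^2d^2m$ operations in $\Fq$. Your extra remarks on the sums, the multiplication by $F$, and the dimension test being lower-order just make explicit the paper's statement that the computation of the $S_{ij}$ is the dominant step.
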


\begin{proof}
As for Proposition~\ref{prop:comp_fprob}, the most costly step of this algorithm is the computation of every $S_{ij}$. Each intersection costs $4 r^2 d^2 m$, and the structure of the algorithm allows to reduce the number of intersections to $(d-1)+(d-2)$, hence the result.
\end{proof}

In practice the algorithm for cryptographic applications is very efficient, and
for the iterative case, the upper bound is scarcely met since the algorithm
usually stops before doing the maximum number of possible iterations.

\section{Applications to cryptography}\label{sec:applicationsToCrypto}
LRPC codes are good candidates for use in the McEliece cryptosystem
\cite{M78}. As a reminder, the McEliece cryptosystem is a public key
encryption scheme based on coding theory. Basically, the secret key is
the description of a code $\C$ that we can efficiently decode and the
public key is a scrambled description of this code. To encrypt a
message $M$, one computes a codeword $c_M \in \C$ and one adds to it
an error $e$. The ciphertext is $C = c_M+e$. If we know the ``good''
description, we can decode $C$ and recover the message whereas an
attacker has to either apply generic decoding algorithms for arbitrary
codes, which amounts to trying to solve an NP-hard problem, or recover the hidden structure of $\C$.

Ideal LRPC codes are easy to hide since we only need to reveal its
systematic parity-check matrix. Due to their weak algebraic structure,
it is hard to recover the structure of an LRPC code from its systematic form. We can now introduce a new problem on which the security of our cryptosystem is based: 

\begin{problem}[Ideal LRPC codes indistinguishability]\label{prob:IndLRPC} Given a polynomial $P \in \Fq[X]$ of degree $n$ and a vector $\hv \in \Fqm^n$, it is hard to distinguish whether the ideal code $\C$ with the parity-check matrix generated by $\hv$ and $P$ is a random ideal code or whether it is  an ideal LRPC code of weight $d$.

In other words, it is hard to distinguish whether $\hv$ was sampled uniformly at random or as $\xv^{-1}\yv \mod P$ where the vectors $\xv$ and $\yv$ have the same support of small dimension $d$.
\end{problem}

In \cite{HT15}, a structural attack against DC-LRPC codes is made by using the divisors of $X^n-1$ in $\Fq[X]$. That is why the ideal LRPC codes are particularly interesting if we choose an irreducible polynomial for $P$. In this case we utterly counter this structural attack.

The most efficient known attack against this problem consists of
searching for a codeword of weight $d$ in the dual $\C^\perp$ of the
LRPC code. Indeed, $\hv = \xv^{-1}\yv \mod P \iff \yv + \xv\hv = 0
\mod P$ so $(\xv|\yv) \in \C^\perp$. The small weight codeword
research algorithms are the same as for the algorithm solving the $\RSD$ problem, so we consider the algorithm of \cite{AGHT17}. However this algorithm is more efficient for LRPC codes than for random codes. Indeed, since an LRPC code admits a parity-check matrix $\Hv$ of small weight $d$, any $\Fq$-linear combinations of the rows of $\Hv$ is a codeword of weight at most $d$ of $\C^\perp$. In the case of Ideal-LRPC codes, the complexity of the algorithm is divided by $q^n$. Thus the complexity of the attack  is $\OO{n^3m^3q^{d\Ceiling{\frac{m}{2}}-m-n}}$.

We can also consider the algebraic attacks using the Groebner bases \cite{LP06}.
The advantage of these attacks is that they are independent of the
size of $q$. They mainly depend on the number of unknowns with respect
to the number of equations. However, in the case $q=2$ the number of
unknowns is generally too high for the algorithms by Groebner basis to
be more efficient than combinatorial attacks. We have chosen our
parameters in such a way so that the best attacks are combinatorial: the expected complexity of the algorithms by Groebner basis is based on the article \cite{BFP09}.

\subsection{An IND-CPA KEM based on the rank metric}
\label{subsec:KEM}

\subsubsection*{Definition of a KEM}
\label{subsubsec:defKEM}
A Key-Encapsulation scheme KEM = $(\KeyGen,\Encap,\Decap)$ is a triple of probabilistic algorithms together with a key space $\mathcal{K}$. The key generation algorithm $\KeyGen$ generates a pair of public and secret keys $(\pk,\sk)$. The encapsulation algorithm $\Encap$ uses the public key $\pk$ to produce an encapsulation $c$, and a key $K \in \mathcal{K}$. Finally $\Decap$ using the secret key $\sk$ and an encapsulation $c$, recovers the key $K \in \mathcal{K}$ or fails and return $\bot$.

We define $\INDCPA$-security of KEM formally via the following experiment, where $\Encap_0$ returns a valid key pair $c^*,K^*$, while $\Encap_1$ return a valid $c^*$ and a random $K^*$.\\[2mm]
\begin{tabular}{rl}
\begin{minipage}{0.4\textwidth}
\textit{Indistinguishability under Chosen Plaintext Attack}: This notion states that an adversary should not be able to efficiently guess which key is encapsulated.
\end{minipage}
	&
\fbox{
\begin{minipage}{0.4\textwidth}
$\Exp_{\E,\A}^{\ind-b}(\seck)$
\begin{enumerate}
\item $\param \sets \Setup(1^\seck)$
\item $(\pk,\sk) \sets \KeyGen(\param)$
\item $(c^*,K^*) \sets \Encap_b(\pk)$
\item $b' \sets \A(\GUESS:c^*,K^*)$
\item \comreturn $b'$
\end{enumerate}

\end{minipage}
}
\end{tabular}

\medskip

\begin{definition}[$\INDCPA$ Security]
	A key encapsulation scheme KEM is $\INDCPA$-secure if for
        every PPT (probabilistic polynomial time) adversary $\advA$,
        we have that
        \[\AdvINDCPA{\KEM}{\advA}:=|\Pr[\INDCPAreal^\advA\Rightarrow 1]-\Pr[\INDCPArand^\advA \Rightarrow 1]|\] is negligible.
\end{definition}

\subsubsection*{Description of the scheme}
\label{subsubsec:descriptionKEM}
Our scheme contains a hash function $G$ modeled as a Random Oracle Model (ROM).
\begin{itemize}
	\item $\KeyGen(1^\seck)$: 
	\begin{itemize}
		\item choose an irreducible polynomial $P \in \Fq[X]$ of degree $n$.
		\item choose uniformly at random a subspace $F$ of $\Fqm$ of dimension $d$ and sample a couple of 	vectors $(\xv,\yv) \rand F^n \times F^n$ such that $\Supp(\xv)=\Supp(\yv)=F$.
		\item compute $\hv = \xv^{-1}\yv \mod P$.
		\item define $\pk = (\hv, P)$ and $\sk = (\xv,\yv)$.
	\end{itemize}
	\item $\Encap(\pk)$:
	\begin{itemize}
		\item choose uniformly at random a subspace $E$ of $\Fqm$ of dimension $r$ and sample a couple of 	vectors $(\ev_1,\ev_2) \rand E^n \times E^n$ such that $\Supp(\ev_1)=\Supp(\ev_2)=E$.
		\item compute $\cv = \ev_1 + \ev_2\hv \mod P$.
		\item define $K = G(E)$ and return $\cv$.
	\end{itemize}
	\item $\Decap(\sk)$:
	\begin{itemize}
		\item compute $\xv\cv = \xv\ev_1+ \yv\ev_2 \mod P$ and recover $E$ with the RSR algorithm \ref{algo:RSR}.
		\item recover $K = G(E)$.
	\end{itemize}
\end{itemize}

We need to have a common representation of a subspace of dimension $r$ of $\Fqm$. The natural way is to choose the unique matrix $\Mv \in \Fq^{r\times m}$ of size $r\times m$ in its row echelon form such that the rows of $\Mv$ are a basis of $E$.

We deal with the semantic security of the KEM in Section \ref{subsec:secu}.

\begin{figure}[!ht]
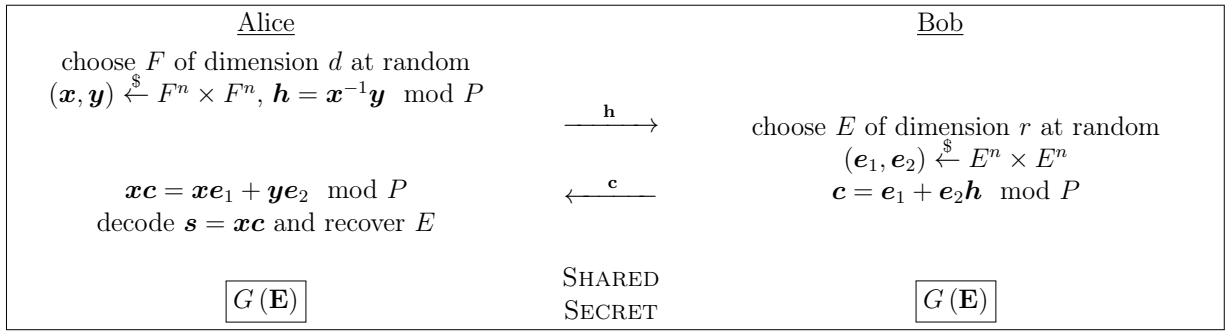


\resizebox{\textwidth}{!}{
\fbox{\hspace{-.0cm}%
\begin{minipage}{1.15\textwidth}
\begin{minipage}{.4225\textwidth}
\parbox{\textwidth}{\centering
\underline{Alice}}
\end{minipage} 
\begin{minipage}{.1125\textwidth}
\parbox{\textwidth}{}
\end{minipage} ~
\begin{minipage}{.4225\textwidth}
\parbox{\textwidth}{\centering
\underline{Bob}}
\end{minipage} \\[2mm]
\begin{minipage}{.4225\textwidth}
\parbox{\textwidth}{\centering%
choose $F$ of dimension $d$ at random\\
$(\xv,\yv) \rand F^n\times F^n$, $\hv = \xv^{-1}\yv \mod P$\\[10mm]
$\xv\cv = \xv\ev_1 + \yv\ev_2 \mod P$\\
decode $\sv = \xv\cv$ and recover $E$\\
~\\[2mm]
\fbox{$G\left(\mathbf{E}\right)$}%
}%
\end{minipage} ~
\begin{minipage}{.1125\textwidth}
\parbox{\textwidth}{\centering%
~\\[7mm]
$\xrightarrow{~~~~\mathbf{h}~~~~}$\\[6mm]
$\xleftarrow{~~~~\mathbf{c}~~~~}$\\[8mm]

\textsc{Shared Secret}%
}
\end{minipage} ~
\begin{minipage}{.4225\textwidth}
\parbox{\textwidth}{\centering%
~\\[8mm]
choose $E$ of dimension $r$ at random\\
$(\ev_1,\ev_2) \rand E^n\times E^n$ \\
$\cv = \ev_1 + \ev_2\hv \mod P$\\[12mm]
\fbox{$G\left(\mathbf{E}\right)$}%
}
\end{minipage}
\end{minipage}
}%
}
\caption{\label{fig:new_proto}Informal description of our new Key Encapsulation Mechanism. $\mathbf{h}$ constitutes the public key.}
\end{figure}

\subsection{An IND-CCA-2 PKE based on the rank metric}
\label{subsec:PKE}
%
Our KEM can be slightly modified to become a PKE cryptosystem.

A Public Key Encryption (PKE) scheme is defined by three algorithms: the key generation algorithm $\keygen$ which takes on input the security parameter $\lambda$ and outputs a pair of public and private keys $(pk,sk)$; the encryption algorithm $\encrypt(pk,M)$ which outputs the ciphertext $C$ corresponding to the message $M$ and the decryption algorithm $\decrypt(sk,C)$ which outputs the plaintext $M$.

Our PKE scheme contains a hash function $G$ modeled as a ROM.
\begin{itemize}
\item $\keygen(1^\lambda)$:
	\begin{itemize}
	\item choose an irreducible polynomial $P \in \Fq[X]$ of degree $n$.
	\item choose uniformly at random a subspace $F$ of $\Fqm$ of dimension $d$ and sample a couple of 	vectors $(\xv,\yv) \rand F^n \times F^n$ such that $\Supp(\xv)=\Supp(\yv)=F$.
	\item compute $\hv = \xv^{-1}\yv \mod P$.
	\item define $pk = (\hv, P)$ and $sk = (\xv,\yv)$.
	\end{itemize}
\item $\encrypt(pk,M)$: 
	\begin{itemize}
	\item choose uniformly at random a subspace $E$ of $\Fqm$ of dimension $r$ and sample a couple of 	vectors $(\ev_1,\ev_2) \rand E^n \times E^n$ such that $\Supp(\ev_1)=\Supp(\ev_2)=E$.
	\item compute $\cv = \ev_1 + \ev_2\hv \mod P$.
	\item output the ciphertext $C = (\cv, M \oplus G(E))$.
	\end{itemize}
\item $\decrypt(sk,C)$:
	\begin{itemize}
	\item compute $\xv\cv = \xv\ev_1+ \yv\ev_2 \mod P$ and recover $E$ with the ideal RSR algorithm \ref{algo:RSR}.
	\item output $M = C \oplus G(E)$.
	\end{itemize}
\end{itemize}

\subsection{Security of our schemes}
\label{subsec:secu}

\begin{theorem}
	Under the \emph{Ideal LRPC indistinguishability} \ref{prob:IndLRPC} and the \emph{Ideal-Rank Support Recovery} \ref{prob:I-RSR} Problems, the KEM presented above in section \ref{subsubsec:descriptionKEM} is indistinguishable against Chosen Plaintext Attack in the Random Oracle Model.
\end{theorem}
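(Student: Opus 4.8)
The plan is to proceed by a sequence of games in the Random Oracle Model, bounding the advantage $\AdvINDCPA{\KEM}{\advA}$ of an arbitrary PPT adversary $\advA$. Let $G_0$ be the real experiment $\INDCPAreal$, in which the key $\hv=\xv^{-1}\yv\bmod P$ is a genuine ideal LRPC key, the challenge ciphertext is $\cv^*=\ev_1+\ev_2\hv\bmod P$ for a random support $E$ of dimension $r$, and $K^*=G(E)$. The first step is to replace $\hv$ by a uniformly random $\hv\rand\Fqm^n$, yielding a game $G_1$. Any difference in $\advA$'s output between $G_0$ and $G_1$ immediately yields a distinguisher for the \emph{Ideal LRPC indistinguishability} problem (Problem~\ref{prob:IndLRPC}), so $|\Pr[G_0\Rightarrow 1]-\Pr[G_1\Rightarrow 1]|$ is bounded by the corresponding advantage.

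The heart of the argument is a random-oracle step. In $G_1$ the only place where the support $E$ enters $\advA$'s view, apart from the ciphertext $\cv^*$, is the hashed key $K^*=G(E)$. Since $G$ is modelled as a random oracle and $E$ is sampled essentially uniformly among the subspaces of dimension $r$ of $\Fqm$ (whose number is the Gaussian coefficient $\cg{r}{m}$, giving $E$ very large min-entropy), the value $G(E)$ is uniformly random and independent of the rest of $\advA$'s view \emph{unless} $\advA$ itself queries $G$ at $E$. I would therefore introduce the event $\mathsf{Ask}$ that $\advA$ queries $G$ on the canonical row-echelon representation of $E$, and pass to a game $G_2$ in which $K^*$ is replaced by a truly random key. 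The games $G_1$ and $G_2$ are identical until $\mathsf{Ask}$ occurs, so by the difference lemma $|\Pr[G_1\Rightarrow 1]-\Pr[G_2\Rightarrow 1]|\leq \Pr[\mathsf{Ask}]$.

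It remains to bound $\Pr[\mathsf{Ask}]$ by the advantage against the \emph{Ideal-Rank Support Recovery} problem (Problem~\ref{prob:I-RSR}). Given an $\IRSR$ instance $(\hv,P,\sigmav,r)$, I would build a solver $\mathcal{B}$ that sets the public key to $\hv$, the challenge ciphertext to $\cv^*=\sigmav$, answers $G$-queries with fresh random values while recording them, and runs $\advA$. Because $\hv$ is uniform in $G_1$ and $\sigmav=\ev_1+\ev_2\hv\bmod P$ with $\ev_1,\ev_2$ of common support $E$, the simulated view is distributed exactly as in $G_1$, so $\mathsf{Ask}$ occurs with probability $\Pr[\mathsf{Ask}]$; when it does, one of $\advA$'s recorded queries is the sought support, and $\mathcal{B}$ identifies it by testing each queried subspace against the linear system of Problem~\ref{prob:I-RSR} (exactly as in the $\IRSD$–$\IRSR$ equivalence established earlier in the paper) and outputs the valid one. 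Hence $\Pr[\mathsf{Ask}]$ is at most the $\IRSR$ advantage. Finally, in $G_2$ the key $K^*$ is independent of $b$, and undoing the first transition — replacing the random $\hv$ by an LRPC key again — reaches $\INDCPArand$ at the cost of a second invocation of Problem~\ref{prob:IndLRPC}; combining the three bounds gives the claimed security.

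The step I expect to be the main obstacle is the random-oracle reduction to $\IRSR$: one must argue carefully that the simulator's challenge distribution matches that of the genuine scheme (in particular that $\cv^*=\sigmav$ is a faithful embedding once $\hv$ has been randomised), and that extraction of $E$ from the list of $G$-queries is sound, i.e. that the correct support can be recognised among the polynomially many queried subspaces. The entropy estimate ensuring $G(E)$ is uniform before $\mathsf{Ask}$, and the check that repeated $G$-queries on the same subspace collapse thanks to the canonical row-echelon representation, are the subsidiary points that still need verification.
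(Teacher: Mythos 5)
Your proof is correct and follows the same two-hop game structure as the paper: first replace $\hv$ by a uniform vector at the cost of one invocation of Ideal LRPC indistinguishability, then use the random oracle to reduce the remaining advantage to $\IRSR$. The one genuine difference lies in how the ROM step is executed. The paper's simulator picks \emph{one} of the adversary's $q_G$ oracle queries uniformly at random and forwards it as its $\IRSR$ answer, which is why its bound carries the loss factor $1/q_G$ (plus a $2^{-\lambda}$ term for the adversary predicting the oracle output without querying it): $\Adv_{\mA}^{G_1} \leq 2^{-\lambda} + 1/q_G \cdot \Adv_{\mA}^{\IRSR}$. You instead have the reduction \emph{verify} each queried subspace against the syndrome equation — solving the linear system over $\Fq$ exactly as in the $\IRSD$–$\IRSR$ equivalence of Section 2, which is efficient for a fixed candidate support — and output the query that passes. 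This buys a tight reduction with no $1/q_G$ loss, at the price of the soundness caveat you yourself flag: the check could in principle accept a subspace $E' \neq E$ of dimension at most $r$ that also admits a solution. For the search problem as stated (and for parameters where $r$ is well below the rank Gilbert–Varshamov bound, so the support is unique with overwhelming probability) any accepted subspace is a valid $\IRSR$ answer, so the extraction is sound; this should be stated explicitly. Your treatment is also somewhat more careful than the paper's at two points it leaves implicit: the difference-lemma formulation via the event $\mathsf{Ask}$ (using that $E$ has min-entropy $\log_2 \cg{r}{m} \gg \seck$, so $G(E)$ is fresh and uniform before $\mathsf{Ask}$, with the row-echelon form giving a canonical encoding), and the final step undoing the first hop to reach $\INDCPArand$, costing a second invocation of Problem \ref{prob:IndLRPC}, which the paper compresses into ``which leads to the conclusion.''
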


\begin{proof}
	We are going to proceed in a sequence of games. The simulator first starts from the real scheme. First we replace the public key matrix by a random element, and then we use the ROM to solve the Ideal-Rank Support Recovery.

	We start from the normal game $G_0$:
	We generate the public key $\hv$ honestly, and $E, \cv$ also
	\begin{itemize}
		\item 
	In game $G_1$, we now replace $\hv$ by a random vector, the rest is identical to the previous game. From an adversary point of view, the only difference is the distribution of $\hv$, which is either generated at random, or as a product of low weight vectors. This is exactly the \emph{Ideal LRPC indistinguishability} problem, hence \[\Adv_{\mA}^{G_0} \leq \Adv_{\mA}^{G_1} + \Adv_{\mA}^{\DUPKS}.\]

	\item In game $G_2$, we now proceed as earlier except we receive $\hv, \cv$ from a Support Recovery challenger. After sending $\cv$ to the adversary, we monitor the adversary queries to the Random Oracle, and pick a random one that we forward as our simulator answer to the Ideal-Rank Support Recovery problem. Either the adversary was able to predict the random oracle output, or with probably $1/q_G$, we picked the query associated with the support $E$ (by $q_G$ we denote the number of queries to the random oracle $G$), hence \[\Adv_{\mA}^{G_1} \leq 2^{-\lambda} + 1/q_G \cdot \Adv_{\mA}^{\IRSR}\] which leads to the conclusion.

\end{itemize}
\end{proof}

\begin{theorem}
	Under the \emph{Ideal LRPC indistinguishability} \ref{prob:IndLRPC} and the \emph{Ideal-Rank Support Recovery} \ref{prob:I-RSR} Problems, the encryption scheme presented in section \ref{subsec:PKE} is indistinguishable against Chosen Plaintext Attack in the Random Oracle Model.
\end{theorem}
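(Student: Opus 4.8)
The plan is to mirror the KEM security proof via a short sequence of games, the only new ingredient being that in the Random Oracle Model the value $G(E)$ acts as a one-time pad on the message, so the IND-CPA advantage will be governed by the probability that the adversary ever queries the oracle on the true error support $E$. I would start from game $G_0$, the real IND-CPA experiment: the challenger runs $\keygen$ honestly to obtain $(\pk,\sk)$ with $\hv = \xv^{-1}\yv \bmod P$, receives a message pair $(M_0,M_1)$ from $\mA$, draws $b \rand \{0,1\}$, samples $(E,\ev_1,\ev_2)$, and returns $C = (\cv,\, M_b \oplus G(E))$ with $\cv = \ev_1 + \ev_2\hv \bmod P$.

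In game $G_1$ I would replace $\hv$ by a uniformly random vector of $\Fqm^n$, leaving everything else unchanged. As in the KEM proof, the only difference visible to $\mA$ is the distribution of $\hv$, so this step is exactly the \emph{Ideal LRPC indistinguishability} problem and yields $\Adv_{\mA}^{G_0} \leq \Adv_{\mA}^{G_1} + \Adv_{\mA}^{\DUPKS}$. The point of this hop is that once $\hv$ is random, the ciphertext component $\cv$ is precisely a fresh instance of the $\IRSD$ problem whose solution has support $E$, so we may appeal to $\IRSR$ in the next step.

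In game $G_2$ I would invoke the one-time pad argument: since $G$ is modeled as a random oracle, as long as $\mA$ never queries $G$ at the point $E$, the value $G(E)$ is uniform and independent of the remainder of $\mA$'s view, hence $M_b \oplus G(E)$ is uniformly distributed and carries no information about $b$. Consequently the advantage in $G_2$ is bounded by the probability that $\mA$ queries $G(E)$. To control this probability I would build a reduction to \emph{Ideal-Rank Support Recovery}: the simulator receives $(\hv,\cv)$ from an $\IRSR$ challenger, forwards $\cv$ inside the challenge ciphertext (masking with a fresh random string in place of $G(E)$), monitors the $q_G$ oracle queries of $\mA$, and outputs a uniformly chosen query as its candidate support. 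With probability at least $1/q_G$ this query is $E$, giving $\Adv_{\mA}^{G_1} \leq 2^{-\lambda} + 1/q_G \cdot \Adv_{\mA}^{\IRSR}$, exactly as in the KEM case. Chaining the three inequalities yields the claimed security.

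The step I expect to be the main obstacle is making the $G_2$ one-time pad argument fully rigorous: one must argue that the challenge ciphertext produced by the simulator is distributed identically to the one in $G_1$ even though the simulator does not know $\sk$ and replaces $G(E)$ by a random value, and that the random oracle can be answered consistently (in particular, that no earlier query to $G(E)$ forces an inconsistency). This amounts to checking that, conditioned on $\mA$ not querying $E$, the simulated and real games are perfectly indistinguishable, and that the event ``$\mA$ queries $E$'' is precisely the event on which the $\IRSR$ extractor succeeds.
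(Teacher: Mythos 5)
Your proposal is correct and follows essentially the same game-hopping proof as the paper: an Ideal LRPC indistinguishability hop replacing $\hv$ by a random vector, followed by a random-oracle extraction argument bounding the advantage by $2^{-\lambda} + 1/q_G \cdot \Adv_{\mA}^{\IRSR}$. The only (cosmetic) difference is that you merge into a single one-time-pad argument what the paper splits into two games, namely replacing $G(E)$ by a random string ($G_2$) and then replacing $M \oplus \mathsf{Rand}$ by $\mathsf{Rand}$ ($G_3$).
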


\begin{proof}
	We are going to proceed in a sequence of games. The simulator first starts from the real scheme. First we replace the public key matrix by a random element, and then we use the ROM to solve the Ideal-Rank Support Recovery.

	We start from the normal game $G_0$:
	We generate the public key $\hv$ honestly, and $E, \cv$ also
		\begin{itemize}
			\item 
	In game $G_1$, we now replace $\hv$ by a random vector, the rest is identical to the previous game. From an adversary point of view, the only difference is the distribution of $\hv$, which is either generated at random, or as a product of low weight vectors. This is exactly the \emph{Ideal LRPC indistinguishability} problem, hence $$\Adv_{\mA}^{G_0} \leq \Adv_{\mA}^{G_1} + \Adv_{\mA}^{\DUPKS}.$$

	\item In game $G_2$, we now proceed as earlier except we replace $G(E)$ by random. It can be shown, that by monitoring the call to the ROM, the difference between this game and the previous one can be reduced to the Ideal-Rank Support Recovery problem, so that: \[\Adv_{\mA}^{G_1} \leq 2^{-\lambda} + 1/q_G \cdot \Adv_{\mA}^{\IRSR}.\]

	\item In a final game $G_3$ we replace $C = M \oplus \mathsf{Rand}$ by just $C = \mathsf{Rand}$, which leads to the conclusion.
\end{itemize}
\end{proof}

\subsubsection*{CCA-2 security proof}

When applying the HHK \cite{HHK17} framework for the Fujisaki-Okamoto transformation, one can show that the final transformation is CCA-2 secure such that:
$$\Adv_{\mA}^{\mathrm{CCA-2}} \leq q_G \cdot \delta + q_V \cdot 2^{- \gamma} + \frac{2 q_G + 1}{|\mathcal{M}|} + 3  \Adv_{\mA}^{\mathrm{CPA}}$$
where $q_G$ is the number of queries to the random oracle $G$ and $q_V$ is the number of verification queries.

As our scheme is CPA secure, the last term is negligible, we can handle exponentially large message space for a polynomial number of queries, so the previous is too.

As shown before, our scheme is gamma-spread so again for a polynomial number of verification queries, the term in $q_V$ is negligible.

The tricky term remaining is $q_G \cdot \delta$, this is the product
of the number of queries  to the random oracle, by the probability of
generating an decipherable ciphertext in an honest execution. For real
applications, we want schemes to be correct enough so that the
probability of such an occurrence is very small. This often leads, in
application in running with a probability of a magnitude of
$2^{-64}$. This may seem not low enough for pure cryptographic
security, however it should be noted that this number corresponds to the number of requests adversarially generated where the simulator gives an honest answer to a decryption query, which would mean that a single user would be able to do as many queries as expected by the whole targeted users in a live application, so a little trade-off at this level seems more than fair.

\subsection{Best known attacks on $\RSD$}

The complexity of practical attacks grows very fast with the size of parameters,
there is a structural reason for this: for Hamming distance a key notion
for the attacks consists of counting the number of words of length $n$ and support size $t$,
which corresponds to the Newton binomial coefficient $\binom{n}{t}$, 
whose value is exponential and upper bounded
by $2^n$. In the rank metric case, counting the number of possible supports of size
$r$ for a rank code of length $n$ over $\Fqm$ corresponds to counting the number
of subspaces of dimension $r$ in $\Fqm$: {\bf the Gaussian binomial coefficient}
of size roughly $q^{rm}$, whose value is also exponential but with a quadratic term
in the exponent.

There exist two types of generic attacks on the problem:

- {\bf combinatorial attacks}: these attacks are usually the best ones for small values
of $q$ (typically $q=2$) and when $n$ and $k$ are not too small,
when $q$ increases, the combinatorial aspect makes them less efficient.

The first non-trivial attack on the problem was proposed by Chabaud
and Stern \cite{CS96} in 1996.
It was improved in 2002 by Ourivski and Johannson \cite{OJ02} who proposed
a new attack in $\OO{(n-k)^3m^3q^{(w-1)(k+1)}}$. However, these two attacks did not take account of the value of $m$ in the
exponent. Very recently the two previous attacks were generalized in \cite{GRS16} (and used
to break some repairs of the GPT cryposystems) moreover an algebraic new setting
was also proposed, which gives an attack in $\OO{w^3k^3q^{w\Ceiling{\frac{(w+1)(k+1)-n-1}{w}}}}$. Finally, a last improvement of the combinatorial attack of \cite{GRS16} was proposed this year in \cite{AGHT17}, for a complexity of $\OO{(n-k)^3m^3q^{w\Ceiling{\frac{(k+1)m}{n}}-m}}$.

- {\bf algebraic attacks}: the particular nature of the rank metric makes it a natural field
for algebraic attacks and solving by Groebner basis, since these attacks 
are largely independent of the value of $q$
and in some cases may also be largely independent of $m$.
These attacks are usually the most efficient ones when $q$ increases.
There exist different types of algebraic equations settings to try to solve
a multivariate system with Groebner basis.
The algebraic context proposed by Levy and Perret \cite{LP06} in 2006 
considers a quadratic setting over $\Fq$ 
by taking as unknowns the support $E$ of the error and the error
coordinates regarding $E$. It is also possible to consider
the Kernel attack by \cite{FLP08} and the minor approach \cite{FSS10SV} 
which give multivariate equations of degree $r+1$ over $\Fq$ obtained from minors of matrices
Finally,
the recent annulator setting by Gaborit et {\it al.} in \cite{GRS16}
(which is valid on certain type of parameters but may not be independent on $m$)
give multivariate sparse equations of degree $q^{r+1}$ but on the large field
$\Fqm$ rather than on the base field $\Fq$. The latter attack is based on the
notion of $q$-polynomial \cite{O33} and is particularly efficient when $r$ is small.
Moreover, all these attacks can be declined in an hybrid approach where some
unknowns are guessed. In practice (and for the case of this paper)
for small $q$ these attacks are less efficient than combinatorial attacks.

\subsubsection{Impact of quantum algorithms on the complexity of the attacks}
In post-quantum cryptography, it is important to study the impact of quantum computer on the security of the schemes. In rank metric cryptography, the problems on which we are based are still difficult for a quantum computer, however there is a quantum speedup for combinatorial attacks: the exponent of the complexity is divided by 2 \cite{GHT16}, which lead to a complexity of $\OO{(n-k)^3m^3q^{\frac{w}{2}\Ceiling{\frac{(k+1)m}{n}}-m}}$.

Concerning the algebraic attacks, there is currently no quantum speedup as far as we know.

\subsection{Parameters}
\label{subsec:Parameters}
In this section, we give some sets of parameters for a security parameters of $128$, $192$ and $256$. In all cases, we have chosen $q = 2$. The choice of these parameters depends on the probability of decryption failure and the complexity of the attacks against our cryptosystem:
\begin{enumerate}
\item Message attacks: they consist in solving an instance of the $\IRSR$ problem of weight $r$ \ref{prob:I-RSR}.
\item Attacks on the secret key: they consist to recover the structure of the Ideal-LRPC code by computing a codeword of weight $d$ in the dual code \ref{prob:IndLRPC}.
\item Spurious key attack: as in the NTRU case (see \cite{HPS98})
this attack corresponds to finding small word
vectors in $\Hv$ with rank slightly greater than $d$, and to use
them for decoding. Although theoretically possible, this attack is not doable 
in practice since the fact that $\Hv$ contains small weight vectors implies
that many words of weight $2d$ exist. We do not go into details in this article
but as for MDPC codes \cite{MTSB12}, when the weight increases the complexity of the attacks
grows faster than the number of small weight vectors, so that this attacks - as for NTRU and MDPC- does not work in practice. 
\end{enumerate}
The different parameters are :
\begin{itemize}
\item $n$ is the length the vectors $\hv$ and $\cv$ sent on the public channel.
\item $m$ is the degree of the extension $\Fm$. 
\item $d$ is the weight of the ideal LRPC code used in the protocol.
\item $r$ is the weight of the error.
\item $P$ is the irreducible polynomial of degree $n$ of $\F[X]$ which defines the ideal LRPC code. We have chosen sparse polynomials in order to diminish the computation costs. 
\item the structural attack parameter is the complexity of the best attack to recover the structure of the ideal LRPC code. It consists in looking for a codeword of weight $d$ in an ideal LRPC of type $[2n,n]_{2^m}$ defined by the parity-check matrix $(\Iv_n | \Hv)$ where
\[
\Hv = \begin{pmatrix}
\hv(X) \mod P \\
X\hv(X) \mod P \\
\vdots \\
X^{n-1}\hv(X) \mod P
\end{pmatrix}.
\]

Because of the structure of the ideal LRPC, the complexity of looking for a codeword of weight $d$ is easier than in a random code (see \cite{ABDGHRTZ17a}). The complexity of the attack, for these parameters, is $\OO{(nm)^\omega 2^{d\Ceiling{\frac{m}{2}} - m - n}}$ where $\omega = \log_2 7 \approx 2.8$ is the cost of linear algebra.

\item the generic attack parameter is the complexity of the best attack to recover the support $E$. It consists to solve the I-RSD problem for a random ideal code of type $[2n,n]_{2^m}$ and an error of weight $r$. For our parameters, the complexity of this attack is $\OO{(nm)^\omega 2^{r\Ceiling{\frac{m(n+1)}{2n}} - m}}$ where $\omega = \log_2 7 \approx 2.8$ is the cost of linear algebra.
\item $p_f$ is the probability of failure of the decoding algorithm. We have chosen the parameters such that the theoretic upper bound is below $2^{-26}$ for the KEM. For the PKE, it is necessary to have a lower probability of failure since the public key can be used to encrypt a high number of messages. We give two sets of parameters: the first one for a probability below $2^{-64}$ and the second one for a probability below $2^{-80}$.
\item the entropy parameter is the entropy of the subspace $E$. It is equal to $\log_2 \left( \cg{r}{m} \right)$ and has to be greater than the security parameter. We represent $E$ by a matrix of size $r \times m$ in its row echelon form.
\item the public key size is the number of bits needed to represent the public key $\hv \in \Fqm^n$. It is equal to $mn$.
\end{itemize}

{\bf Comparison with other cryptosystems:} In terms of size of public key,
the system compares very well with other proposal to the NIST competition. 
For 128 bits security key exchange
(in which case one is allowed to a not too small decryption failure rate)
the BIKE proposal (MDPC cryptosystem) has a public key of 10 kilo bits, hence 
the LRPC cryptosystem has public keys roughly three times smaller.
In terms of encryption (with very small or null decryption failure rate), 
the system is comparable to the RQC cryptosystem,
and far smaller than the McEliece cryptosystem.
Globally among all candidates to the NIST competition, the LRPC cryptosystem
has the second smaller size of public keys for 128 bits security, better 
than candidates based on lattices, but a litle larger than 
systems based on isogenies. Moreover the system is also efficient in terms
of speed compared to the other NIST proposals.

\begin{center}
\begin{table}[h]\label{table:ParameterKEM}
\begin{tabular}{|c|c|c|c|}
\hline
Security			&	128				&	192					&	256 \\
\hline
$n$ 				&	47				&	53					&	67\\
\hline
$m$ 				&	71				&	89					&	113\\
\hline
$d$					&	6				&	7					& 	8\\
\hline
$r$					&	5				&	6					&	7\\
\hline
$P$ 				& $X^{47}+X^5+1$	& $X^{53}+X^6+X^2+X+1$	& $X^{67}+X^5+X^2+X+1$\\
\hline
Structural Attack 	&	130				&	207					&	312\\
\hline
Generic Attack 		&	146				&	221					&	329\\
\hline
$p_f$ 				&	$2^{-30}$		&	$2^{-32}$			&	$2^{-36}$\\
\hline
Entropy				&	311				&	499					&	743\\
\hline
Public key (bits)	&	3337			&	4717				&	7571\\
\hline
\end{tabular}
\caption{Examples of parameters of our KEM \ref{subsec:KEM}}
\end{table}

\begin{table}[h]\label{table:ParameterProbaFailure-64}
\begin{tabular}{|c|c|c|c|}
\hline
Security			&	128				&	192					&	256 \\
\hline
$n$ 				&	83				&	83					&	89\\
\hline
$m$ 				&	71				&	101					&	107\\
\hline
$d$					&	7				&	7					& 	8\\
\hline
$r$					&	5				&	5					&	6\\
\hline
$P$ 				& $X^{83}+X^7+X^4+X^2+1$	& $X^{83}+X^7+X^4+X^2+1$	& $X^{89}+X^{38}+1$\\
\hline
Structural Attack 	&	133				&	209					&	273\\
\hline
Generic Attack 		&	144				&	195					&	260\\
\hline
$p_f$ 				&	$2^{-64}$		&	$2^{-64}$			&	$2^{-64}$\\
\hline
Entropy				&	331				&	481					&	607\\
\hline
Public key Size (bits)	&	5893			&	8383			& 9523\\
\hline
\end{tabular}
\caption{Example of parameters of our PKE \ref{subsec:PKE}, $p_f \leqslant 2^{-64}$}
\end{table}

\begin{table}[t]\label{table:ParameterProbaFailure-80}
\begin{tabular}{|c|c|c|c|}
\hline
Security			&	128				&	192					&	256 \\
\hline
$n$ 				&	101				&	103					&	103\\
\hline
$m$ 				&	79				&	97					&	107\\
\hline
$d$					&	7				&	8					& 	8\\
\hline
$r$					&	5				&	6					&	6\\
\hline
$P$ 				& $X^{101}+X^7+X^6+X+1$	& $X^{103}+X^9+1$	& $X^{103}+X^9+1$\\
\hline
Structural Attack 	&	136				&	229					&	259\\
\hline
Generic Attack 		&	157				&	234					&	260\\
\hline
$p_f$ 				&	$2^{-80}$		&	$2^{-80}$			&	$2^{-80}$\\
\hline
Entropy				&	371				&	547					&	607\\
\hline
Public key Size (bits)	&	7979		&	9991				& 	11021\\
\hline
\end{tabular}
\caption{Example of parameters of our PKE \ref{subsec:PKE}, $p_f\leqslant 2^{-80}$}
\end{table}
\end{center}


We also provide concrete timings of our implementations. The benchmarks were performed on an Intel\textregistered Core\texttrademark i7-4700HQ CPU running @ up to 3.40GHz and the software was compiled using GCC (version 6.3.0) with the following command : \texttt{g++ -O2 -pedantic -Wall -Wextra -Wno-vla}.

\begin{table}[h] \label{table:perfmsKEM}
\begin{center}
\begin{tabular}{|c|c|c|c|}
\hline
Security & \KeyGen & \Encap & \Decap\\
\hline
128   &  0.65  & 0.13  & 0.53\\
\hline
192  &  0.73  & 0.13  & 0.88\\
\hline
256 &  0.77  & 0.15  & 1.24\\
\hline
\end{tabular}
\caption{Timings (in ms) of our KEM \ref{subsec:KEM}}
\end{center}
\end{table}

\begin{table}[h] \label{perfcyclesKEM}
\begin{center}
\begin{tabular}{|c|c|c|c|}
\hline
Security & \KeyGen & \Encap & \Decap\\
\hline
128   &  1.58  & 0.30  & 1.27\\
\hline
192  &  1.74  & 0.31  & 2.09\\
\hline
256 &  1.79  & 0.35  & 2.89\\
\hline
\end{tabular}
\caption{Timings (in millions of cycles) of our KEM \ref{subsec:KEM}}
\end{center}
\end{table}

\begin{table}[h] \label{perfmsPKE}
\begin{center}
\begin{tabular}{|c|c|c|c|c|}
\hline
Security & $p_f$ & \KeyGen & \Encap & \Decap\\
\hline
128 & 64   &  1.09  & 0.22  & 1.04\\
\hline
192 & 64  &  1.33  & 0.23  & 1.08\\
\hline
256 & 64 &  1.51  & 0.25  & 1.58\\
\hline
128 & 80 &  1.51  & 0.29  & 1.16\\
\hline
192 & 80 &  1.82  & 0.36  & 1.80\\
\hline
256 & 80 &  1.94  & 0.31  & 1.68\\
\hline
\end{tabular}
\caption{Timings (in ms) of our PKE \ref{subsec:PKE}}
\end{center}
\end{table}

\begin{table}[h] \label{perfcyclesPKE}
\begin{center}
\begin{tabular}{|c|c|c|c|c|}
\hline
Security & $p_f$ & \KeyGen & \Encap & \Decap\\
\hline
128 & 64   &  2.71  & 0.55  & 2.57\\
\hline
192 & 64  &  3.19  & 0.54  & 1.08\\
\hline
256 & 64 &  3.58  & 0.60  & 3.77\\
\hline
128 & 80 &  3.72  & 0.71  & 2.86\\
\hline
192 & 80 &  4.36  & 0.86  & 4.32\\
\hline
256 & 80 &  4.68  & 0.75  & 4.06\\
\hline
\end{tabular}
\caption{Timings (in millions of cycles) of our PKE \ref{subsec:PKE}}
\end{center}
\end{table}
\newpage

\newpage
\section{Conclusion}

In this paper, as in the recent MDPC paper \cite{MTSB12}, we generalize the NTRU \cite{HPS98}
approach in a coding context
but with the rank metric. To do so we have introduced a new family of codes, LRPC codes, for
which we propose an efficient decoding algorithm and we have carefully analyzed its failure probability. In order to reduce the public key size, we have chosen to use the special family of Ideal-LRPC codes.

Overall, as it is often the case for rank metric codes, the obtained results compare
well to Hamming distance cryptosystems since the difficulty of known attacks increases.
Moreover, while rank metric cryptosystems have a strong history of broken schemes because
of structural attacks based on recovering the Gabidulin code structure, the cryptosystems
we propose are the first rank-metric based cryptosystems that is not
based on based on Gabidulin codes but on codes with a weak structure
and a strong random component.
\newpage
\bibliographystyle{plain}
\bibliography{codecrypto}

\appendix
\section{Proof of proposition \ref{prop:impact_m}} \label{app:impact_m}

\begin{proof}
We have:
\begin{IEEEeqnarray}{RCCCl}
\dim (F_1 \cap F_2) & \geqslant & \dim (E \cap F_1 \cap F_2) & = & \dim (E \cap F_1) + \dim (E \cap F_2) \\&&&&\hspace{2.4cm}- \dim \big( (E \cap F_1) + (E \cap F_2) \big) \nonumber \\
&&&=&2r-c_1-c_2-\dim (E \cap F_1 + E \cap F_2) \nonumber \\
&&&\geqslant &  2r-c_1-c_2-\dim E = r-c_1-c_2. \nonumber
\end{IEEEeqnarray}

For the second point, we need to define the projection over the quotient vector space $\Fqm/E$.
\begin{IEEEeqnarray}{LCCCC}
\text{Let }\varphi & : & \Fqm & \rightarrow & \Fqm/E \nonumber \\
&& \xv & \mapsto & \xv + E \nonumber
\end{IEEEeqnarray}
$\varphi$ is a linear map of $\Fq$ vector spaces. For all $U
\subset \Fqm, U \subset E \iff \varphi(U) = \{0\}$. Moreover we
have the equality $\dim \varphi(U) = \dim U -\dim (E\cap U)$. Therefore,
\[ \left\lbrace \begin{array}{l}
\dim \varphi(F_i) = d_i -r +c_i \nonumber \\
F_1 \cap F_2 \subset E \iff \varphi(F_1 \cap F_2) = \{0\} \nonumber
\end{array}\right. .
\]
Since $\varphi(F_1 \cap F_2) \subset \varphi(F_1)\cap \varphi(F_2)$, $\varphi(F_1)\cap \varphi(F_2) = \{0\} \implies \varphi(F_1 \cap F_2) = \{0\}$. So
\[
\Prob(F_1\cap F_2 \subset E) \geqslant \Prob(\dim (\varphi(F_1)\cap \varphi(F_2)) = 0).
\]
The probability that two subspaces of dimension $a$ and $b$ of a
vector space of dimension $n$ have a null intersection is given by the
formula
$$\frac{\cg{a}{n-b}q^{ab}}{\cg{a}{n}}$$
(see \cite{H14} Proposition 3.2). By taking $n=m-r, a = d_2-r+c_2$ and $b=d_1-r+c_1$, we obtain
\[
\prob{(F_1\cap F_2) \subset E} \geqslant \frac{\cg{d_2-r+c_2}{m-d_1-c_1}}{\cg{d_2-r+c_2}{m-r}}q^{(d_1-r+c_1)(d_2-r+c_2)}.
\]
To compute an approximation, let us use the formula $\cg{a}{n} = \prod_{i=0}^{a-1} \frac{q^n-q^i}{q^a-q^i}$. \label{DvptLimiteProba}
\begin{IEEEeqnarray}{rCCCl}
P_{a,b}(n) &=& \frac{\cg{a}{n-b}q^{ab}}{\cg{a}{n}} & = &  \frac{\displaystyle{\prod_{j=0}^{a-1} \frac{q^{n-b}-q^j}{q^a-q^j}}}{\displaystyle{\prod_{i=0}^{a-1} \frac{q^n-q^i}{q^a-q^i}}}\ q^{ab}\nonumber \\ 
& = & \left(\prod_{i=0}^{a-1} \frac{q^{n-b}-q^i}{q^n-q^i}\right)q^{ab} & = & \prod_{i=0}^{a-1} \frac{q^n-q^{i+b}}{q^n-q^i}  \nonumber \\
 & = & \prod_{i=0}^{a-1} \frac{1-q^{i+b-n}}{1-q^{i-n}}. \nonumber
\end{IEEEeqnarray}
Therefore,
\[
 \ln P_{a,b}(n) = \sum_{i=0}^{a-1} \ln (1-q^{i+b-n}) - \ln (1-q^{i-n}).
\]
$\ln(1-x) = -x+\OO{x^2}$ when $x\rightarrow 0$ hence
\begin{IEEEeqnarray}{rCl}
\ln P_{a,b}(n) & = & \sum_{i=0}^{a-1} \left(q^{i-n} - q^{i+b-n}\right) + \OO{q^{-2n}} \nonumber \\
 &= & q^{-n}(1-q^b) \sum_{i=0}^{a-1} q^i +  \OO{q^{-2n}} \nonumber \\
 & =& -q^{-n}(1-q^b)\frac{q^a-1}{q-1} + \OO{q^{-2n}} \nonumber
\end{IEEEeqnarray}
when $n\rightarrow +\infty$.\\
$e^x = 1+x+\OO{x^2}$ when $x \rightarrow 0 \implies P_{a,b}(n) = 1-\frac{q^{-n}(q^a-1)(q^b-1)}{q-1}+\OO{q^{-2n}}$
By replacing the variables, we obtain
\[
\Prob(F_1\cap F_2 \subset E) \geqslant 1-\frac{q^{-m+r}(q^{d_1-r+c_1}-1)(q^{d_2-r+c_2}-1)}{q-1} + \OO{q^{-2m}}.
\]

\end{proof}

\end{document}